\def\qed{\hbox{\rlap{$\sqcap$}$\sqcup$}}
\newenvironment{proof}{\par\noindent{\bf Proof:}}{\mbox{}\hfill$\qed$\\}
\newtheorem{theorem}{Theorem}[section]
\newtheorem{lemma}[theorem]{Lemma}
\newtheorem{corollary}[theorem]{Corollary}
\newtheorem{defi}{Definition}
\newenvironment{definition}{\begin{defi}\rm}{\end{defi}}
\newcounter{rem}
\newcommand{\ignore}[1]{ }
\begin{document}
\title{Price of Anarchy in Networks with Heterogeneous Latency Functions}
\author{Sanjiv Kapoor\thanks{
e-mail: kapoor@iit.edu} \  and Junghwan Shin\thanks{
 e-mail: byulpyo@gmail.com}}
\date{}
\maketitle

\begin{abstract}

We address the performance of selfish network routing  in multi-commodity  flows  where the latency or delay function on edges 
is dependent on the flow of individual commodities, rather than on the aggregate flow. An application of this study is the  analysis of a network with differentiated traffic, i.e., in transportation networks where there are multiple types of traffic and in networks where traffic is prioritized according to type classification. We consider the
inefficiency of equilibrium in this model and
provide {\em price of anarchy} bounds for networks with $k$ (types of) commodities where each link is
associated with  heterogeneous polynomial delays, i.e., commodity $i$ on edge $e$ faces delay
specified by
$h^e_i(f_1(e), f_2(e), \ldots, f_k(e)) $
where 
$f_i(e)$ is the flow of the $i$th commodity through edge $e$ and $h^e_i()$ a polynomial delay function applicable to the $i$th commodity.
We consider both atomic
and non-atomic flows and show bounds on the price of anarchy that depend on the relative impact of each type of traffic on the edge delay where the  delay functions  are  polynomials of degree $\theta$, e.g., $\sum_i a_i f_i(e)^\theta$.
The price of anarchy is unbounded for arbitrary polynomials.

For networks with {\em decomposable delay functions} where the delay is the same for all
commodities using the edge, i.e., delays on edge $e$ are defined by $h^e(f_1(e), f_2(e), \ldots, f_k(e))$,
we show improved bounds on the price of anarchy, for both non-atomic and atomic flows.

The results illustrate that the inefficiency of selfish routing worsens in the case of heterogeneous delays as compared to the standard delay
functions that do not consider type differentiation.
\end{abstract}

\section{Introduction}
A typical road network serves multiple types of traffic and each type has a different impact on the delay experienced on a link.  
The same is also true for computer network routing where traffic of different types may be  generated via priority mechanisms. The common theme in these networks is the lack of a centralized control, and thus each source-destination pair  chooses routes based on optimizing an objective. This problem can be modeled using game theory where each agent or player selfishly chooses its route, with Nash equilibrium being achieved when all the players have an optimum route for their own traffic, given the routes of the other players as fixed. 
There are two categories of games considered, the non-atomic routing game, where each player controls a negligible amount of traffic and the atomic routing game, where each user can  utilize only one routing path to satisfy its requirement.

These routing games can be formalized as weighted congestion games on a network, where  traffic requirements of  multiple types of network users are to be routed between corresponding source and destination pairs. Each edge has  a delay (latency) function that is dependent on the traffic of each user type present on that edge.  To satisfy their traffic  requirements, users determine their own routes using a selfish routing strategy (one that minimizes the delay faced on the paths chosen by the individual user).
A centralized  optimal routing solution would optimize user routes with respect to a measure of social benefit, the typical objective being the sum total of delays faced by the users; however the selfish strategic routing game introduces inefficiencies since Nash equilibrium solutions, being optimal for each individual player given the strategies of other players, do not necessarily optimize the social benefit. 
Our focus of study is 
the inefficiency of  network routing games that consider multiple types of traffic.


The degradation of network performance that results from competitive selfish routing, as compared to a global optimum solution, and measured with respect to the total latency or delay of the flow
has been quantified in \cite{KP1999,RT2002} where they considered the {\em price of anarchy} (PoA).  In the flow routing context,  this is
defined to be  the ratio of the total latency of the Nash equilibrium routing to the total latency of the optimum routing.
The bounds that have been obtained in these previous works apply to  latency or delay on edges that are  a function of the aggregate flow on the edge.
In this paper we consider an important generalization that models the above mentioned scenario of heterogeneous traffic on $k$-commodity networks: in this model, edges are associated
with  delay functions that are dependent on the type of the commodity, which we  term as {\em heterogeneous} delay functions. An example is a polynomial delay of the form 
$\sum_i a_i(e) f^\theta_i(e) + c(e)$
with positive co-efficients and non-negative constant, where $f_i(e)$
represents the traffic contribution of the $i$th commodity on edge $e$, and $\theta \in \mathbb{N}$ the degree
of the polynomial.

In general, we model the total delay in heterogeneous networks using the following total (heterogeneous) delay function:
$\sum_{i,e}f_i(e)\Phi^i_e(f_1,\ldots,f_k) = \sum_{i,e}f_i(e)h^e_i(f_1(e) , f_2(e), \ldots, f_k(e))  $
where $f_i(e)$ is the flow function and $h^e_i : \mathbb{R}^k \rightarrow \mathbb{R}^+$  a convex, monotonically increasing, polynomial  delay function for
commodity $i$ on edge $e$, respectively.
Note that the distinctive property of this delay function is that the domain is multi-dimensional.
If the delay function, $\Phi^i_e(f(e))$, is expressible as
$h^e(f_1(e) , f_2(e), \ldots, f_k(e)) $, 
i.e., the delay function on the edge
is independent of the commodity,
then we call this delay function {\em decomposable}.
This is  an interesting  class of functions, since it addresses the case when the delay on an edge is the same as experienced by each of the multiple types of users. We also consider  {\em uniform} delay functions where the delay on every edge of the network is the same function of the flow; however, the constant term may differ.

The generalized model we consider, has applications in  transport networks utilized by heterogeneous modes of transport and studied by Dafermos in \cite{D1971}.
Both decomposable and uniform delay models are also applicable in the study of  road networks. The uniform model is applicable when roads are of the same or similar type, the delays having the same functional dependency on the flow vector.
As another application, in the context of internet traffic we propose the  use of this model when traffic is  prioritized  (applicable to the
{\em net-neutrality} debate \cite{O1999,P2007}).  
Users with higher priority are allowed to transmit more bits over a fixed period (unit) of time  or equivalently are allowed to transmit larger packets of information thus occupying the link for  a longer fraction of time. Consequently, the contribution of the delay induced by each type of user traffic is different.
We provide additional discussion of these motivating applications, including priority queue models, later in the section. A detailed analysis of priority models forms an interesting topic of further research.

We study  Nash equilibrium and the price of anarchy in the defined heterogeneous delay model: our results show that the 
PoA worsens as compared with networks models where the delay is only a function of $f(e)$, the aggregate flow on the edge. This
indicates that type differentiation amongst traffic can lead to
a worse price of anarchy, a result that contrasts with the price of anarchy in the case of uniform and equal treatment of traffic types.

We consider both atomic and non-atomic flows, discussed above.
In both categories, we provide upper and lower bounds on the {\em price of anarchy} when there are $k$ types of
commodities in networks with heterogeneous and decomposable (including uniform) class of polynomial delay functions.
In particular, we show asymptotically tight bounds on the price of anarchy when the edges are  associated with affine
decomposable delay functions.

\paragraph{Previous Work:}
The study of equilibrium in
flow routing problems was
initiated by Pigou \cite{P1920},
and furthered by Wardrop \cite{W1952}.
Dafermos and Sparrow  \cite{DS1969} furthered this study and considered the relationship between flows that
are at Nash Equilibrium and flows that optimize a social
welfare function.
For  a  multi-commodity flow network
with convex delay functions on the edges, they establish a condition under which the flow satisfies
both Nash Equilibrium conditions and
also optimizes an aggregate objective function. In this model the delay is a function only of the total flow
through the edge. The assumption is that units traveling along a link uniformly share the
cost \cite{DS1969}. The interesting aspect of this work
is that for  homogeneous
polynomial costs that are a function of the aggregate variable $f(e)$, Nash equilibrium and social
optimum solutions coincide.
This is not so in a general setting.
Investigation of the inefficiency of Nash Equilibrium \cite{D1986} in network
flows led to an analysis
 of price of anarchy for load dependent delays \cite{KP1999}.
Roughgarden
and Tardos initiated the study of bounds on the price of anarchy \cite{RT2002}
in general networks followed by subsequent results of Roughgarden \cite{ROUGHGARDEN2003341,T2004,T2005} that use the Pigou bound. 
Additional results may be found by Correa et al. \cite{CSM2007,CSS2008} in their studies of
the price of anarchy with respect to the total latency as well as the average and the maximum latency. An aspect that distinguishes the research in \cite{CSS2008} is their use of geometric analysis in computing the price of anarchy.
The results in \cite{RT2002,ROUGHGARDEN2003341} show that for  polynomial delays, the price of anarchy is bounded by the degree of the polynomial and does not depend on the size of the network. This is indeed
a surprising result. As an extension, our results provide bounds for
the more general class of heterogeneous delay functions that depend on the characteristics of the delay function.
For the case when the {\em min-max delay} of paths is  used as the social welfare measure, results of  Weitz\cite{W2001} and in \cite{LRTW2011} show that for arbitrary increasing delay functions, the price of anarchy does
depend on the size of the network.

For atomic flows,
the price of anarchy  \cite{AAE2005}  has been  investigated for the typical unsplittable model with aggregate delay functions, i.e., delays  are independent of types of commodities. The results
show that the bound is dependent on the degree of the polynomial used to model the delay ($O(d^d)$ where $d$ is the degree of the polynomial).
Further, improved and exact bounds were proved on the worst-case PoA for unweighted and weighted in atomic unsplittable congestion games in  \cite{DT2006}. Techniques that depend on Pigou bounds 
do not carry over and the authors utilize the similarity of Nash equilibrium solutions with solutions obtained by a greedy algorithm for an on-line version of the problem. Note that Nash equilibrium solutions satisfy a variational inequality arising from local optimality conditions.
We utilize this approach to
provide  bounds on the price of anarchy for networks with generalized heterogeneous delay functions; the analysis is based  primarily  
on variational inequalities that arise in the atomic version of the problem.

The notion of commodity dependent delays was first studied
in Dafermos \cite{D1971,D1972} in which  the author considers a
transportation network where travel time on a link depends 
on the delays of the types of traffic.
One example of such a delay  function, for edge $e$ and commodity $i$, is provided by
$ \Phi^i_e(f_1,\ldots,f_k) = \sum_{j}g_{ij}(e)f_i(e)f_j(e) + h_i(e)f_i(e)$,
where $g_{ij}$ and $h_i$ are constants and $i,j$ represent commodities.
Dafermos \cite{D1971,D1972} establishes a condition ($[g_{ij}]$ must be a positive definite matrix) under which the flow (i) satisfies Nash Equilibrium conditions and (ii) optimizes an aggregate objective function.

Further related work is the research on player-specific congestion games \cite{M1996,ARV2006}, a model which allows for player specific
delays but which still uses the aggregate flow to compute delays.

To our knowledge,  the price of anarchy
for  heterogeneous latency functions has not been studied before.

\paragraph{Motivation}
We outline motivations from two different areas, one being transportation engineering and the other
computer networks.

{ \em Transportation Networks}.
In transportation science, traffic equilibrium has been considered extensively to analyze transportation systems,
starting with the work of Wardrop.
Note that this prior work has limitations in modeling multiple vehicle types or multi-modal networks.
In order to provide more accurate modeling, transportation science models were extended  to the multi-class
model where the demand is partitioned into multiple classes and the cost of a route is dependent on the traffic
of all user classes along the route\cite{D1972}.

Multi-class models have been defined to extend the notion of Wardrop equilibrium, by Dafermos and Potts and Oliver\cite{PO1972,D1972}, where in
the latter paper they used the generalized BPR (Bureau of Public Roads) model that depend on multiple class parameters.
Multi-class models may also be found in \cite{N2000}.

{\em Computer Networks}.
Another 
motivation of this research comes from a study of the impact of providing
differentiated service to different types of traffic. Suppose we have $k$ types of traffic and traffic types
are given priorities.
The delay function for the class with the $k$th priority in a $M/G/1$
queuing system is
$E(T_k)=E(R)/(1- \sum_{1\leq j \leq k-1} \rho_j)(1-\sum_{1 \leq j \leq k} \rho_j)$,
where $E(R)$ is the mean residual service time and $\rho_k$ is the load of the class-$k$ traffic.
The average delay  is obtained from the expected delay of all commodities
 \cite{A1954}.
As a first attempt to study selfish behavior in queues with differentiated service rates, these delays may be modeled by a polynomial, possibly a monomial using the dominant term obtained from a Taylor series expansion.
The  study of polynomial delay functions that is reported
in this paper is applicable in this context.

Furthermore, in this paper we utilize the traditional routing
model where users are able to plan routing paths based on the delays in the network.
These results thus apply to transportation network and computer network routings that rely on the link state of the network.
A possible extension would be to find the price of anarchy when other  internet routing models are used  
\cite{PV2010}.

Before we outline our results we present our model and some formal definitions:

\subsection{Model and Definitions}
\label{ModelandDef}
We consider a directed network $G = (V,E)$ with a vertex set $V$, an edge set $E$, and a set $K$ of $k \in \mathbb{N}$ source-destination pairs $\{s_1, t_1\}, \ldots, \{s_k, t_k\}$, each corresponding to a different traffic type.
We do not allow self-loops for any vertex but  accept parallel edges between any pair of vertices.

For commodity $i$, denote the set of (simple) $s_i$-$t_i$ paths by $\mathcal{P}^i$, and define a set of paths $\mathcal{P} = \bigcup\mathcal{P}^i$.
For any path $P\in \mathcal{P}^i$,  define a flow,
$f^i_P \in \mathbb{R}^+$ to be the flow of commodity $i$ that is assigned to path $P$.
The set of all flows is represented by the flow  vector
$f = (f^i_P)_{i \in K, P \in \mathcal{P}^i}$.
For a fixed flow $f$, commodity $i$ and an edge $e \in E$  denote the flow of
commodity $i$ through edge $e$
by $f_i(e) = \sum_{P \in \mathcal{P}^i | e \in P}f^i_P$ and the flow  vector on the
edge by $f(e) = (f_i(e))_{i \in K}$. The total aggregate flow through edge $e$
is denoted by $||f(e)||_1 = \sum_if_i(e)$.
The demand requirement of commodities $1$ through $k$ is denoted by $R = (r_1, \ldots, r_k)$ where $r_i \in \mathbb{N}$ represents the amount of flow required to be routed from source $s_i$ to destination $t_i$.
A {\em feasible} flow $f$ is one where the demand requirement for each commodity
is satisfied, i.e., $\sum_{P \in \mathcal{P}^i}f^i_P = r_i, \forall i$.

We  let $\Phi_e : \mathbb{R}^{k} \rightarrow \mathbb{R}^{+}$ be a
heterogeneous delay function, as defined below, on edge $e$, and assume that $\Phi_e(f(e)), f(e)\in \mathbb{R}^k$ is  monotonically increasing w.r.t  components of $f(e)$ 
and convex.
The delay experienced by commodity $i$ on a path $P \in \mathcal{P}^i$ is defined to be
\[\Phi_{P}^i(f) = \sum_{e \in P} \Phi^i_e(f_1(e),\ldots,f_k(e))\]
where $\Phi^i_e(f_1(e),\ldots,f_k(e))$ is the delay faced by commodity $i$ on edge $e$, given the flow vector $f$ and will
be represented by $\Phi^i_e (f(e))$.
Furthermore, $f_i(e)$ will be referred to as $f_i$ when $e$ is evident from the context.
%

A heterogeneous delay function is a generalization of the standard  aggregate delay function. 
We consider polynomial {\em heterogeneous} delay
functions where the delay function on edge $e$ for commodity $i$ is
represented by
$\Phi^i_e(f(e))= \sum_{1 \leq \ell \leq N^i_e } g_{i\ell}(e)h_{i\ell}(f,e) + c_i(e)$
where $\forall (i, \ell, e), \ g_{i\ell}(e)$ and $c_i(e)$ are nonnegative real valued constants.
Here, $h_{i\ell}(f,e)$ 
is any arbitrary monomial (with co-efficient one) in terms of the flow vector $f$ and dependent on edge $e$; and there are $N^i_e$ terms
in the polynomial $\Phi^i_e(f)$.
Unfortunately, these functions will be shown to have unbounded price of anarchy (see \Cref{sec:unbounded-poa}, in particular \Cref{lem:unboundedPoA}).

As in \cite{AAE2005}, we study polynomials  that 
are  of the form 
$\sum_i a_i x_i^{\theta}$, where $\theta \in \mathbb{N}$ and $\forall i : a_i \in \mathbb{R}^{++}$. 
We actually consider a class of more general polynomials that extend this form. 
Any polynomial delay that does not have the defined structure of terms will be shown to have a price of anarchy that is dependent on requirements or is unbounded.
We formalize this set of polynomials below.
 
We define a delay to be a $\theta$-complete polynomial delay (or simply  $\theta$-polynomial delay), $\theta \in \mathbb{N}$, if the
delay function is of the form:
$$\Phi^i_e(f) = \sum_{ 1 \leq j\leq k} a_{ij}(e) f_j^\theta+ 
\sum_{ 1 \leq \ell \leq L^i_e}g_{i\ell}(e)f^{\theta_1^\ell}_{1}f^{\theta_2^\ell}_{2}\cdots f^{\theta_k^\ell}_{k} + c_i(e)$$ 
where $L^i_e \in \mathbb{N}$ and 
$ \forall \ell , \theta_j^\ell \in \mathbb{N},  \ \theta_j^\ell < \theta$ with $ \sum_{j=1}^{k} \theta_j^\ell  = \theta $. 
Also, $\forall (i,j,e)$: $ a_{ij}(e) \in \mathbb{ R}^{++}$ and   
$ \forall (i,\ell,e)$: $ \ g_{i\ell}(e), c(e) \in \mathbb{R}^+$. 
Let $N^i_e=L^i_e+k+1$  denote the number of terms in the polynomial $\Phi^i_e$.
Further, we use $N_{\Phi}= \max_{i,e} N^i_e$ to represent the maximum number of terms in the polynomial $\Phi^i_e(f)$ over all $i \in K$ and $e \in E$. Note that for linear functions, $N_{\Phi}$ is bounded by $k+1$, the number of traffic types.
We assume w.l.o.g that all positive coefficients and constants are scaled uniformly to be greater than one.
Moreover, with this assumption, the maximum value of the ratio of the co-efficients and constants,  i.e. $ \max \{ \max_{i,\ell,e} g_{i\ell}(e), \max_{i,j,e} \{ a_{ij}(e) \}, \max_{i,e} c_{i}(e) \}$, is denoted by $a_{max}$.




In the case of heterogeneous delay functions, as defined above, each commodity has a  different,
commodity related, delay function on the same edge $e$.
However, under the condition that $\forall (i,e)$: $g_{i\ell}(e) = g_{\ell}(e), \ h_{i\ell}(e) = h_{\ell}(e)  $ and $\forall (i,j,e)$:  $c_i(e) =c_j(e)=c(e), \ $
the polynomial heterogeneous delay function $\Phi^i_e(f)$ (with $L_e \in \mathbb{N}$ monomial terms) is expressible as
\[ \Phi_e(f_1, \ldots, f_k) = \sum_{1\leq \ell \leq L_e}g_{\ell}(e)h_{\ell}(f,e) + c(e) \]
Such a delay function is termed as {\em decomposable}.

Additionally, we define {\em uniform} functions as decomposable delay functions 
of the form: $\Phi_e(f)= \sum_{1 \leq \ell \leq L_{\Phi}} g_{\ell}h_{\ell}(f)+ c(e)$,
where the delay function (with $L_{\Phi}$ monomial terms) is the same function  of the flow, across all edges.
Note that the constant term is still edge dependent.
Such a function may be used to model a network where the queues on each edge have the same parameters arising from using  the same  underlying technology.

Both, non-atomic and atomic flows are considered in this paper. In non-atomic flows, the flow of the $i$th commodity on a path $P \in {\cal P}_i$ can be arbitrarily small,
representing the fact that the player corresponding  to any one commodity controls a negligible amount of traffic, resulting in multiple paths being used to satisfy requirements. In atomic flows, each commodity $i$ can use only one path to push the required units of flow $ r_i \in \mathbb{N}$.

We represent a network flow routing game by the triple  ${\cal G} =(G,R,\Phi)$. For a game ${\cal G}$, 
we define the social cost to be the sum total of all delays on edges over  flows
of all commodities, i.e.,
$C(f) = \sum_{e} \sum_i f_i(e)\Phi^i_e(f(e))$.
Let $f_{NE}$ and $\hat{f}$ be a Nash Equilibrium and a social optimum flow, respectively and
let the social optimum cost $C_{SO}(\hat{f}) = \min_{f}C(f)$ where $\hat{f} = \arg\min_{f}C(f)$.
The price of anarchy is defined as:
\[ PoA({\cal G}) =  \sup_{f_{NE} \in {\cal N}({\cal G})}\frac{C(f_{NE})}{C_{SO}(\hat{f})}\]
where ${\cal N}({\cal G})$ is the set of
all Nash Equilibrium solutions of the game ${\cal G}$.

\subsection{Results}
Our study of equilibrium flows in networks with heterogeneous delay functions provides multiple results on the existence of equilibrium and the price of anarchy. We describe below, the results that we obtained for both the  the non-atomic and atomic models.

\paragraph{Results for Non-atomic Network Flows:}
To analyze this class of problems, we introduce a multi-dimensional version of the Pigou bound \cite{RT2002} that includes  a 
parameter which we term as the {\em Gamma} bound.
Combining the two, provides us bounds on the price of anarchy for non-atomic flows.

We  show that the price of anarchy for the general class of convex heterogeneous functions
can become unbounded, even growing with the flow requirement. This contrasts with results for delay functions that only depend on  the aggregate flow, a model for which Roughgarden and Tardos \cite{RT2002} bound the price of anarchy in terms of a parameter $\alpha$ that measures the growth rate of the  delay function.

However for the class of heterogeneous $\theta$-complete polynomial delay functions, which include the standard affine functions,  we provide bounds on the price of anarchy that are independent of the network size and the flow requirements. 
 Our results show that the price of anarchy is bounded by a function
of the polynomial size, degree and coefficients. We also discuss lower bound examples.
These results extend previous work on the price of anarchy, more specifically the results
that \cite{RT2002} presents. 
The existence of Nash Equilibrium for convex heterogeneous delays follows from  the results in \cite{RT2002} which
show  the  existence of a Nash Equilibrium flow in $k$-commodity non-atomic networks
with convex  delay functions.

In summary (also refer to \Cref{table:k-nonatomic}):

\begin{itemize}

\item
We generalize the Pigou bound \cite{ROUGHGARDEN2003341,RT2002} and provide a multi-dimensional version,  which is termed the {\em Gamma-Pigou } bound (section~\ref{sec:Gamma-Pigou bound}). This bound utilizes an independent term, defined as the {\em Gamma} bound. We show that the price of anarchy of selfish non-atomic routing in networks with heterogeneous delay functions can be obtained from combining these bounds.

\item
We provide examples (section~\ref{sec:unbounded-poa}) to illustrate that the price of anarchy  for heterogeneous functions
can become unbounded in the general case.

\item We provide a bound on the price of anarchy in networks with $k$ (types of) commodities where each edge
is associated with heterogeneous $\theta$-complete polynomial delay functions.
This bound is $(a_{max}kN_{\Phi})^{\theta+1}$ in the most general form of the network where
$N_{\Phi}$ represents the maximum number of terms in the delay function over all edges (section~\ref{subsec-poa-hetero-poly}). We note that this bound is independent of the network size but grows with the number of
commodity types and exponentially with respect to $\theta$, the degree of the polynomial. A similar behavior is exhibited in
the case of decomposable $\theta$-complete polynomial delay functions (section~\ref{sec-nonatmoic-poa}), where the price of anarchy bound is
$  (\theta+1)a_{max}k^{\theta-1}$, which becomes $2a_{max}$ for affine functions.

When we consider networks with heterogeneous affine (i.e., $\theta = 1$) delay functions, the bound is 
$  k^2N^2_{\Phi}a^2_{max}+2$. 
Note that $N_{\Phi}$ is bounded by $k+1$. 
These bounds should be contrasted with the bound of $4/3$ achieved when the delay on the edge is linearly
dependent on the aggregate flow of the commodities on the edge.

The bounds illustrate an exponential growth with the degree of the polynomials and linear or polynomial growth with the value of $a_{max}$.
\item
We provide an example(lemma~\ref{lem:LowerBoundPOA}) that illustrates the tightness of our bounds. The  example network has a POA of  $a_{max}(k-1)^{\theta-1}$ for large enough value of $a_{max}$. This network provides an asymptotically tight bound for affine functions.
\end{itemize}
\begin{table}[ht]
\begin{center}
\scalebox{1.0}{
\begin{tabular}[width=\textwidth]{|c|cc|}
\hline
 & \multicolumn{2}{|c|}{$k$ commodities} \\ [-1ex]
\raisebox{2.0ex}{PoA} & affine & $\theta$-complete polynomial \\ [1ex]
\hline
\hline
Aggregate~ & $4/3$~\cite{RT2002}& $\Theta\left(\frac{\theta}{\log \theta}\right)~$\cite{ROUGHGARDEN2003341}  \\
\hline
 & \multicolumn{1}{|c}{$2a_{max}$} & \multicolumn{1}{c|}{$(\theta+1)a_{max}k^{\theta-1}$}
 \\ [-1ex]
\raisebox{2.0ex}{Decomposable} & [\Cref{thm-2-nonatomic-decomposable-poly}*] & [\Cref{thm-2-nonatomic-decomposable-poly}] \\ [1ex]
\hline 
 & \multicolumn{1}{|c}
{$k^2N^2_{\Phi}a^2_{max}+2$} & \multicolumn{1}{c|}{$(a_{max}kN_{\Phi})^{\theta+1}$}  \\ [-1ex]
\raisebox{2.0ex}{Heterogeneous} & [\Cref{thm-k-nonatomic-heterogeneous-affine}] &  [\Cref{thm-k-nonatomic-heterogeneous-poly}]  \\ [1ex]
\hline
\end{tabular}}
\center{Lower Bound [Lemma~\ref{lem:LowerBoundPOA}]: $\Omega(a_{max}(k-1)^{\theta-1})$}
\caption{PoA for Nonatomic Flows (* indicates an asymptotically tight bound)}

\label{table:k-nonatomic}
\end{center}
\end{table}

\paragraph{Results for Atomic Network Flows:}
Our second set of results consider  atomic flows. In the most general setting Nash equilibrium does not exist and thus wherever necessary we will assume that the instance provided has a Nash equilibrium.
Our results for the case of atomic flows are summarized below (also refer to \Cref{table:k}).
\begin{itemize}
\item
We show the existence of Nash equilibrium for uniform heterogeneous affine delay functions. Our proof uses a potential function argument.
Further, we show that for non-uniform heterogeneous  functions, where edges have
different delay functions, Nash Equilibrium  need not  exist (section~\ref{subsec:ExistenceAtomic}). The second result  is indeed surprising, especially
since it only requires affine functions and unit demands (unweighted) to generate the examples.
This contrasts with the fact that any unweighted congestion game with homogeneous
delays, where the delay is a function of aggregate flows, has a pure Nash equilibrium \cite{R1973, MS1996}

\item We consider the price of anarchy for atomic flow routing in networks with $k$ (types of) commodities where each edge
is associated with decomposable $\theta$-complete polynomial delay functions (section~\ref{subsec-poa-general-atomic-poly}).

When we consider networks with decomposable affine delay
functions this bound is $a_{max}+2$ and 
thus is dependent on the relative impact of the commodities.
This contrasts with the constant PoA bound of $\frac{3+\sqrt{5}}{2}$ ~\cite{AAE2005} when the delay function depends only on the aggregate flow.

In the case of decomposable $\theta$-complete polynomial delay functions, the bound is $O(a_{max}^{\theta+2}N_{\Phi}^{\theta+1})$. The above bounds utilize arguments based on a variational inequality that arises in these problems and satisfied by any Nash equilibrium solution. 

Again, even in
this case, the PoA does not depend on the network size but only on the characteristic of the delay function. However, the relative co-efficients in the delay function, $a_{max}$, is a substantial factor.
Contrast this with the bound when the delays are a function of the aggregate flows: the price of anarchy for aggregate delay functions is $\Theta((\frac{\theta}{\log \theta})^{\theta+1}) $~\cite{DT2006}, independent of the relative weights of the types of  commodities in the network or the number of commodities themselves.

\end{itemize}
\begin{table}[ht]
\begin{center}
\scalebox{1.0}{
\begin{tabular}[width=\textwidth]{|c|cc|}
\hline
\mbox{PoA} & affine & $\theta$-complete polynomial \\
\hline\hline
Aggregate  & $\frac{3+\sqrt{5}}{2}$~\cite{AAE2005} & $\Theta\left(\frac{\theta}{\log \theta}\right)^{\theta+1}$~\cite{DT2006} \\
\hline
 & \multicolumn{1}{|c}{$a_{max}+2$ }
& \multicolumn{1}{c|}{$O(a_{max}^{\theta+2}N_{\Phi}^{\theta+1})$} \\ [-1ex]
\raisebox{2.0ex}{Decomposable} & [\Cref{thm-k-atomic-decomposable-affine}*] & [\Cref{thm-k-atomic-decomposable-poly}] \\ [1ex]
\hline
\end{tabular}}
\center{Lower Bound [Lemma~\ref{lem:LBAtomic}]: $\Omega(a_{max})$}
\caption{PoA for $k$-commodity Atomic Flows (* indicates an asymptotically tight bound)}
\label{table:k}
\end{center}
\end{table}
Further, for atomic flows in networks with two different types of commodities
and uniform delay functions, i.e., delays defined by $a_1f_1(e) + a_2f_2(e) + c(e)$,
we show an improved bound on the  price of anarchy (\Cref{table:2}). This is achieved by
considering the cycles that arise when the Nash equilibrium and social optimum flows
are overlaid, canceling common flows.
By analyzing the flow in these cycles, we provide an asymptotically tight bound of $\sqrt{a_{max}} + 2$.
\begin{table}[ht]
\begin{center}
\scalebox{1.0}{
\begin{tabular}[width=\textwidth]{|c|cc|}
\hline
\mbox{PoA} & uniform & decomposable \\
\hline\hline
 & \multicolumn{1}{|c}{$\sqrt{a_{max}} + 2$} & \multicolumn{1}{c|}{$a_{max}+2$}
 \\ [-1ex]
\raisebox{2.0ex}{Affine} & [\Cref{thm-2-atomic-uniform-affine}*] & [\Cref{thm-k-atomic-decomposable-affine}] \\ [1ex]
\hline
\end{tabular}}
\center{Lower Bound [Lemma~\ref{lower-bound-affine-uniform-atomic}]: $\Omega(\sqrt{a_{max}})$}
\caption{PoA for $2$-commodity Atomic Flows}(* indicates an asymptotically tight bound) \label{table:2}
\end{center}
\end{table}

\subsection{Organization of the paper}
In \Cref{section:nonatomic}, we consider bounds for the price of anarchy for non-atomic networks with decomposable delay functions.
Further, in \Cref{secAtomic} we present our results for atomic flows  and strengthen our results on the  price of anarchy for two-commodity network flows with uniform delay functions in \Cref{sec-poa-2com-atomic}.

\section{Part I: The Non-Atomic Case}\label{section:nonatomic}
In this section we consider the case of non-atomic network flow routing games. We consider heterogeneous functions
with polynomial delays.

Our contributions 
on bounds for the price of anarchy in non-atomic selfish routing problems have been summarized  in  \Cref{table:k-nonatomic}.

For general heterogeneous functions, the price of anarchy is unbounded as will be illustrated in \Cref{sec:unbounded-poa}.
As the table shows, we obtain upper bounds on the price of anarchy for heterogeneous delays for the
interesting  cases where  the function is decomposable and when we consider $\theta$-complete polynomial delays.

To show  bounds on PoA we will use the following {\em variational inequality characterization} \cite{D1980},
the proof of which is simple and omitted.
\begin{lemma}
\label{lem:variational}
Let $f$ be a feasible flow for the non-atomic instance $(G,R,\Phi)$. The flow $f$ is a Nash equilibrium flow if and only if
\[ \sum_e\sum_if_i(e)\Phi^i_e(f_1,\ldots,f_k) \leq \sum_e\sum_i\hat{f}_i(e)\Phi^i_e(f_1,\ldots,f_k)\]
for every flow $\hat{f}$ feasible for $(G,R,\Phi)$.
\end{lemma}

\subsection{The Gamma-Pigou Bound for the Price of Anarchy}
\label{sec:Gamma-Pigou bound}
We first establish an approach to determining the upper bounds on the price of anarchy for multidimensional delay functions.
For simplicity we consider decomposable functions.
While this theory is unrestricted, in that the bounds apply for arbitrary heterogeneous functions also, analysis indicates that the price of anarchy is unbounded in these cases.
To simplify notations, we use $f$ instead of $f(e)$ when clear from the context.

In the analysis of  heterogeneous delay functions the standard Pigou bound arguments, designed for aggregate delay functions, do not apply.
We thus consider a multi-dimensional version of the Pigou bound. Since that bound, by itself, does not suffice,  we establish that the price of anarchy is bounded by  the generalized Pigou bound and the {\em Gamma} bound, which we describe
in detail below.
In the subsequent subsection we analyze these bounds in the context of the delay functions
defined by decomposable $\theta$-complete polynomial functions.

We first describe the multi-dimensional version of the Pigou bound \cite{RT2002} as applied to flow vectors that represent the $k$ types of commodities.
\begin{definition}
(Generalized Pigou bound) Let $\mathcal{C}$ be a nonempty set of cost functions, each function defined over $\mathbb{R}_+^k$.
The Pigou bound $\alpha(\mathcal{C})$ for $\mathcal{C}$ is
\[ \alpha(\mathcal{C}) = \sup_{\Phi \in \mathcal{C}}\sup_{x,r \geq 0}\frac{\gamma(\mathcal{C}) ||r||_{1}\Phi(r_1,\ldots,r_k)}{\beta(\mathcal{C})}, \]
\[ \beta(\mathcal{C}) =\gamma(\mathcal{C})||x||_{1}\Phi(x_1,\ldots,x_k)+(||r||_{1}-||x||_{1})\Phi(r_1,\ldots,r_k),\] where $r=(r_1,\ldots,r_k)$ and $x=(x_1,\ldots,x_k)$, $r,x \in \mathbb{R}_+^k$ with the understanding that $0/0 = 1$,
and
\[ \gamma(\mathcal{C}) = \sup_{\Phi \in \mathcal{C}}\sup_{r,x: ||x||_1 \geq ||r||_1} \frac{\Phi(r_1,\ldots,r_k)}{\Phi(x_1,\ldots,x_k)}
\]
\end{definition}
We split our analysis into two cases: 1)$||x||_{1} < ||r||_{1}$ and 2)$||x||_{1} \geq ||r||_{1}$.
In the case when $||x||_{1} < ||r||_{1}$, we consider the generalized Pigou bound as described
above.
However, we will evaluate the Pigou bound in a restricted setting by using the
following mathematical program (SC):
$$
\mbox{\bf SC: } \alpha(\mathcal{C}) = \sup_{\Phi \in \mathcal{C}}\sup_{x,r \geq 0}\frac{\gamma(\mathcal{C})||r||_{1}\Phi(r_1,,\ldots,r_k)}{\gamma(\mathcal{C})||x||_{1}\Phi(x_1,\ldots,x_k)+(||r||_{1}-||x||_{1})\Phi(r_1,\ldots,r_k)
} 
$$
\begin{eqnarray*}
\mbox{ subject to}\phantom{a} \Phi(x_1,\ldots,x_k) &\leq& \Phi(r_1,\ldots,r_k)\\
\phantom{a} ||x||_{1} & < & ||r||_{1}
\end{eqnarray*}
Note that when $\Phi(x_1,\ldots,x_k) > \Phi(r_1,\ldots,r_k)$, the value of  $||x||_{1}(\Phi(x_1,\ldots,x_k) - \Phi(r_1,\ldots,r_k))$ becomes positive and decreases the value of $\alpha(\mathcal{C})$.
Thus w.l.o.g we add the constraint  that $\Phi(x_1,\ldots,x_k) \leq \Phi(r_1,\ldots,r_k)$.

For the second case, we will consider the second ratio, termed the {\em Gamma Bound}, defined above and restated in a form that will be used in the proof below:
\begin{eqnarray*}
\phantom{a} \gamma(\mathcal{C})
\phantom{a} &=& \sup_{\Phi \in \mathcal{C}}\sup_{r,x: ||x||_1 \geq ||r||_1} \frac{\Phi(r_1,\ldots,r_k)||r||_{1}+\Phi(r_1,\ldots,r_k)(||x||_{1}-||r||_{1})}{\Phi(x_1,\ldots,x_k)||x||_{1}}
\end{eqnarray*}
Given flow vectors $f$ and $\hat f$, we will let $E_{||f(e)|| > ||\hat{f}(e)||}$ be the set of all edges with the property that $||f(e)||_{1} > ||\hat{f}(e)||_{1}$.

We are now able to express the price of anarchy in
terms of the above {\em Pigou} and {\em Gamma} bounds:
\begin{theorem}[Gamma-Pigou Bound]\label{thm:poa-commodity-affine}
Let $\mathcal{C}$ be a set of 
decomposable monotonically increasing delay functions. 
If $(G,R,\Phi)$ is an instance of
a non-atomic k-commodity
network flow routing game
with delay function $\Phi \in \mathcal{C}$, then the price of anarchy in $(G,R,\Phi)$
is at most $\max\{\alpha(\mathcal{C}),\gamma(\mathcal{C})\}$.
\end{theorem}
\begin{proof}
Let $\hat{f}$ and $f$ be the optimal and Nash equilibrium flow, respectively,
for a non-atomic instance $(G,R,\Phi)$ with delay functions in the set $\mathcal{C}$.
We let the total cost of flow $f$ be $C_{NE}(f)$.
To prove the theorem we note that:
\begin{eqnarray*}
\phantom{a} C_{SO}(\hat{f}) &=& \sum_{e \in E}\Phi_e(\hat{f}(e))||\hat{f}(e)||_{1}\\
\phantom{a} &\geq& \frac{1}{\alpha(\mathcal{C})}\sum_{e \in E_{||f|| > ||\hat{f}||}}\Phi_e(f(e))||f(e)||_{1} +  \\
\phantom{a} & & \frac{1}{\gamma(\mathcal{C})}\sum_{e \in E_{||f|| > ||\hat{f}||}}\Phi_e(f(e))(||\hat{f}(e)||_{1} - ||f(e)||_{1}) + \\
\phantom{a} & & \frac{1}{\gamma(\mathcal{C})}\sum_{e \in E_{||f|| \leq ||\hat{f}||}}\Phi_e(f(e))||f(e)||_{1} + \\
\phantom{a} & & \frac{1}{\gamma(\mathcal{C})}\sum_{e \in E_{||f|| \leq ||\hat{f}||}}\Phi_e(f(e))(||\hat{f}(e)||_{1}-||f(e)||_{1})\\
\phantom{a} &\geq& \frac{1}{\alpha(\mathcal{C})}\sum_{e \in E_{||f|| > ||\hat{f}||}}\Phi_e(f(e))||f(e)||_{1} + \\
\phantom{a} & & \frac{1}{\gamma(\mathcal{C})}\sum_{e \in E_{||f|| \leq ||\hat{f}||}}\Phi_e(f(e))||f(e)||_{1} + \\
\phantom{a} & & \frac{1}{\gamma(\mathcal{C})}\sum_{e \in E_{||f|| > ||\hat{f}||}}\Phi_e(f(e))(||\hat{f}(e)||_{1} - ||f(e)||_{1}) + \\
\phantom{a} & & \frac{1}{\gamma(\mathcal{C})}\sum_{e \in E_{||f|| \leq ||\hat{f}||}}\Phi_e(f(e))(||\hat{f}(e)||_{1}-||f(e)||_{1}) \\
\phantom{a} &\geq& \frac{1}{\alpha(\mathcal{C})}\sum_{e \in E_{||f|| > ||\hat{f}||}}\Phi_e(f(e))||f(e)||_{1} + \\
\phantom{a} & & \frac{1}{\gamma(\mathcal{C})}\sum_{e \in E_{||f|| \leq ||\hat{f}||}}\Phi_e(f(e))||f(e)||_{1} \\
\phantom{a} &\geq& \min\{\frac{1}{\alpha(\mathcal{C})},\frac{1}{\gamma(\mathcal{C})}\}\sum_{e \in E}\Phi_e(f(e))||f(e)||_{1} = \frac{C_{NE}(f)}{\max\{\alpha(\mathcal{C}),\gamma(\mathcal{C})\}}.
\end{eqnarray*}
The first inequality follows from  the definition of the
Pigou bound  applied to each edge in $E_{||f|| > ||\hat{f}||}$ and
the {\em Gamma} bound w.r.t. edges in $E_{||f|| \leq ||\hat{f}||}$.
Here we split the edge set into two cases as mentioned before : $E_{||f|| > ||\hat{f}||}$
and $E_{||f|| \leq ||\hat{f}||}$ indicate a set of edges with $||f(e)||_{1} > ||\hat{f}(e)||_{1}$ and $||f(e)||_{1} \leq ||\hat{f}(e)||_{1}$,
respectively.
The second inequality can be obtained by rearranging terms in the first inequality and the third and the fourth terms in the second inequality
can be combined together and ignored due to the variational inequality characterization presented in \Cref{lem:variational}.
\end{proof}

\subsection{Price of Anarchy is Unbounded for Heterogeneous Polynomial Delays}
\label{sec:unbounded-poa}
In this subsection we show lower bounds on the price of anarchy in 
non-atomic networks with heterogeneous polynomial delay functions. 
It is surprising that even in $2$-commodity networks with delay functions that are sums of monomials of single variables,
the price of anarchy depends on the demand requirement. 

We first consider an example of a $2$-commodity network with $2$ edges, illustrated in \Cref{fig:2-comma-nonatomic-general}.
In this network, the top edge $e_1$ is associated with the delay function
$\Phi^i_{e_1}(f_1(e_1),f_2(e_1)) = a_if^{\theta}_1 + f_2$; the bottom edge $e_2$ is associated with the delay function
$\Phi^i_{e_2}(f_1(e_2),f_2(e_2)) = f_1 + a_if^{\theta}_2$ where $a_1, a_2 $ and $\theta >1$.
Demand requirements are defined as $r_1 = r_2 = r$ (where $r \geq 2$).
The worst-case Nash equilibrium flow vector $f$ is achieved when $f_1(e_1) = r$ and $f_2(e_2) = r$ and consequently the cost of Nash Equilibrium, $C_{NE}(f) = (a_1+a_2)r^{\theta+1}$. In contrast the social optimum flow $\hat{f}$ can be obtained from $f_2(e_1) = r$ and $f_1(e_2) = r$ and the social optimum cost  is $C_{SO}(\hat{f}) = 2r^2$. 

Note that this bound holds for heterogeneous functions that are also decomposable (by letting $a_1=a_2=a$), but does not hold for  decomposable $\theta$-complete polynomials, a class for which upper bounds will be illustrated in \Cref{sec-nonatmoic-poa}.
\begin{lemma}\label{lem:unboundedPoA}
Let $\mathcal{C}$ be a set of polynomial heterogeneous delay functions comprising monomials of a  single variable.
There exists a $(G,R=(r,r),\Phi)$, an instance of non-atomic $2$-commodity network flow routing game where  $\Phi \in \mathcal{C}$, such that
the price of anarchy of 2-commodity flow routing in $(G,R,\Phi)$ is $\Omega(r^{\theta-1})$.
\end{lemma}
\begin{figure}[ht!]
  \begin{center}
  \scalebox{0.55}
  {\includegraphics[width=160mm ,height=35mm]{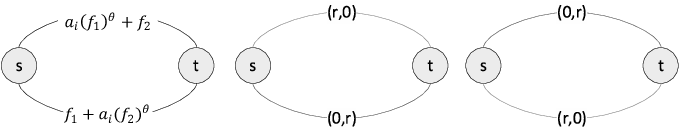}}\\
{\tiny \hspace{0.0in}(a) Network  \hspace{0.45in}(b)Nash Equilibrium \hspace{0.3in}(c) Optimum}
\caption{General Polynomial and Heterogeneous Latency}
  \label{fig:2-comma-nonatomic-general}
  \end{center}
\end{figure}
Our next example network in  \Cref{fig:unbounded_price} shows that in fact the inefficiency of equilibrium can be much worse.
\begin{figure}[ht!]
  \begin{center}
  \scalebox{0.8}
  {\includegraphics[width=120mm ,height=20mm]{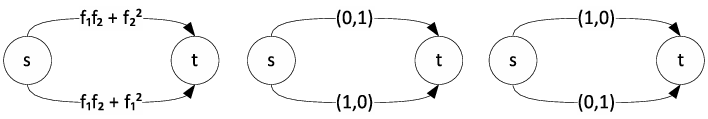}}\\
  {\tiny \hspace*{0.0in}(a) Network  \hspace*{0.75in}(b)Nash Equilibrium \hspace*{0.6in}(c) Optimum}
  \caption{An example of an unbounded price of anarchy}
  \label{fig:unbounded_price}
  \end{center}
\end{figure}
In this network, the top edge $e$ is associated with the delay function
$\Phi_{e}(f_1(e),f_2(e)) = f_1f_2 + f^2_2$ and the bottom edge $h$ is associated with the delay function
$\Phi_{h}(f_1(h),f_2(h)) = f_1f_2 + f^2_1$.
Demand requirements are defined as $r_1 = 1$ and $r_2 = 1$.
The worst-case Nash equilibrium flow vector $f$ is achieved when $f_1(h) = 1$ and $f_2(e) = 1$ and consequently the
Nash equilibrium cost $C_{NE}(f) = 2$. Conversely the social optimum flow $\hat{f}$ can be obtained from $\hat{f}_1(e) = 1$ and $\hat{f}_2(h) = 1$
and the social optimum flow cost $C_{SO}(\hat{f}) = 0$.
Therefore, the price of anarchy is unbounded.

\subsection{PoA for $k$-Commodity Non-atomic Networks: the decomposable $\theta$-complete polynomial case}
\label{sec-nonatmoic-poa}
We next consider the price of anarchy for $k$-commodity non-atomic  network flow routing games
where each  edge is  associated with a decomposable 
$\theta$-complete polynomial
delay function. The lower bounds obtained in the previous section used network instances where the polynomial delay function is not  $\theta$-complete. The first example does not satisfy the requirement that all terms have the same  sum of exponents and the second example  does not have all the non-cross terms corresponding to the two flow variables.

As discussed in the introduction,
it has been shown that Nash equilibrium flow is equivalent to a social optimum flow under homogeneous aggregate delay functions.
However, generally this does not hold, and even for affine delay functions, PoA is bounded by $4/3$.
In this section we establish an upper bound for the PoA  for heterogeneous decomposable $\theta$-complete polynomial delay functions which, unfortunately, is worse than that for the classical aggregate functions used in \cite{ROUGHGARDEN2003341,RT2002}.

\subsubsection{A Lower Bound on PoA}
\label{sec:lower-bound-het}
We consider an example of a $k$-commodity network, with $k$ edges,
illustrated in \Cref{fig:k-comma-poa-non-atomic-poly-hetero}.
In this network, the $i$th edge $e_i$ is associated with the delay function
$\Phi_{e_i}(f_1(e_i),\ldots,f_k(e_i)) = f^{\theta}_1(e_i) + f^{\theta}_2 (e_i)+ \ldots+ a_if^{\theta}_i(e_i)+\ldots + f^{\theta}_k(e_i)$.
Demand requirements are defined as $r_1 = \ldots = r_k = 1$. We let $a_i=a >1 , \forall i$.
The worst-case Nash equilibrium (NE) flow vector $f$ is achieved when $f_j(e_j) = 1,$ when $ 1 \leq j \leq k$ and consequently the
NE cost $C_{NE}(f) = ak$. Conversely the social optimum flow $\hat{f}$ can be obtained by letting $\hat{f}_{\ell}(e_j) = 1/(k-1)$ when $\ell \neq j$, $1 \leq \ell,j \leq k$ for a large enough value of $a$ and the social optimum has cost $C_{SO}(\hat{f}) = k/(k-1)^{\theta-1}$.
Let $a_{max}$ be as defined in \Cref{ModelandDef}. Then we have:
\begin{lemma}
\label{lem:LowerBoundPOA}
Let $\mathcal{C}$ be a set of decomposable $\theta$-complete polynomial delay functions.
There exists a $(G,R,\Phi)$, an instance of a non-atomic $k$-commodity network flow routing game where $\Phi \in \mathcal{C}$, such that
the price of anarchy of 
flow routing in $(G,R,\Phi)$ is $\Omega(a_{max}(k-1)^{\theta-1})$.
\end{lemma}
\begin{figure}[ht!]
  \begin{center}
  \scalebox{0.55}
  {\includegraphics[width=255mm ,height=55mm]{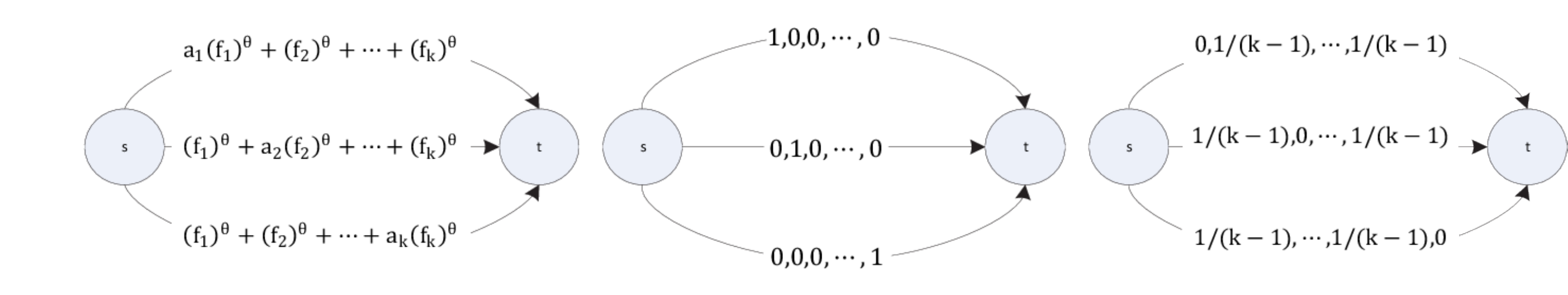}}\\
{\tiny \hspace*{-0.3in}(a) Network  \hspace*{1.35in}(b)Nash Equilibrium \hspace*{0.85in}(c) Optimum}
  \caption{Decomposable $\theta$-complete Polynomial Latency}
  \label{fig:k-comma-poa-non-atomic-poly-hetero}
  \end{center}
\end{figure}

\subsubsection{POA in $k$-commodity non-atomic flow networks with a polynomial decomposable delay function}
In this subsection we consider  network instances where the delay function of each path is in the class of  decomposable
$\theta$-complete polynomial functions. 
We let
$\Phi_e(f) = \sum_{\ell}g_{\ell}f^{\theta_1^\ell}_{1}f^{\theta_2^\ell}_{2}\cdots f^{\theta_k^\ell}_{k} + c(e)$ where
$\sum_{j=1}^{j=k} \theta_j^\ell = \theta$ 
and which, by the assumptions of $\theta$-complete polynomials,
includes the terms $\sum_i a_i f_i^\theta, a_i \in \mathbb{R}^{++}$.

In order to prove the price of anarchy we consider two cases depending on the relationship between $||f||_1 $ and $||\hat{f}||_1$, where $f$ and $\hat{f}$ are the Nash and optimum delay flows.  First we consider the case when $||f||_{1} \leq ||\hat{f}||_{1}$ on edge $e$, for which we determine the  {\em Gamma bound}. We will then find the {\em Pigou bound} for the converse inequality.
\begin{lemma}\label{lemma-gamma-bound-nonatomic-poly}
Let $\mathcal{C}$ be a set of decomposable $\theta$-complete polynomial delay functions for non-atomic $k$-commodity network flow routing games.
Then, $\gamma(\mathcal{C}) \leq a_{max}k^{\theta-1}$.
\end{lemma}
\begin{proof}
For the case when $f_1+\ldots+f_{k} \leq \hat{f}_1 + \ldots + \hat{f}_{k}$ on edge $e$, 
w.l.o.g we minimize the social optimum delay by considering $||\hat{f}||_1 = 
||f||_1$. This can be achieved by reducing the flow on each of the dimensions. By monotonicity, $\Phi_e(\hat{f})$ decreases.
For the remainder, we omit the dependency on $e$ for simplicity. Let $r=f$ and $x=\hat{f}$. then
 \begin{eqnarray}
\phantom{a} \gamma(\mathcal{C}) &=& \frac{\Phi(f_1,\ldots,f_k)}{\Phi(\hat{f}_1,\ldots,\hat{f}_{k})} \label{nonatomic-poly-1}\\
\phantom{a} &\leq& \frac{\sum_{\ell}g_{\ell}f^{\theta_1^\ell}_{1}f^{\theta_2^\ell}_{2}\cdots f^{\theta_k^\ell}_{k} + c}{a_1\hat{f}^\theta_1+\ldots+a_{k}\hat{f}^\theta_{k}+c} \label{nonatomic-poly-2}\\
\phantom{a} &\leq& \frac{\max_{\ell}g_{\ell}}{\min_{i}a_i}\frac{f^\theta_1+\ldots+f^\theta_k+f^{\theta_1}_{1}f^{\theta_2}_{2}\cdots f^{\theta_k}_{k}+\ldots}{\hat{f}^\theta_1+\ldots+\hat{f}^\theta_{k}} \label{nonatomic-poly-3}\\
\phantom{a} &<& \frac{\max_{\ell}g_{\ell}}{\min_{i}a_i}\frac{f^\theta_1+\ldots+f^\theta_k+f^{\theta_1}_{1}f^{\theta_2}_{2}\cdots f^{\theta_k}_{k}+\ldots}{k((f_1+\ldots+f_k)/k)^\theta} \label{nonatomic-poly-4}\\
\phantom{a} &\leq& \frac{\max_{\ell}g_{\ell}}{\min_{i}a_i}k^{\theta-1}\frac{f^\theta_1+\ldots+f^\theta_k+f^{\theta_1}_{1}f^{\theta_2}_{2}\cdots f^{\theta_k}_{k}+\ldots}{(f_1+\ldots+f_k)^\theta} \label{nonatomic-poly-5}\\
\phantom{a} &\leq& 
a_{max}k^{\theta-1}\frac{(f_1+\ldots+f_k)^\theta}{(f_1+\ldots+f_k)^\theta} \label{nonatomic-poly-6}\\
\phantom{a} &\leq& a_{max}k^{\theta-1}\label{nonatomic-poly-7}
\end{eqnarray}
In inequality (\ref{nonatomic-poly-2}), $\ell$ represents $\ell$-th term in the delay function $\Phi()$ and $\theta_1 + \ldots + \theta_k = \theta$ since the delay function is a decomposable $\theta$-complete polynomial. To  minimize the divisor we use only the sum of monomials, $\sum_i a_i f_j^{\theta}$, contained in $\Phi(f_1 , \ldots , f_k)$.
This term is lower bounded by $\min_i a_i (
\hat{f}^\theta_1+\ldots+\hat{f}^\theta_{k})$. Since $\hat{f}_1 + \ldots + \hat{f}_{k} = ||f||_1$ the minimum value of the divisor is achieved when $\hat{f}_1 = \ldots = \hat{f}_{k} > ||f||_1/k$.
Finally, (\ref{nonatomic-poly-7}) can be obtained.
\end{proof}

Now let us evaluate the Pigou bound for the case when $||f||_{1} > ||\hat{f}||_{1}$ on an edge $e$.
\begin{lemma}\label{lemma-poa-tc-p}
Let $\mathcal{C}$ be a set of decomposable $\theta$-complete polynomial delay functions for non-atomic $k$-commodity network flow routing games.
Then, 
$\alpha(\mathcal{C})\leq (\theta+1)\gamma(\mathcal{C})$.
\end{lemma}
\begin{proof}
We consider the value of the Pigou bound and fixing flow $f$, find a bound on the infimum of the denominator,  $\beta(\mathcal{C})$, that occurs in the expression for $\alpha(\mathcal{C})$. We let $r=f$ and $x= \hat{f}$ and determine $ \inf_{\hat{f}} \beta(\mathcal{C})$ in terms of the vector $f$.  Again we omit the dependency on $e$.

Given a requirement vector $R$ we find a lower bound on $\beta(\mathcal{C})$ as  described above.
The function $\beta(\mathcal{C}) $ is convex and thus by KKT conditions w.l.o.g assume that
$ \hat{f}_i >0$ and $\frac{\partial\beta(\mathcal{C})}{\partial \hat{f}_{i}} = 0$, $\forall i \in \{1,2,\ldots,k'\}$,
where $|\{i \in K | \hat{f}_i >0 \}|=k'$. 
Note thus that by the equations, $\frac{\partial \beta(\mathcal{C})}{\partial \hat{f}_i} = 0$ for $i, 1 \leq i \leq k'$, we obtain
\[  \gamma(\mathcal{C})||\hat{f}||_1 \frac{\partial\Phi(\hat{f})}{\partial \hat{f}_i} - \Phi(f) + \gamma(\mathcal{C})\Phi(\hat{f}) = 0\]
which can be written as
\[ \gamma(\mathcal{C})||\hat{f}||_1 \frac{\partial\Phi(\hat{f})}{\partial \hat{f}_i} = \Phi(f) - \gamma(\mathcal{C})\Phi(\hat{f}).\]
Multiplying the above with $\hat{f}_i$ gives that
\[ \gamma(\mathcal{C})||\hat{f}||_1\hat{f}_i \frac{\partial\Phi(\hat{f})}{\partial \hat{f}_i} = \hat{f}_i(\Phi(f) - \gamma(\mathcal{C})\Phi(\hat{f})).\]
and summing over all $1 \leq i \leq k'$,
\[ \gamma(\mathcal{C})\sum^{k'}_{i=1}||\hat{f}||_1\hat{f}_i \frac{\partial\Phi(\hat{f})}{\partial \hat{f}_i} = \sum^{k'}_{i=1}\hat{f}_i(\Phi(f) - \gamma(\mathcal{C})\Phi(\hat{f}))\]
which is equivalent to
\[ \gamma(\mathcal{C})||\hat{f}||_1\sum^{k'}_{i=1}\hat{f}_i \frac{\partial\Phi(\hat{f})}{\partial \hat{f}_i} = ||\hat{f}||_1(\Phi(f) - \gamma(\mathcal{C})\Phi(\hat{f})). \]
Since $\theta$ is the maximum degree and positive and all terms in $\frac{\partial\Phi(\hat{f})}{\partial \hat{f}_i}$ 
are nonnegative as all co-efficients are nonnegative (by definition) and $\hat{f}$ has  nonnegative components,
\[ \theta \Phi(\hat{f}) \geq \sum^{k'}_{i=1}\hat{f}_i\partial\Phi(\hat{f})/\partial \hat{f}_i \]
and thus
\[ \gamma(\mathcal{C})||\hat{f}||_1\theta\Phi(\hat{f}) \geq ||\hat{f}||_1(\Phi(f) - \gamma(\mathcal{C})\Phi(\hat{f})). \]
By eliminating $||\hat{f}||_1$ on both sides,
\begin{eqnarray}
\phantom{a} \gamma(\mathcal{C})(\theta+1)\Phi(\hat{f}) \geq \Phi(f). \label{phir-phix}
\end{eqnarray}

\noindent
Further,
\begin{eqnarray}
\phantom{a} \beta(\mathcal{C}) &=& \gamma(\mathcal{C})||\hat{f}||_1\Phi(\hat{f}) + (||f||_1 - ||\hat{f}||_1)\Phi(f) \\
\phantom{a} &\geq& ||f||_1\Phi(f) + ||\hat{f}||_1\Phi(f)\frac{\gamma(\mathcal{C})}{(\theta+1)\gamma(\mathcal{C})} -  ||\hat{f}||_1\Phi(f) \\
\phantom{a} &=& ||f||_1\Phi(f) + ||\hat{f}||_1\frac{- \theta\Phi(f)}{\theta+1} \\
\phantom{a} &\geq& ||f||_1\Phi(f) + ||f||_1\frac{- \theta\Phi(f)}{\theta+1} \\
\phantom{a} &=& ||f||_1(\Phi(f) - \frac{\theta}{\theta+1}\Phi(f)) \\
\phantom{a} \implies \beta(\mathcal{C}) &\geq& \frac{1}{\theta+1}||f||_1\Phi(f)
\end{eqnarray}

\noindent
Thus,
\begin{eqnarray}
\phantom{a} \alpha(\mathcal{C}) &\leq& \frac{\gamma(\mathcal{C})||f||_1\Phi(f)}{\gamma(\mathcal{C})||\hat{f}||_1\Phi(\hat{f}) + (||f||_1 - ||\hat{f}||_1)\Phi(f)} \\
\phantom{a} &=& \frac{\gamma(\mathcal{C})||f||_1\Phi(f)}{\beta(\mathcal{C})} \\
\phantom{a} &\leq& \frac{\gamma(\mathcal{C})||f||_1\Phi(f)}{\frac{||f||_1\Phi(f)}{\theta+1}} \\
\phantom{a} &=& (\theta+1)\gamma(\mathcal{C})
\end{eqnarray}
\end{proof}

Combining the above two bounds and using \Cref{thm:poa-commodity-affine} we obtain the following result:
\begin{theorem}\label{thm-2-nonatomic-decomposable-poly}
Let $\mathcal{C}$ be a set of decomposable $\theta$-complete polynomial delay functions.
Let $(G,R,\Phi)$ be a non-atomic $k$-commodity network flow routing game instance with delay function $\Phi \in \mathcal{C}$.
The price of anarchy of flow routing in $(G,R,\Phi)$ is
at most $(\theta+1)a_{max}k^{\theta-1}$
\end{theorem}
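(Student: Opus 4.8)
The plan is to assemble the bound directly from the Gamma--Pigou machinery already developed, so that no genuinely new argument is required beyond combining three prior results. By Theorem~\ref{thm:poa-commodity-affine}, for any non-atomic $k$-commodity instance $(G,R,\Phi)$ with $\Phi \in \mathcal{C}$ the price of anarchy is at most $\max\{\alpha(\mathcal{C}),\gamma(\mathcal{C})\}$, where $\alpha(\mathcal{C})$ and $\gamma(\mathcal{C})$ are the generalized Pigou and Gamma bounds for the class $\mathcal{C}$ of decomposable $\theta$-polynomial delay functions. Hence it suffices to bound this maximum.

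First I would invoke Lemma~\ref{lemma-gamma-bound-nonatomic-poly} to obtain $\gamma(\mathcal{C}) \leq a_{max}k^{\theta-1}$. Then, treating this numerical bound on $\gamma(\mathcal{C})$ as a fixed constant (legitimate because, as noted after Lemma~\ref{lemma-gamma-bound-nonatomic-poly}, $\alpha(\mathcal{C})$ is defined in terms of $\gamma(\mathcal{C})$ and the estimate of Lemma~\ref{lemma-poa-tc-p} is monotone in that placeholder), I would invoke Lemma~\ref{lemma-poa-tc-p} to get $\alpha(\mathcal{C}) \leq (\theta+1)\gamma(\mathcal{C})$. Since $\theta \geq 1$ we have $(\theta+1)\gamma(\mathcal{C}) \geq \gamma(\mathcal{C})$, so the maximum is attained by $\alpha(\mathcal{C})$, and chaining the estimates gives
\[ PoA \leq \max\{\alpha(\mathcal{C}),\gamma(\mathcal{C})\} \leq (\theta+1)\gamma(\mathcal{C}) \leq (\theta+1)a_{max}k^{\theta-1}, \]
which is exactly the claimed $O((\theta+1)a_{max}k^{\theta-1})$ bound.

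There is essentially no obstacle remaining in this concluding step; the substantive work lies in the three results it draws on — the case split over $E_{\|f\|>\|\hat f\|}$ versus $E_{\|f\|\le\|\hat f\|}$ and the variational-inequality cancellation in Theorem~\ref{thm:poa-commodity-affine}, the AM--GM/convexity estimate bounding $\gamma(\mathcal{C})$, and the KKT analysis of the restricted Pigou program (SC) that yields $\alpha(\mathcal{C}) \leq (\theta+1)\gamma(\mathcal{C})$. The only mild care I would take is to confirm that degenerate cases (a constant delay function, or an edge with $\|f(e)\|_1 = \|\hat f(e)\|_1$) are already subsumed by the $0/0=1$ convention and the case split of Theorem~\ref{thm:poa-commodity-affine}, so that the global bound propagates uniformly over all edges and all instances in $\mathcal{C}$.
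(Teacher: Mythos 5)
Your proposal is correct and is precisely the argument the paper intends: Theorem~\ref{thm-2-nonatomic-decomposable-poly} is stated without a written proof exactly because it follows by chaining Theorem~\ref{thm:poa-commodity-affine} with Lemma~\ref{lemma-gamma-bound-nonatomic-poly} and Lemma~\ref{lemma-poa-tc-p}, giving $\max\{\alpha(\mathcal{C}),\gamma(\mathcal{C})\} \leq (\theta+1)\gamma(\mathcal{C}) \leq (\theta+1)a_{max}k^{\theta-1}$. No further commentary is needed.
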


\subsection{POA in $k$-commodities non-atomic heterogeneous $\theta$-complete polynomial delay functions}\label{subsec-poa-hetero-poly}
In this section we show bounds on the price of anarchy in $k$-commodity non-atomic networks for heterogeneous
affine and polynomial delay functions.
The bound will be a  function of the number of commodities $k$ and $N_{\Phi}$, the maximum number of terms in the polynomial representing $\Phi^i_e(f)$ over all edges, $e$, and commodities, $i$. Interestingly, there is no dependence on the size of the network.

Before proving the result we will need the following lemma:
\begin{lemma}\label{heterogeneous-poa-poly}
Let $f$ be a Nash equilibrium flow and $\hat{f}$ be a social optimum
flow vector in a non-atomic $k$-commodity flow routing game instance with heterogeneous $\theta$-complete polynomial delay functions. Then
\[ C_{NE}(f) \leq kN_{\Phi}a_{max}(C_{NE}(f))^{\frac{\theta}{\theta+1}}(C_{SO}(\hat{f}))^{\frac{1}{\theta+1}} + C_{SO}(\hat{f}) \]
\end{lemma}
\begin{proof}
We use the variational inequality defined in \Cref{lem:variational}(also see \cite{D1980}).
The delay function may have two categories of terms : i) summation of all terms that consists of one variable  and ii) summation of all terms that consists of at least two variables (referred to as cross terms).
Let us first consider the part of the delay function that comprises terms of category (i).
To simplify the presentation we first  consider one such term in $\Phi^i_e(f)$, i.e., the generic term 
$ a_{i\ell}(e)f^{\theta}_\ell(e)$. A bound on the contribution of the term to the delay is provided below: 
\begin{eqnarray*}
\phantom{a} \sum_e\hat{f}_i(e)(a_{i\ell}(e)f^{\theta}_\ell(e)) &\leq&
a_{max}\sum_e\hat{f}_i(e)(f^{\theta}_\ell(e)) \\
&\leq&
a_{max}(\sum_e\hat{f}^{\theta+1}_i(e))^{\frac{1}{\theta+1}}(\sum_e(f^{\theta}_\ell(e))^{\frac{\theta+1}{\theta}})^{\frac{\theta}{\theta+1}} \\
\phantom{a} &\leq& a_{max}C_{SO}(\hat{f})^{\frac{1}{\theta+1}} \times (\sum_e(f^{\theta}_\ell(e))^{\frac{1}{\theta}}(\Phi^{\ell}_e(f)))^{\frac{\theta}{\theta+1}} \\
\phantom{a} &\leq& a_{max}C_{SO}(\hat{f})^{\frac{1}{\theta+1}} \times (\sum_e(f_\ell(e))(\Phi^{\ell}_e(f)))^{\frac{\theta}{\theta+1}} \\
\phantom{a} &\leq& a_{max}C_{SO}(\hat{f})^{\frac{1}{\theta+1}}C_{NE}(f)^{\frac{\theta}{\theta+1}}
\end{eqnarray*}
The term on the LHS in the first inequality arises on the RHS of  the variational inequality.
The first inequality holds since $a_{max}$ is the largest coefficient over all $a_{ij}$.
The second inequality can be obtained by applying H$\ddot{o}$lder's inequality.
 The third inequality is true since $\forall l$, 
 $f^{\theta}_\ell(e) \leq \Phi^{\ell}_e(f)$, 
given the assumption that all co-efficients of the polynomial $\Phi^{\ell}_e(f)$ have value  greater than or equal to one. 
Finally, the last inequality follows since  $\sum_e(f_{\ell}(e))(\Phi^{\ell}_e(f)) \leq C_{NE}(f)$.

For the second case, note that for any cross term the following is true: $f^{\theta_1}_{1}(e)f^{\theta_2}_{2}(e) \cdots f^{\theta_k}_{k}(e) \leq \max(f^{\theta}_1(e), \ldots, f^{\theta}_k(e))$ where $\theta_1 + \ldots + \theta_k = \theta$ in the second case. The $\ell$-th term in the delay function for commodity $i$ satisfies $g_{i\ell}(e)f^{\theta_1^\ell}_{1}(e)f^{\theta_2^\ell}_{2}(e)\cdots f^{\theta_k^\ell}_{k}(e)$ $\leq g_{i\ell}(e)\max(f^{\theta}_1(e), \ldots, f^{\theta}_k(e))$. The subsequent bound in this case is obtained similarly to the first case.
\ignore{
According to the first case, we may have that
\begin{eqnarray*}
\phantom{a} \sum_e\hat{f}_i(e)(g_{i\ell}(e)f^{\theta}_{\ell'}(e)) &\leq&
a_{max}\sum_e\hat{f}_i(e)(f^{\theta}_{\ell'}(e)) \\
&\leq&
a_{max}(\sum_e\hat{f}^{\theta+1}_i(e))^{\frac{1}{\theta+1}}(\sum_e(f^{\theta}_{\ell'}(e))^{\frac{\theta+1}{\theta}})^{\frac{\theta}{\theta+1}} \\
\phantom{a} &\leq& a_{max}C_{SO}(\hat{f})^{\frac{1}{\theta+1}} \times (\sum_e(f^{\theta}_{\ell'}(e))^{\frac{1}{\theta}}(\Phi^{\ell'}_e(f)))^{\frac{\theta}{\theta+1}} \\
\phantom{a} &\leq& a_{max}C_{SO}(\hat{f})^{\frac{1}{\theta+1}} \times (\sum_e(f_{\ell'}(e))(\Phi^{\ell'}_e(f)))^{\frac{\theta}{\theta+1}} \\
\phantom{a} &\leq& a_{max}C_{SO}(\hat{f})^{\frac{1}{\theta+1}}C_{NE}(f)^{\frac{\theta}{\theta+1}}
\end{eqnarray*}
where $\ell' = \arg_{k' \in K}\max(f^{\theta}_1(e), \ldots, f^{\theta}_k(e))$.
Note that the second case proof is almost same as the first case except that we replace $\ell$ with $\ell'$.
}
Based on these two cases, we know that any term in $\Phi^i_e(f)$ has the same upper bound, i.e., $a_{max}C_{SO}(\hat{f})^{\frac{1}{\theta+1}}C_{NE}(f)^{\frac{\theta}{\theta+1}}$.
Note that we have at most $N_{\Phi}$ terms in $\Phi^i_e(f)$ for each of the $k$ commodities. The bound in the lemma follows, since
\begin{eqnarray*}
\phantom{a} C_{NE}(f) &=&   
\sum_i \sum_e f_i(e)\Phi^i_e(f)\\
&=& \sum_{i \in K} \sum_{e \in E} f_i(e)\left( \sum_j a_{ij}(e) f_j^\theta+ 
\sum_{ 1 \leq \ell \leq L^i_e}g_{i\ell}(e)f^{\theta_1^\ell}_{1}f^{\theta_2^\ell}_{2}\cdots f^{\theta_k^\ell}_{k} + c_i(e) \right)\\
&\leq& \sum_i \sum_e \hat{f}_i(e) \left( \sum_j a_{ij}(e) f_j^\theta+ 
\sum_{ 1 \leq \ell \leq L^i_e}g_{i\ell}(e)f^{\theta_1^\ell}_{1}f^{\theta_2^\ell}_{2}\cdots f^{\theta_k^\ell}_{k} \right) + \sum_i\sum_e \hat{f}_i(e)c_i(e)\\
&\leq&
kN_{\Phi}a_{max}(C_{NE}(f))^{\frac{\theta}{\theta+1}}(C_{SO}(\hat{f}))^{\frac{1}{\theta+1}} + C_{SO}(\hat{f}) \\
\end{eqnarray*}
where the last inequality follows because 
$\sum_i\sum_e \hat{f}_i(e)c_i(e) \leq \sum_i\sum_e \hat{f}_i(e) \Phi^i_e(\hat{f}) = C_{SO}(\hat{f})$.
\end{proof}
We use the above lemma to prove the following result:
\begin{theorem}\label{thm-k-nonatomic-heterogeneous-poly}
Let $\mathcal{C}$ be a set of heterogeneous $\theta$-complete polynomial delay functions.
If $(G,R,\Phi)$ is a  non-atomic $k$-commodity network flow routing game instance with delay functions in $\mathcal{C}$, then the price of anarchy of routing flow in $(G,R,\Phi)$ is at most $(a_{max}kN_{\Phi}+1)^{\theta+1}$.
\end{theorem}
\begin{proof}
Due to \Cref{heterogeneous-poa-poly},
\[C_{NE}(f) \leq kN_{\Phi}a_{max} \left(C_{NE}(f)\right)^{\frac{\theta}{\theta+1}} \left(C_{SO}(\hat{f})\right)^{\frac{1}{\theta+1}} + C_{SO}(\hat{f}) \]
where $f$ is the Nash equilibrium flow and $\hat{f}$ the social optimum flow. Thus,
\[\left(\frac{C_{NE}(f)}{C_{SO}(\hat{f})}\right) \leq kN_{\Phi}a_{max} \left(\frac{C_{NE}(f)}{C_{SO}(\hat{f})}\right)^{\frac{\theta}{\theta+1}} + 1. \]
Since $(\frac{C_{NE}(f)}{C_{SO}(\hat{f})})^{\frac{\theta}{\theta+1}} \geq 1 $,
\[\left(\frac{C_{NE}(f)}{C_{SO}(\hat{f})}\right) \leq (kN_{\Phi}a_{max}+1) \left(\frac{C_{NE}(f)}{C_{SO}(\hat{f})}\right)^{\frac{\theta}{\theta+1}} \]
Thus   $PoA \leq (a_{max}kN_{\Phi}+1)^{\theta+1}$. 
\end{proof}

\subsection{$k$-commodity non-atomic - heterogeneous affine delay functions}\label{subsec-poa-hetero-affine}
We determine a bound on the price of anarchy in $k$-commodity non-atomic networks with heterogeneous affine delay functions. 
From \Cref{heterogeneous-poa-poly}, we have the following corollary:
\begin{corollary}\label{general-poa-affine}
Let $f$ and $\hat{f}$ be a Nash equilibrium flow and a social optimum 
flow vector, respectively,
in a non-atomic $k$-commodity network flow routing game instance with heterogeneous affine delay functions. Then
\[ C_{NE}(f) \leq kN_{\Phi}a_{max}\left( C_{NE}(f)C_{SO}(\hat{f}) \right)^{1/2} + C_{SO}(\hat{f}). \]
\end{corollary}
Also, in a fashion similar to \Cref{thm-k-nonatomic-heterogeneous-poly}, we have the following theorem.
\begin{theorem}\label{thm-k-nonatomic-heterogeneous-affine}
Let $\mathcal{C}$ be a set of heterogeneous affine functions.
If $(G,R,\Phi)$ is a $k$-commodity non-atomic network flow routing game instance with delay function 
in $\mathcal{C}$, then the price of anarchy of routing flow in $(G,R,\Phi)$ is at most 
$k^2N^2_{\Phi}a^2_{max}+2$.
\end{theorem}
\begin{proof}
Using \Cref{general-poa-affine}
\[\left(\frac{C_{NE}(f)}{C_{SO}(\hat{f})}\right) \leq (kN_{\Phi}a_{max}) \left(\frac{C_{NE}(f)}{C_{SO}(\hat{f})}\right)^{\frac{1}{2}} +1\]
The result follows from determining an upper bound on $x$ such that $x^2 - kN_{\Phi}a_{max}x -1 \leq 0$ where $x^2=\frac{C_{NE}(f)}{C_{SO}(\hat{f})}$.

\end{proof}

\section{Part II - PoA for Atomic Flows}
\label{secAtomic}
In this section we consider atomic flows. We start with some negative  results on the existence of Nash equilibrium.
In general, Nash equilibrium does not exist, even for affine functions, as shown in
\Cref{subsec:ExistenceAtomic}. 
Note that this is somewhat surprising since it is well known that a congestion game with
delay that is a function of aggregate load, or flow, has at least one pure Nash equilibrium \cite{R1973, MS1996}. A positive outlook does exist:
we show that Nash equilibrium exists when the heterogeneous delay functions are affine and uniform over the network.
Our contributions for the price of anarchy in atomic selfish routing problems have been illustrated in \Cref{table:k}.
 For $2$-commodity networks with uniform affine delay functions, we show (also see \Cref{table:2}) 
 improved bounds which are different from the bounds on general decomposable affine delay functions.
The bound provided (\Cref{thm-2-atomic-uniform-affine}) is an asymptotically tight bound as illustrated by the lower bound of  $\Omega(\sqrt{a_{max}})$.

\subsection{Existence of Atomic NE flow?}
\label{subsec:ExistenceAtomic}
We first show that pure Nash equilibrium does not exist in the case of atomic, decomposable and affine delay functions with unweighted demand requirement.  \begin{figure}[ht!]
  \begin{center}
  \scalebox{0.75}
  {\includegraphics[width=98mm ,height=30mm]{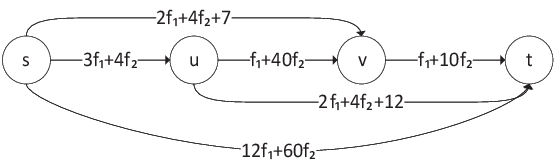}}
  \caption{Affine and Decomposable Latency}
  \label{fig:non-existence-atomic-decomposable}
  \end{center}
\end{figure}
In \Cref{fig:non-existence-atomic-decomposable}, there are two users with $r_1 = r_2 = 1$(also termed unweighted) and
the users have  the same set of strategies (paths) from source $s$ to destination $t$.
The delay function associated with an edge is shown in \Cref{fig:non-existence-atomic-decomposable}.
Let us define four paths $P_1 = \{(s,t)\}, P_2 = \{(s,u), (u,t)\}, P_3 = \{(s,u), (u,v), (v,t)\}$ and $P_4 = \{(s,v), (v,t)\}$.
Since the coefficient  of the delay for user 2  is very large for the edges in $P_1$ and $P_3$, user 2 will not utilize these paths.
Thus, there are a total of  eight pairs of paths each of which are chosen by user 1 and user 2, respectively as shown in \Cref{table:k-atomic-no-pne}.
\begin{table}[ht]
\begin{center}
\scalebox{0.7}{
\begin{tabular}[width=\textwidth]{|c|cccc|}
\hline
 & \multicolumn{2}{|c|}{path} & \multicolumn{2}{|c|}{delay} \\ [-1ex]
\raisebox{2.0ex}{Strategy Pair} & \multicolumn{1}{|c}{user 1} & \multicolumn{1}{c|}{user 2} & \multicolumn{1}{|c}{user 1} & \multicolumn{1}{c|}{user 2} \\ [1ex]
\hline
\hline
\multicolumn{1}{|c|}{$(P_3,P_2)$} & \multicolumn{1}{c|}{$P_3$} & \multicolumn{1}{c|}{$P_2$} & \multicolumn{1}{c|}{$9$} & \multicolumn{1}{c|}{$23$} \\
\hline
\multicolumn{1}{|c|}{$(P_3,P_4)$} & \multicolumn{1}{c|}{$P_3$} & \multicolumn{1}{c|}{$P_4$} & \multicolumn{1}{c|}{$15$} & \multicolumn{1}{c|}{$22$} \\
\hline
\multicolumn{1}{|c|}{$(P_1,P_4)$} & \multicolumn{1}{c|}{$P_1$} & \multicolumn{1}{c|}{$P_4$} & \multicolumn{1}{c|}{$12$} & \multicolumn{1}{c|}{$21$} \\
\hline
\multicolumn{1}{|c|}{$(P_1,P_2)$} & \multicolumn{1}{c|}{$P_1$} & \multicolumn{1}{c|}{$P_2$} & \multicolumn{1}{c|}{$12$} & \multicolumn{1}{c|}{$20$} \\
\hline
\multicolumn{1}{|c|}{$(P_4,P_2)$} & \multicolumn{1}{c|}{$P_4$} & \multicolumn{1}{c|}{$P_2$} & \multicolumn{1}{c|}{$10$} & \multicolumn{1}{c|}{$20$} \\
\hline
\multicolumn{1}{|c|}{$(P_2,P_4)$} & \multicolumn{1}{c|}{$P_2$} & \multicolumn{1}{c|}{$P_4$} & \multicolumn{1}{c|}{$17$} & \multicolumn{1}{c|}{$21$} \\
\hline
\multicolumn{1}{|c|}{$(P_2,P_2)$} & \multicolumn{1}{c|}{$P_2$} & \multicolumn{1}{c|}{$P_2$} & \multicolumn{1}{c|}{$25$} & \multicolumn{1}{c|}{$25$} \\
\hline
\multicolumn{1}{|c|}{$(P_4,P_4)$} & \multicolumn{1}{c|}{$P_4$} & \multicolumn{1}{c|}{$P_4$} & \multicolumn{1}{c|}{$24$} & \multicolumn{1}{c|}{$24$} \\
\hline
\end{tabular}}
\caption{Non-existence of Pure Nash Equilibrium}
\label{table:k-atomic-no-pne}
\end{center}
\end{table}
The strategies of the two users will be represented by the pair of strategies $(P,P')$ where $P$ and $P'$
belong to the set of 4 paths $\{ P_1, P_2, P_3, P_4)$.
There exists a sequence of strategy pairs that cycle \{$(P_3,P_2), (P_3,P_4), (P_1,P_4), (P_1,P_2)$\}.
Further, other strategies are not stable.
Strategy pairs $(P_4,P_2)$ and $(P_2,P_2)$ will shift to the strategy pair $(P_3,P_2)$ because $P_3$
is more beneficial to user 1.
And lastly, strategy pairs $(P_2,P_4)$ and $(P_4,P_4)$ are unstable since the strategy pair
$(P_3,P_4)$ is preferred by user 1.
This example shows that there may be no pure Nash equilibrium in $2$-commodity atomic networks where each edge is associated with an affine, decomposable delay function and each user has a requirement of one unit of demand
(a case usually referred to as unweighted).

However, in  atomic networks when each edge is associated with an affine, uniform delay function, we show that there exists at least one pure Nash equilibrium. This proof applies to both unweighted and weighted demands and is based on the existence of a potential function.
\begin{theorem}
Let $\mathcal{C}$ be a set of affine, uniform delay functions.
If $(G,R,\Phi)$ is a $k$-commodity atomic network flow routing game instance with delay functions in $\mathcal{C}$, then $(G,R,\Phi)$ admits at least one Nash equilibrium.
\end{theorem}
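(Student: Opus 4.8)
The plan is to exhibit a weighted potential function for the game and then obtain a pure equilibrium by minimizing it over the (finite) set of strategy profiles. First I would recast the instance as a weighted congestion game with typed players: each player $p$ has a type $\tau(p)\in\{1,\ldots,k\}$, a weight $w_p>0$ (the unweighted case being $w_p=1$), a source--sink pair, and must route its demand $w_p$ along a single path $P_p$; its cost is $C_p(f)=w_p\sum_{e\in P_p}\Phi^{\tau(p)}_e(f(e))$. Because the delay functions are \emph{decomposable}, every player on an edge $e$ sees the same per-unit delay $\ell_e(f(e))=\sum_{j=1}^k a_j f_j(e)+c(e)$, and because they are \emph{uniform} the coefficients $a_1,\ldots,a_k$ are edge-independent; only the additive constant $c(e)$ varies with $e$. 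I will assume $a_j>0$ for all $j$ (a type with $a_j=0$ has flow-independent cost and can be routed along any fixed cheapest $c(\cdot)$-path, which does not interact with the rest of the argument).

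The device is to treat the \emph{effective load} $L(e)=\sum_{j=1}^k a_j f_j(e)$ as the congestion variable, so that $\ell_e=L(e)+c(e)$ and a unilateral move of player $p$ onto or off of $e$ shifts $L(e)$ by exactly $\pm a_{\tau(p)}w_p$. Writing $Q_j(e)=\sum_{p\,:\,\tau(p)=j,\ e\in P_p} w_p^2$ (so $Q_j(e)=f_j(e)$ in the unweighted case), I propose the potential
\[ \Psi(f)=\sum_{e\in E}\Bigl(\tfrac12\,L(e)^2+c(e)\,L(e)+\tfrac12\sum_{j=1}^k a_j^2\,Q_j(e)\Bigr). \]
The heart of the proof is the verification that $\Psi$ is a weighted potential with positive multipliers $\lambda_p=a_{\tau(p)}$. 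Concretely, if player $p$ (type $\tau$, weight $w$) switches from $P$ to $P'$, then on an edge $e\in P'\setminus P$ the terms $\tfrac12 L(e)^2$ and $\tfrac12 a_\tau^2 Q_\tau(e)$ jointly increase by $a_\tau w\,(L(e)+a_\tau w)$ while $c(e)L(e)$ increases by $a_\tau w\,c(e)$, so the edge term changes by $a_\tau\cdot\bigl(w\,\ell_e(f'(e))\bigr)$, exactly $a_\tau$ times $p$'s new cost on $e$; symmetrically the edge term changes by $-a_\tau\cdot\bigl(w\,\ell_e(f(e))\bigr)$ for $e\in P\setminus P'$, and by $0$ on every other edge (where $f'(e)=f(e)$). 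Summing over $e$ gives $\Delta\Psi=a_{\tau(p)}\,\Delta C_p$. Since the profile space is finite, $\Psi$ attains a minimum, and at any minimizer no player has a cost-improving deviation (such a deviation would strictly decrease $\Psi$); that profile is the desired pure Nash equilibrium. The multipliers $a_{\tau(p)}$ do not involve the weights, so the argument is identical for unweighted and weighted demands.

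The step that requires the most care --- and the reason the hypothesis cannot be relaxed from \emph{uniform} to \emph{decomposable} --- is the asymmetry of the pairwise interaction between players: player $p$'s ``footprint'' on a shared edge is $a_{\tau(p)}w_p$, whereas its payment weight is $w_p$, so the game is \emph{not} an exact potential game and one is forced to a weighted potential whose multiplier for $p$ must be $a_{\tau(p)}$. This choice makes the cross-contributions of any pair $p,q$ balance only because $a_j$ is the same on every edge; with edge-dependent coefficients the identity $\lambda_p a_{\tau(q)}(e)=\lambda_q a_{\tau(p)}(e)$ would have to hold simultaneously on all edges shared by $P_p$ and $P_q$, which is impossible in general --- consistent with the non-existence example of Section~\ref{subsec:Existence}. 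A minor point, handled automatically by the edge form of $\Psi$, is that no Rosenthal-style ordering of players is needed: $\Psi$ is manifestly a function of $(L(e),c(e),Q_1(e),\ldots,Q_k(e))_{e\in E}$ alone.
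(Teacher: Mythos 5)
Your proof is correct and is essentially the paper's own argument: the paper uses the potential $\Psi(f)=\sum_e\bigl(d(e)^2+\sum_{j\in S(e)}(a_j r_j)^2\bigr)$ with $d(e)=\sum_i a_i r_i(e)+c(e)$, which equals twice your $\Psi$ plus the constant $\sum_e c(e)^2$, and it likewise verifies that a unilateral deviation changes the potential by a positive multiple ($2a_ir_i$ there, $a_{\tau(p)}$ in your normalization) of the deviator's cost change, so a global minimizer is a pure equilibrium. Your added remarks on where uniformity is indispensable and on weighted/multi-player-per-type demands are consistent with, and slightly more explicit than, the paper's treatment.
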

\begin{proof}
We start with considering delay functions of the form $ \sum_i a_i(e) f_i(e) + c(e)$. 
Let us define a potential function $\Psi(f) = \sum_e(d^2(e)+\sum_{j \in S(e)}(a_j(e)r_j(e))^2)$ where $d(e) = \sum^k_{i=1}a_i(e)r_i(e)+c(e)$ for every feasible flow $f$ and $S(e)$ represents the set commodities utilizing edge $e$.
We claim that a global minimum $f$ of the potential function $\Psi$ is also an equilibrium flow for $(G,R,\Phi)$.
Let $P_i$ be the path used by commodity $i$ in the solution that minimizes $\Psi(f)$.
Assume, for a contradiction, that shifting from path $P_i$ to path $\bar{P}_i$  by user $i$, creating the flow $\bar{f}$, strictly decreases its delay.
In other words,
\begin{eqnarray*}
\phantom{a} \Delta(f,\bar{f}) &=& \Phi_{\bar{P}_i}(\bar{f}) - \Phi_{P_i}(f) \\
\phantom{a} &=& \sum_{e \in \bar{P}_i \setminus P_i}\Phi_e(\ldots, \bar{f}_{i}(e)=r_i, \ldots ) - \sum_{e \in P_i \setminus \bar{P}_i}\Phi_e(f_1(e), \ldots,f_i(e)=r_i, \ldots f_k(e)) < 0.
\end{eqnarray*}
On the other hand, let us consider the potential function $\Psi$ as  user $i$ deviates.
For any edge $\bar{e}$ in $\bar{P}_i \setminus P_i$, we gain $(d(\bar{e}) + a_i(\bar{e})r_i(\bar{e}))^2 + a^2_i(\bar{e})r^2_i(\bar{e}) - d^2(\bar{e}) = 2d(\bar{e})a_i(\bar{e})r_i(\bar{e}) + 2a^2_i(\bar{e})r^2_i(\bar{e})$.
For any edge $e$ in $P_i \setminus \bar{P}_i$, we have a change of  $d^2(e) - ((d(e) - a_i(e)r_i(e))^2 - a^2_i(e)r^2_i(e)) =$  $2d(e)a_i(e)r_i(e)$.
Thus,
\begin{eqnarray*}
\phantom{a} \Psi(\bar{f}) - \Psi(f) &=& \sum_{\bar{e}}(2d(\bar{e})a_i(\bar{e})r_i(\bar{e}) + 2a^2_i(\bar{e})r^2_i(\bar{e})) - \sum_{e}2d(e)a_i(e)r_i(e) \\
&=& 2a_ir_i(\sum_{\bar{e}}(d(\bar{e}) + a_ir_i) - \sum_{e}d(e)) \\
&=& 2a_ir_i(\Phi_{\bar{P}_i}(\bar{f}) - \Phi_{P_i}(f)) \\
&=& 2a_ir_i\Delta(f,\bar{f})
\end{eqnarray*}
The first equality holds due to the definition of potential function $\Psi$.
Since we consider affine, uniform delay functions,  $a_i(\bar{e}) = a_i(e) = a_i$ and this results in the second equality.
Note that the third equality is satisfied because $a_ir_i$ is the exact amount of delay increase after the user $i$ deviation.
Since the delay decreases after user $i$ deviates, $\Delta(f,\bar{f})$ is negative.
Further, $a_i,r_i > 0$, and thus the potential function value at $\bar{f}$ is strictly less than the potential value at $f$, which contradicts that $f$ is a global minimum.
\end{proof}

\subsection{$k$-commodity atomic decomposable $\theta$-complete polynomial delay functions}\label{subsec-poa-general-atomic-poly}
In this section we provide bounds on the {\em price of anarchy} in $k$-commodity network flows with decomposable delay functions.

\subsubsection{A Lower Bound on PoA}
We show an example network in \Cref{fig:2-comma-poa-coefficient}.
\begin{figure}[ht!]
  \begin{center}
  \scalebox{0.55}
  {\includegraphics[width=155mm ,height=30mm]{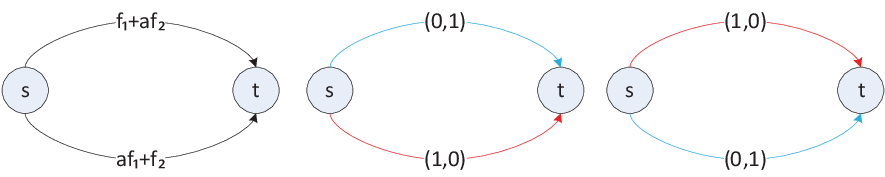}}\\
{\tiny \hspace*{0.0in}(a) Network  \hspace*{0.55in}(b)Nash Equilibrium \hspace*{0.4in}(c) Optimum}

  \caption{Affine and Decomposable Latency}
  \label{fig:2-comma-poa-coefficient}
  \end{center}
\end{figure}
In this network, the top edge $e$ is associated with the delay function
$\Phi_{e}(f_1(e),f_2(e)) = f_1 + af_2$ and the bottom edge $h$ is associated with the delay function
$\Phi_{h}(f_1(h),f_2(h)) = af_1 + f_2$.
Demand requirements are defined as $r_1 = 1$ and $r_2 = 1$.
The worst-case Nash equilibrium flow vector $f$ is achieved when $f_1(h) = 1$ and $f_2(e) = 1$ and consequently the
Nash equilibrium cost $C_{NE}(f) = 2a$. Conversely the social optimum flow $\hat{f}$ can be obtained from $\hat{f}_1(e) = 1$ and $\hat{f}_2(h) = 1$
and the social optimum cost $C_{SO}(\hat{f}) = 2$.
\begin{lemma}
\label{lem:LBAtomic}
Let $\mathcal{C}$ be a set of affine decomposable delay functions.
There exists $(G,R,\Phi)$, an atomic two-commodity network flow routing game instance where $\Phi \in \mathcal{C}$, such that the price of anarchy of atomic flow routing in $(G,R,\Phi)$ is $\Omega(a_{max})$ when $a_{max} \geq 2$.
\end{lemma}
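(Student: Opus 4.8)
The plan is to verify directly the quantities asserted in the paragraph preceding the lemma: that routing commodity~$1$ on the bottom edge $h$ and commodity~$2$ on the top edge $e$ is a pure Nash equilibrium of social cost $2a$, and that there is a feasible unsplittable flow of social cost~$2$. Since $C_{NE}(f)=\max_{f\in{\cal E}}C(f)$, exhibiting a single equilibrium lower-bounds $C_{NE}$, and exhibiting a single feasible flow upper-bounds $C_{SO}(\hat f)=\min_f C(f)$; combining the two yields $\mathrm{PoA}\ge 2a/2=a$, and because the only coefficients appearing are $1$ and $a$ with $a\ge 2$, we have $a_{max}=a$, so this is $\Omega(a_{max})$.

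First I would carry out the equilibrium check. In the profile $f$ with $f_1(h)=1$ and $f_2(e)=1$ (all other edge-flows zero), commodity~$1$ incurs delay $\Phi_h(1,0)=a\cdot 1+0=a$ on its chosen edge $h$; its only alternative strategy is the edge $e$, where it would see $\Phi_e(1,1)=1+a\cdot 1=1+a>a$, so it has no improving deviation. Symmetrically, commodity~$2$ on $e$ sees $\Phi_e(0,1)=0+a\cdot 1=a$, while deviating to $h$ would give $\Phi_h(1,1)=a+1>a$. Hence $f$ is a pure Nash equilibrium, and its social cost is $C(f)=f_1(h)\Phi_h(1,0)+f_2(e)\Phi_e(0,1)=a+a=2a$, so $C_{NE}(f)\ge 2a$.

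Next I would bound the optimum from above: the feasible flow $\hat f$ with $\hat f_1(e)=1$ and $\hat f_2(h)=1$ has cost $C(\hat f)=\hat f_1(e)\Phi_e(1,0)+\hat f_2(h)\Phi_h(0,1)=1\cdot 1+1\cdot 1=2$, so $C_{SO}(\hat f)\le 2$. Therefore $\mathrm{PoA}=C_{NE}(f)/C_{SO}(\hat f)\ge 2a/2=a=a_{max}$, which is $\Omega(a_{max})$. There is no real obstacle here beyond keeping the evaluations straight; the only points worth a remark are that for a lower bound one need not show the exhibited equilibrium is the worst-case one nor that $\hat f$ is globally optimal, and that the hypothesis $a_{max}\ge 2$ is precisely what makes $a_{max}$ coincide with $a$ (ruling out the degenerate regime $a<1$) so that the $\Omega(a_{max})$ conclusion is meaningful.
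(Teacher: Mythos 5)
Your proposal is correct and follows essentially the same route as the paper: the paper's justification is exactly the two-edge example described in the paragraph preceding the lemma, with the same equilibrium flow ($f_1(h)=f_2(e)=1$, cost $2a$) and the same feasible comparison flow ($\hat f_1(e)=\hat f_2(h)=1$, cost $2$), giving $\mathrm{PoA}\ge a = a_{max}$. You additionally spell out the deviation check that certifies the profile as a pure Nash equilibrium, a detail the paper leaves implicit.
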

Note that when $a_{max} = 1$, i.e. in the aggregate model where all co-efficients are the same, an example with 4 players has been shown where the price of anarchy is $1+ \phi \approx 2.618$ \cite{AAE2005}.

\subsubsection{An Upper Bound on PoA}
In this section we provide an upper bound on the {\em price of anarchy} for $k$-commodity network atomic flows.
We first consider affine delay functions, i.e., delay functions of the form $\Phi_i(e) = \sum_j a_j(e) f_j(e) + c(e)$. We use the Cauchy-Schwartz inequality to prove the following:
\begin{lemma}
\label{atomic-affine-cauchy}
Let $\mathcal{C}$ be a set of affine decomposable delay functions.
Let $f$ and $\hat{f}$ be a Nash equilibrium and a social optimum atomic flow, respectively, in a $k$-commodity network flow routing game  with delays in the class $\mathcal{C}$. Then
\[ C_{NE}(f) \leq (C_{SO}(\hat{f}))^{1/2}\sqrt{a_{max}}(C_{NE}(f))^{1/2} + C_{SO}(\hat{f}). \]
\end{lemma}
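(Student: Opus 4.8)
The plan is to combine the pure-Nash best-response condition for atomic flows with a single application of Cauchy-Schwartz, the only subtle point being how the additive constants $c(e)$ are accounted for. Throughout, write each decomposable affine delay as $\Phi_e(x)=L_e(x)+c(e)$ with $L_e(x)=\sum_j a_j(e)x_j$ (the coefficients $a_j(e)$ are positive integers, so $a_j(e)\ge 1$), and recall $C_{NE}(f)=\sum_e\Phi_e(f(e))||f(e)||_1$, $C_{SO}(\hat f)=\sum_e\Phi_e(\hat f(e))||\hat f(e)||_1$.

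\emph{Nash deviation inequality.} I would first let commodity $i$ carry its demand $r_i$ on the single path $P_i$ under $f$ and on $\hat P_i$ under $\hat f$. If commodity $i$ reroutes onto $\hat P_i$, only its own flow changes, so on each edge $e\in\hat P_i$ the delay rises by at most $a_i(e)r_i$; since $f$ is an equilibrium, $\sum_{e\in P_i}\Phi_e(f(e))\le\sum_{e\in\hat P_i}(\Phi_e(f(e))+a_i(e)r_i)$. Multiplying by $r_i$, summing over $i$, and converting path sums to edge sums via $\sum_{i:e\in P_i}r_i=||f(e)||_1$ and $\sum_{i:e\in\hat P_i}r_i=||\hat f(e)||_1$ (and $\hat f_i(e)=r_i$ when $e\in\hat P_i$, else $0$) yields
\[ C_{NE}(f)\ \le\ \sum_{e\in E}\Phi_e(f(e))\,||\hat f(e)||_1\ +\ \sum_{e\in E}\sum_i a_i(e)\,\hat f_i(e)^2 . \]

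\emph{Absorbing the quadratic and constant terms.} Splitting $\sum_e\Phi_e(f(e))||\hat f(e)||_1=\sum_e L_e(f(e))||\hat f(e)||_1+\sum_e c(e)||\hat f(e)||_1$, the trick is that the last sum together with $\sum_e\sum_i a_i(e)\hat f_i(e)^2$ fits inside a \emph{single} copy of $C_{SO}(\hat f)$: since $\sum_i a_i(e)\hat f_i(e)^2\le\big(\sum_j a_j(e)\hat f_j(e)\big)\big(\sum_i\hat f_i(e)\big)=L_e(\hat f(e))||\hat f(e)||_1$, summing gives $\sum_e c(e)||\hat f(e)||_1+\sum_e\sum_i a_i(e)\hat f_i(e)^2\le\sum_e(L_e(\hat f(e))+c(e))||\hat f(e)||_1=C_{SO}(\hat f)$. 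I expect this bookkeeping to be the main obstacle: a naive treatment would produce $+2C_{SO}(\hat f)$, and it is precisely the fact that the $\hat f_i(e)^2$ terms are dominated by the linear part of $C_{SO}$ that leaves room for the $c(e)$ terms.

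\emph{Cauchy-Schwartz on the cross term.} For the remaining term $\sum_e L_e(f(e))||\hat f(e)||_1$ (edges with $||f(e)||_1=0$ contribute $0$), write each summand as $\sqrt{L_e(f(e))||f(e)||_1}\cdot\sqrt{L_e(f(e))||\hat f(e)||_1^2/||f(e)||_1}$ and apply Cauchy-Schwartz:
\[ \sum_e L_e(f(e))\,||\hat f(e)||_1\ \le\ \Big(\sum_e L_e(f(e))\,||f(e)||_1\Big)^{1/2}\Big(\sum_e \tfrac{L_e(f(e))\,||\hat f(e)||_1^2}{||f(e)||_1}\Big)^{1/2}. \]
The first factor is $\le C_{NE}(f)^{1/2}$ since $L_e\le\Phi_e$. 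For the second, $L_e(f(e))/||f(e)||_1\le\max_j a_j(e)$, and by the definition of $a_{max}$ together with $a_j(e)\ge 1$ one has $\max_j a_j(e)\,||\hat f(e)||_1\le a_{max}\min_j a_j(e)\,||\hat f(e)||_1\le a_{max}L_e(\hat f(e))\le a_{max}\Phi_e(\hat f(e))$, so $\sum_e L_e(f(e))||\hat f(e)||_1^2/||f(e)||_1\le a_{max}C_{SO}(\hat f)$. Putting the three steps together gives $C_{NE}(f)\le (C_{SO}(\hat f))^{1/2}\sqrt{a_{max}}\,(C_{NE}(f))^{1/2}+C_{SO}(\hat f)$, as claimed; the same template, with Hölder replacing Cauchy-Schwartz, would handle the $\theta$-polynomial case.
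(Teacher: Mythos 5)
Your proposal is correct and follows essentially the same route as the paper: the atomic Nash deviation inequality with the $a_i(e)r_i$ correction terms, absorption of the $\sum_i a_i(e)\hat f_i(e)^2$ and $c(e)\|\hat f(e)\|_1$ terms into a single $C_{SO}(\hat f)$, and one application of Cauchy--Schwartz to the cross term $\sum_e L_e(f(e))\|\hat f(e)\|_1$. The only (immaterial) difference is the weighting inside Cauchy--Schwartz: the paper splits the summand as $\|\hat f(e)\|_1\cdot L_e(f(e))$ and places $a_{max}$ on the equilibrium factor, whereas you split it so that $a_{max}$ lands on the optimum factor; both yield the identical bound, and your bookkeeping of the constant terms is in fact slightly cleaner than the paper's.
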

\begin{proof}
Let $f$ and $\hat{f}$ be a Nash equilibrium flow and a social optimum flow, respectively for this instance.
Let $P_i$ and $\hat{P}_i$ be paths, utilized by commodity $i$, in the Nash equilibrium (NE) and  the social optimum solution, respectively.
Then by the fact that $P_i$ is used at the NE solution,
\[ \sum_{e \in P_i}(a_1(e)f_1(e) + \ldots + a_k(e)f_k(e) + c(e)) \leq \sum_{e \in \hat{P}_i}(a_1(e)f_1(e) + \ldots + a_k(e)f_k(e) + c(e) + a_i(e)r_i). \]
The inequality above holds due to the variational inequality.
By multiplying both sides of the above inequality by $r_i$ and summing over all commodities, we have the following.
\[ \sum_ir_i\sum_{e \in  P_i}(a_1(e)f_1(e) + \ldots + a_k(e)f_k(e) + c(e)) \leq \sum_ir_i\sum_{e \in \hat{P}_i}(a_1(e)f_1(e) + \ldots + a_k(e)f_k(e) + c(e) + a_i(e)r_i) \]
or equivalently, on summing over all edges instead of paths, using $f_i(e)=r_i$ and interchanging the order of the summations, we get
\[ \sum_{e \in \cup_i P_i}||{f}(e)||_1(a_1(e)f_1(e) + \ldots + a_k(e)f_k(e) + c(e)) \leq \sum_{e \in \cup_i \hat{P}_i}||\hat{f}(e)||_1 (a_1(e)f_1(e) + \ldots + a_k(e)f_k(e) + c(e)) \]
\[ + \sum_i\sum_{e \in \hat{P}_i} a_i(e)\hat{f_i}^2 \].

Then, cost at Nash equilibrium is bounded above as follows:
\begin{eqnarray*}
\phantom{a} C_{NE}(f) &\leq& \sum_e||\hat{f}(e)||_1(a_1(e)f_1(e) + \ldots + a_k(e)f_k(e)) + \sum_e(||\hat{f}(e)||_1c(e) + \sum_i\hat{f}^2_i(e)a_i(e)) \\
\phantom{a} &\leq& \sum_e||\hat{f}(e)||_1(a_1(e)f_1(e) + \ldots + a_k(e)f_k(e)) + \sum_e||\hat{f}(e)||_1\sum_i(a_i(e)\hat{f}_i(e)+c(e)) \\
\phantom{a} &\leq& (\sum_e||\hat{f}(e)||^2_1)^{1/2}(\sum_e(a_1(e)f_1(e) + \ldots + a_k(e)f_k(e))^2)^{1/2} + \\
\phantom{a} & & \sum_e||\hat{f}(e)||_1\sum_i(a_i(e)\hat{f}_i(e)+c(e)) \\
\phantom{a} &\leq& \sqrt{(\sum_e||\hat{f}(e)||^2_1)(\max_{i,e}a_i(e))\sum_e(f_1(e) + \ldots + f_k(e))(a_1(e)f_1(e) + \ldots + a_k(e)f_k(e))} \\
\phantom{a} & & + \sum_e||\hat{f}(e)||_1\sum_i(a_i(e)\hat{f}_i(e)+c(e)) \\
\phantom{a} &\leq& (C_{SO}(\hat{f}))^{1/2}\sqrt{\max_{i,e}a_i(e)C_{NE}(f)} + C_{SO}(\hat{f})
\end{eqnarray*}
The third inequality can be obtained by utilizing the Cauchy-Schwartz inequality. The last inequality depends on the co-efficients of the affine function being greater than or equal to one. 
\end{proof}
Using the above lemma we can determine the price of anarchy for the set of affine decomposable delay functions.
\begin{theorem}\label{thm-k-atomic-decomposable-affine}
Let $\mathcal{C}$ be a set of affine, decomposable delay functions.
If $(G,R,\Phi)$ is a $k$-commodity atomic network flow routing game instance with the delay function $\Phi$ in $\mathcal{C}$, then the price of anarchy of atomic flow routing in $(G,R,\Phi)$ is at most 
$a_{max}+2$
where $a_{max}$ represents the maximum coefficient in the affine delay function.
\end{theorem}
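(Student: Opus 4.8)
The plan is to obtain the stated bound as a purely algebraic consequence of Lemma~\ref{atomic-affine-cauchy}, which already carries all the combinatorial content (the Nash/variational inequality together with the Cauchy--Schwarz estimate on the cross terms). So the first step is to invoke that lemma for the worst-case Nash equilibrium flow $f$ and a social optimum flow $\hat{f}$ of the instance $(G,R,\Phi)$, giving
\[ C_{NE}(f) \leq \sqrt{a_{max}}\,\bigl(C_{NE}(f)\bigr)^{1/2}\bigl(C_{SO}(\hat{f})\bigr)^{1/2} + C_{SO}(\hat{f}). \]
If $C_{SO}(\hat{f}) = 0$ then $C_{NE}(f) = 0$ as well and $PoA = 1$, which lies below the claimed quantity; so one may assume $C_{SO}(\hat{f}) > 0$ and divide the displayed inequality by it.

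The second step is to set $x = \bigl(C_{NE}(f)/C_{SO}(\hat{f})\bigr)^{1/2} \geq 0$, so that the inequality becomes $x^{2} - \sqrt{a_{max}}\,x - 1 \leq 0$. The left-hand side is an upward-opening parabola in $x$ whose unique positive root is $x^{*} = \frac{1}{2}\bigl(\sqrt{a_{max}} + \sqrt{a_{max}+4}\bigr)$, hence any nonnegative $x$ satisfying the inequality obeys $x \leq x^{*}$. Squaring yields $PoA = C_{NE}(f)/C_{SO}(\hat{f}) = x^{2} \leq (x^{*})^{2}$.

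The final step is the routine simplification
\[ (x^{*})^{2} = \frac{a_{max} + 2\sqrt{a_{max}(a_{max}+4)} + (a_{max}+4)}{4} = \frac{a_{max} + \sqrt{a_{max}^{2} + 4a_{max}}}{2} + 1, \]
which is exactly the bound in the statement; since $f$ ranged over worst-case Nash equilibria and $\hat f$ over optima, this bounds the price of anarchy of the instance. I do not expect a genuine obstacle here: the only points needing care are dispatching the degenerate case $C_{SO}(\hat f)=0$ first and selecting the \emph{positive} root of the quadratic. The delicate estimate — bounding $\sum_{e}\|\hat{f}(e)\|_{1}\bigl(a_{1}(e)f_{1}(e)+\cdots+a_{k}(e)f_{k}(e)\bigr)$ by $\sqrt{a_{max}}\,(C_{SO}(\hat f))^{1/2}(C_{NE}(f))^{1/2}$ via Cauchy--Schwarz after replacing path costs by edge costs — has already been carried out in Lemma~\ref{atomic-affine-cauchy}, which we take as given.
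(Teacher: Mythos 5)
Your proposal is correct and follows essentially the same route as the paper: invoke Lemma~\ref{atomic-affine-cauchy}, divide by $C_{SO}(\hat f)$, reduce to the quadratic inequality $x^2 - \sqrt{a_{max}}\,x - 1 \leq 0$ in $x = (C_{NE}(f)/C_{SO}(\hat f))^{1/2}$, and bound $x$ by the positive root. The only additions are the (sensible) handling of the degenerate case $C_{SO}(\hat f)=0$ and the explicit justification for taking the positive root, which the paper leaves implicit.
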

\begin{proof}
From \Cref{atomic-affine-cauchy}
\[ C_{NE}(f) - a_{max}^{1/2}\left( C_{NE}(f)C_{SO}(\hat{f}) \right)^{1/2} - C_{SO}(\hat{f}) \leq 0, \]
and dividing by $C_{SO}(\hat{f})$ gives
\[ \left(\frac{C_{NE}(f)}{C_{SO}(\hat{f})}\right) - a_{max}^{1/2}\left(\frac{C_{NE}(f)}{C_{SO}(\hat{f})}\right)^{1/2} - 1 \leq 0 \]
Then, we obtain a quadratic equation $x^2 - a_{max}^{1/2}x - 1 \leq 0$ where $x = \left(\frac{C_{NE}(f)}{C_{SO}(\hat{f})}\right)^{1/2}$ which leads to PoA = $\left(\frac{(a_{max}+4)^{1/2}+(a_{max})^{1/2}}{2}\right)^2 = \frac{a_{max}+(a_{max}^2+4a_{max})^{1/2}}{2} + 1 \leq a_{max} +2$.
\end{proof}

We can extend the above proof to  find an upper bound on the price of anarchy in $k$-commodity atomic networks with decomposable polynomial delay functions. 
\begin{theorem}\label{thm-k-atomic-decomposable-poly}
Let $\mathcal{C}$ be a set of decomposable $\theta$-complete polynomial delay functions.
If $(G,R,\Phi)$ is a $k$-commodity atomic network flow routing game  instance with the delay function  $\Phi$  in $\mathcal{C}$, then the price of anarchy of atomic flow routing in $(G,R,\Phi)$ is 
$O(a_{max}^{\theta+2}N_{\Phi}^{\theta+1})$
\end{theorem}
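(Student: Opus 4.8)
The plan is to lift the affine atomic argument (Lemma~\ref{atomic-affine-cauchy} and Theorem~\ref{thm-k-atomic-decomposable-affine}) to degree $\theta$ in the $(\lambda,\mu)$-smoothness style of \cite{AAE2005}, using H\"older's inequality where the affine proof uses Cauchy--Schwartz. Let $f$ be the Nash equilibrium flow, $\hat f$ an optimal flow, and $P_i,\hat P_i$ the paths used by commodity $i$ under $f$ and $\hat f$. Since commodity $i$ cannot profit by switching to $\hat P_i$, and since the extra load this would put on an edge $e\in\hat P_i$ is exactly $r_i=\hat f_i(e)$, multiplying the atomic Nash inequality by $r_i$, summing over $i$, and converting path sums to edge sums gives
\[
C_{NE}(f)\ \le\ \sum_{e\in E}\sum_i \hat f_i(e)\,\Phi_e\bigl(f_1(e),\dots,f_i(e)+\hat f_i(e),\dots,f_k(e)\bigr).
\]

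Next I would split the right-hand side. The additive constants give $\sum_e\sum_i\hat f_i(e)c(e)=\sum_e\|\hat f(e)\|_1 c(e)\le C_{SO}(\hat f)$. For $h^e$, I would bump only the $i$-th coordinate and write $h^e(\dots,f_i(e)+\hat f_i(e),\dots)=h^e(f(e))+\big[\textrm{marginal}_i(e)\big]$, where $\textrm{marginal}_i(e)$ collects the terms of the binomial expansion that contain $\hat f_i(e)$. Summing over $i$ then yields $\sum_i\hat f_i(e)h^e(\dots)=\|\hat f(e)\|_1\,h^e(f(e))+\sum_i\hat f_i(e)\,\textrm{marginal}_i(e)$. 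The point of the regrouping is that in the first piece the polynomial $h^e$ stays intact, while the marginal piece is bounded by $O(N_\Phi+k\theta)$ monomials of degree $\theta+1$ in $f(e),\hat f(e)$ (absorbing $\theta$-dependent binomial factors), each with $\hat f$-degree at least two and coefficient controlled by $a_{max}$.

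For the first piece I would apply H\"older with exponents $\theta+1$ and $(\theta+1)/\theta$, keeping $\|\hat f(e)\|_1$ on one side and $h^e(f(e))$ on the other --- crucially, \emph{not} decomposing $h^e$ into individual monomials. Using $\|\hat f(e)\|_1^{\theta+1}\le k^{\theta-1}\|\hat f(e)\|_1\Phi_e(\hat f(e))$ (power-mean together with $\sum_j\hat f_j(e)^\theta\le h^e(\hat f(e))$, which follows from $\theta$-completeness and coefficients $\ge 1$) one side sums to $\big(k^{\theta-1}C_{SO}(\hat f)\big)^{1/(\theta+1)}$; using $h^e(f(e))\le\big(\sum_\ell g_\ell(e)\big)\|f(e)\|_1^\theta$, hence $h^e(f(e))^{(\theta+1)/\theta}\le\big(\sum_\ell g_\ell(e)\big)^{1/\theta}\|f(e)\|_1\,\Phi_e(f(e))$, the other side sums to $\big(N_\Phi a_{max}\big)^{1/(\theta+1)}C_{NE}(f)^{\theta/(\theta+1)}$ (after normalising the smallest coefficient to $1$). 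This is exactly where, as with the $\sqrt{a_{max}}$ in Lemma~\ref{atomic-affine-cauchy}, $a_{max}$ enters only at its $1/(\theta+1)$-th power. Treating the $O(N_\Phi+k\theta)$ marginal monomials the same way --- each has an $\hat f$-factor to spare, so the H\"older estimate puts even more weight on $C_{SO}(\hat f)$ --- and collecting everything gives
\[
C_{NE}(f)\ \le\ \Lambda\,C_{NE}(f)^{\frac{\theta}{\theta+1}}\,C_{SO}(\hat f)^{\frac{1}{\theta+1}}+C_{SO}(\hat f),\qquad \Lambda=O\!\big(a_{max}^{1/(\theta+1)}(N_\Phi+k\theta)\big).
\]
Dividing by $C_{SO}(\hat f)$, setting $x=\big(C_{NE}(f)/C_{SO}(\hat f)\big)^{1/(\theta+1)}$ and solving $x^{\theta+1}\le\Lambda x^\theta+1$ exactly as in Theorem~\ref{thm-k-nonatomic-heterogeneous-poly} gives $x\le\Lambda+1$, hence $\mathrm{PoA}=x^{\theta+1}=O\big(a_{max}(N_\Phi+k\theta)^{\theta+1}\big)$.

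The hard part is the coefficient bookkeeping. If one bounds each monomial of $h^e$ by $a_{max}$ up front and then splits the mixed terms by a plain weighted AM--GM, two things go wrong at once: the power of $a_{max}$ inflates to $\theta+1$, and the coefficient of $C_{NE}(f)$ on the right-hand side is multiplied by the number of monomials and so exceeds $1$, leaving the self-referential inequality useless. The remedy --- keeping the H\"older product form $C_{NE}^{\theta/(\theta+1)}C_{SO}^{1/(\theta+1)}$ and never breaking $h^e$ apart --- forces $C_{NE}(f)$ to appear only at exponent $\theta/(\theta+1)<1$ and $a_{max}$ only at exponent $1/(\theta+1)$. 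A secondary chore is controlling, with the binomial coefficients, the number of degree-$(\theta+1)$ terms produced by the atomic bump so that it stays $O(N_\Phi+k\theta)$; once that and the scalar estimates $\sum_e f_j(e)^{\theta+1}\le C_{NE}(f)$, $\sum_e\hat f_j(e)^{\theta+1}\le C_{SO}(\hat f)$ are in place, the remaining manipulations are routine.
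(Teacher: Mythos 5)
The paper offers no proof of this theorem at all --- only the remark that the affine argument of Lemma~\ref{atomic-affine-cauchy} ``extends'' and that ``the proof technique is the same as the proof in \cite{AAE2005}'' --- so your reconstruction can only be judged on its own terms. The architecture you choose is certainly the intended one: the atomic Nash inequality with the bumped flow $f_i(e)+\hat f_i(e)$, conversion from path sums to edge sums, H\"older with exponents $\theta+1$ and $(\theta+1)/\theta$ in place of Cauchy--Schwartz, and the self-referential inequality $x^{\theta+1}\le \Lambda x^{\theta}+1$ solved as in Appendix~\ref{appenxix-k-non}. Your treatment of the main term $\sum_e\|\hat f(e)\|_1h^e(f(e))$ is correct and careful: the estimates $\|\hat f(e)\|_1^{\theta+1}\le k^{\theta-1}\|\hat f(e)\|_1\Phi_e(\hat f(e))$ and $h^e(f(e))^{(\theta+1)/\theta}\le (N_\Phi a_{max})^{1/\theta}\|f(e)\|_1\Phi_e(f(e))$ use exactly the $\theta$-completeness assumption with coefficients at least $1$, and give a contribution $(k^{\theta-1}N_\Phi a_{max})^{1/(\theta+1)}$ to $\Lambda$, comfortably within the target. (For comparison, the paper's own non-atomic Lemma~\ref{heterogeneous-poa-poly} pulls $a_{max}$ out at full power up front and its Theorem~\ref{thm-k-nonatomic-heterogeneous-poly} accordingly states $(2a_{max}kN_\Phi)^{\theta+1}$; you are being more careful than the paper here.)

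The genuine gap is in the marginal terms created by the atomic bump. First, the binomial expansion of $(f_i(e)+\hat f_i(e))^{\theta_i^{\ell}}$ carries total coefficient mass $2^{\theta_i^{\ell}}-1$, not $O(\theta_i^{\ell})$, so the assertion that the marginal piece is ``$O(N_\Phi+k\theta)$ monomials \ldots absorbing $\theta$-dependent binomial factors'' is unsubstantiated: a factor $2^{\theta}$ inside $\Lambda$ becomes $2^{\theta(\theta+1)}$ in $\Lambda^{\theta+1}$ and cannot be hidden in $O(a_{max}(N_\Phi+k\theta)^{\theta+1})$. Second, a cross monomial $g_{\ell}\hat f_i^{\,j+1}f_i^{\theta_i^{\ell}-j}\prod_{m\ne i}f_m^{\theta_m^{\ell}}$ is not a sub-polynomial of $h^e(f)$ (it has the wrong degree in $f$), so the device that earned you $a_{max}^{1/(\theta+1)}$ on the main term does not apply; here $g_{\ell}\le a_{max}$ enters as a flat multiplier and hence at power $\theta+1$ in the final bound. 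Both defects are repairable, but each needs an ingredient your sketch does not supply: either the scalar inequality $y(x+y)^{\theta}\le c_1(\theta)x^{\theta+1}+c_2(\theta)y^{\theta+1}$ with $c_2<1$ that actually drives \cite{AAE2005}, or a preliminary reduction to the case $\rho:=C_{NE}(f)/C_{SO}(\hat f)\ge \theta^{\theta+1}$ (otherwise the theorem holds trivially), under which $\sum_{j\ge1}\binom{\theta_i^{\ell}}{j}\rho^{-j/(\theta+1)}\le e-1$ and the binomial mass collapses to a constant. You should also state explicitly the reduction $C_{NE}(f)\ge C_{SO}(\hat f)$, which you use tacitly when collapsing $C_{NE}^{(\theta+1-b)/(\theta+1)}C_{SO}^{b/(\theta+1)}$ with $b\ge2$ into $C_{NE}^{\theta/(\theta+1)}C_{SO}^{1/(\theta+1)}$. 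Even with these repairs, the exponent of $a_{max}$ in the statement appears optimistic; your route would establish $O\bigl(a_{max}^{\theta+1}(N_\Phi+k\theta)^{\theta+1}\bigr)$ unless the cross-term coefficients are handled by a further argument that neither you nor the paper provides.
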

\begin{proof}
Using the variational inequality applicable in the case of atomic flow routing, we get, for the $i$th commodity:
\begin{equation}
\label{eq:varforlemma-k-atomic}
\sum_{e \in P_i} \Phi_e(f_e) \leq \sum_{e \in P'_i} \Phi_e(\hat{f}_e)
\end{equation}
where $f$ is the Nash equilibrium flow and $\hat{f}$ is the flow function with the $i$th commodity using $P'_i$ instead of $P_i$ to route $r_i$ units of flow.
The introduction of the $i$th commodity to edges in the path $P'_i$ adds a number of terms to the delay. 
Consider the following sum of delay terms on edge $e$ when $\hat{f}_i(e)$ is nonzero. 
$$\sum_{ 1 \leq \ell \leq L^i_e}g_{\ell}(e)\hat{f}^{\theta_1^\ell}_{1}\hat{f}^{\theta_2^\ell}_{2}\cdots \hat{f}^{\theta_k^\ell}_{k} $$
where $\hat{f}_j(e)=f_j(e), j \neq i$ is nonzero and $\hat{f}_i(e)=r_i$.
Consider one such term $T= g_{\ell}(e)f^{\theta_1^\ell}_{1}f^{\theta_2^\ell}_{2}\cdots f^{\theta_k^\ell}_{k}$.
If $i = \arg \max_{j: e \in P_j } r_j$ then
$T \leq g_{\ell}(e) r_i^{\theta}$ (note that $r_i \in \mathbb{N}$). 
Alternatively,
$T \leq g_{\ell}(e) f_m^{\theta}$ where $m = \arg \max_{j: e \in P_j } r_j$. In both cases,  $T \leq a_{max} \Phi_e(f) $. A similar analysis holds for the terms introduced by the $i$th commodity in the delay of other commodities. Since there are $N_{\Phi}$ terms in the delay function, we get:
\begin{equation}
\label{eq:lemma-k-com-PoA}
\Phi_e(\hat{f}) \leq a_{max}N_{\Phi} \Phi_e(f) +a_{max}N_{\Phi} r_i^{\theta}
\end{equation}
Multiplying both sides of \Cref{eq:varforlemma-k-atomic} by $r_i$, applying
\Cref{eq:lemma-k-com-PoA} and summing over all commodities and edges we get:
\begin{equation*}
\begin{aligned}
C_{NE}(f) \phantom{a}&\leq  a_{max}N_{\Phi} \left [
\sum_{e} ||\hat{f}(e)||_1 \Phi_e(f) +  \sum_e \sum_i \hat{f}_i(e)^{\theta +1} \right ] \\        
\phantom{a} &\leq a_{max}N_{\Phi}\left[
(\sum_{e} (||\hat{f}(e)||_1)^{\theta+1})^{\frac{1}{\theta+1}} ( \sum_e \Phi_e(f)^{\frac{\theta+1}{\theta}})^{\frac{\theta}{\theta+1}}  +  \sum_e \sum_i \hat{f}_i(e)^{\theta +1} \right]\\
\phantom{a} &\leq a_{max}N_{\Phi}\left[ 
C_{SO}(\hat{f})^{\frac{1}{\theta+1}} \cdot
( a_{max}^{\frac{1}{\theta}} \sum_e ||f(e)||_1 \Phi_e(f)  )^{\frac{\theta}{\theta+1}}
+ C_{SO}(\hat{f}) \right]
\end{aligned}
\end{equation*}
since
$$ \sum_e \Phi_e(f)^{\frac{\theta+1}{\theta}} \leq  \sum_e \left(  a_{max}(|| f(e) ||_1)^{\theta} \Phi_e(f)^{\theta} \right)^{1/\theta} . $$
Thus
\begin{equation*}
C_{NE}(f)\leq a_{max}N_{\Phi}\left[ 
C_{SO}(\hat{f})^{\frac{1}{\theta+1}} \cdot
( a_{max}^{\frac{1}{\theta}}C_{NE})^{\frac{\theta}{\theta+1}} \right] + C_{SO}(\hat{f})
\end{equation*}
or equivalently
\begin{equation*}
\begin{aligned}
 \frac{C_{NE}(f)}{C_{SO}(\hat{f})} \phantom{a} &\leq 
a_{max}^{\frac{\theta+2}{\theta+1}}N_{\Phi}
\left( \frac{C_{NE}(f)}{C_{SO}(\hat{f})}\right)^{\frac{\theta}{\theta+1}} +1\\
\phantom{a} &\leq (a_{max}^{\frac{\theta+2}{\theta+1}}N_{\Phi}+1)
\left( \frac{C_{NE}(f)}{C_{SO}(\hat{f})}\right)^{\frac{\theta}{\theta+1}} 
\end{aligned}
\end{equation*}
The bound on the price of anarchy is thus 
$O(a_{max}^{\theta+2}N_{\Phi}^{\theta+1})$.
\end{proof}
\subsection{Improved PoA for 
Affine Delays}\label{sec-poa-2com-atomic}
In this subsection we provide improved bounds for  atomic flows when the delay function
is affine and uniform modulo differences  in the constant term, i.e., $\Phi(e) = \sum_i a_i f_i(e) + c(e)$.
\subsubsection{Lower Bounds on PoA for Uniform Delay Functions}
In fact, even for the class of affine functions that does not have a constant term, we provide an example (\Cref{fig:comm-dependent-poa1}) which shows that the PoA is $\Omega(\sqrt{a})$.
In this example the delay function associated with the edges
is $af_1 + f_2 + af_3$. The requirements are $r_1=r_3=1$ and $r_2 = \sqrt{a}$.
  \begin{figure}[ht]
   \vspace*{-2.25in}
  \begin{center}
  \scalebox{0.65}
    {\includegraphics[width=150mm ,height=230mm]{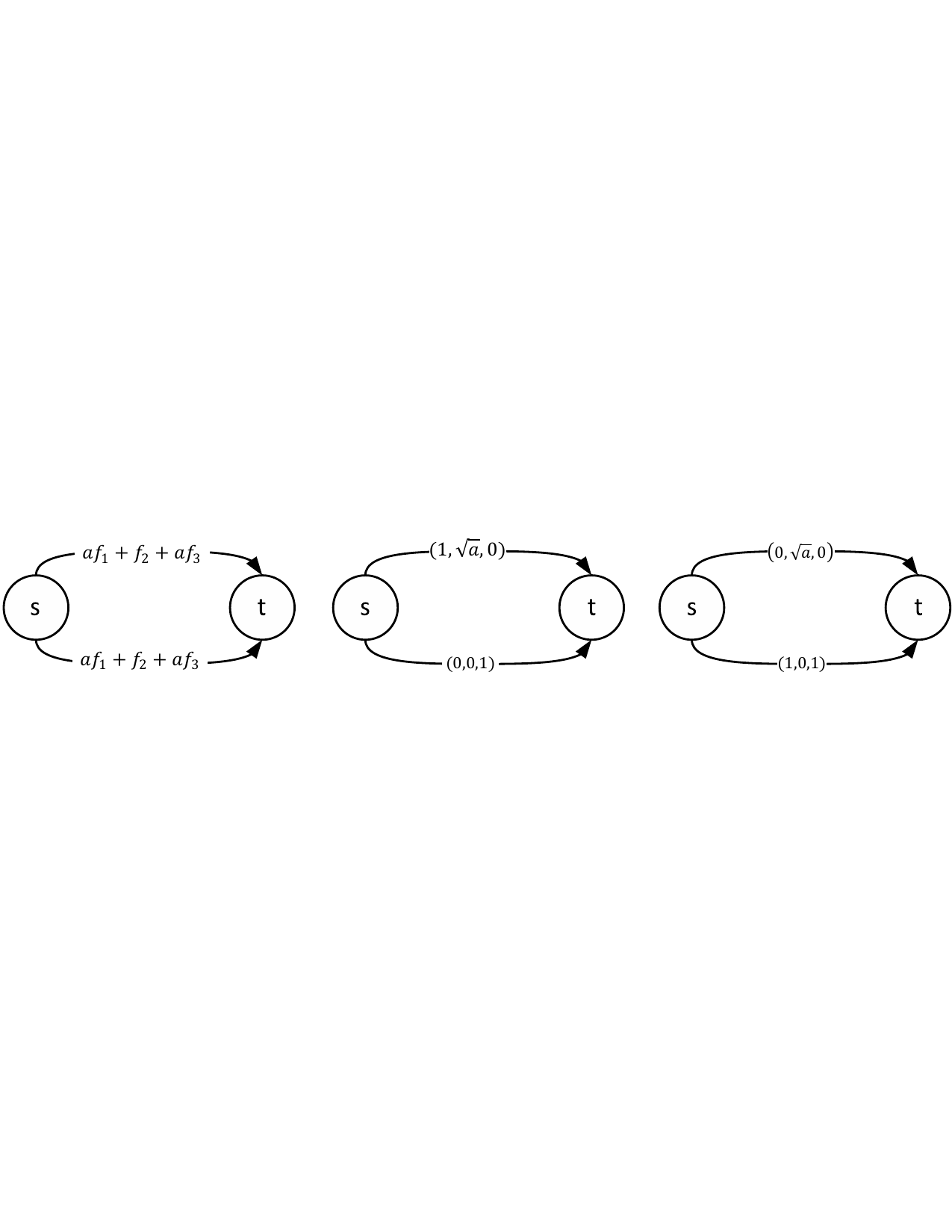}}\\
   \vspace*{-2.25in}
{\tiny \hspace*{0.0in}(a) Network  \hspace*{0.5in}(b)Nash Equilibrium \hspace*{0.35in}(c) Optimum}

  \caption{Uniform and affine latency}
  \label{fig:comm-dependent-poa1}
  \end{center}
  \end{figure}
The Nash equilibrium solution has flow $f_1=1$ and $f_2=\sqrt{a}$ on the top edge. It has flow $f_3=1$ on the bottom edge with cost $C_{NE}(f)= a\sqrt{a} + 3a + \sqrt{a}$. The optimum flow solution is $\hat{f}_1$ and $\hat{f}_3$ on the bottom edge and $\hat{f}_2$ on the top edge with $C_{SO}(\hat{f})=5a$. Thus $POA=\Omega(\sqrt{a})$.

For functions of the form  $af_1 + f_2+c(e)$ with $a\geq1$ and $c(e)$ a nonnegative constant, we provide a similar example that shows a PoA that is  $\Omega(\sqrt{a})$.
  \begin{figure}[ht]
  \begin{center}
  \scalebox{0.55}
  {\includegraphics[width=155mm ,height=30mm]{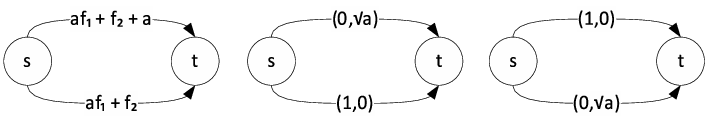}}\\
  {\tiny \hspace*{0.0in}(a) Network  \hspace*{0.5in}(b)Nash Equilibrium \hspace*{0.35in}(c) Optimum}

  \caption{Uniform and affine latency}
  \label{fig:comm-dependent-poa}
  \end{center}
  \end{figure}
In this network, the
top edge $e$ is associated with the delay function
$\Phi_{e}(f_1(e),f_2(e)) = af_1 + f_2 + a$ and the bottom edge $h$ is associated with the delay function
$\Phi_{h}(f_1(h),f_2(h)) = af_1 + f_2$.
Demand requirements are defined as $r_1 = 1$ and $r_2 = (a)^{1/2}$.
The Nash equilibrium(NE) flow vector $f$ has the flow $f_1(h) = 1$ and $f_2(e) = (a)^{1/2}$ and consequently the
cost of NE, $C_{NE}(f) = a((a)^{1/2}+2)$. Conversely the social optimum flow $\hat{f}$
can be obtained from $\hat{f}_1(e) = 1$ and $\hat{f}_2(h) = (a)^{1/2}$
and the social optimum cost $C_{SO}(\hat{f}) = 3a$.
The price of anarchy is thus $\frac{a((a)^{1/2}+2)}{3a} = \frac{(a)^{1/2}+2}{3}$.

\begin{lemma}\label{lower-bound-affine-uniform-atomic}
Let $\mathcal{C}$ be a set of affine delay functions that are uniform (modulo differences in the constant term). 
There exists $(G,R,\Phi)$, an atomic 2-commodity network flow network routing game instance where $\Phi \in \mathcal{C}$, such that the price of anarchy of atomic flow routing in $(G,R,\Phi)$ is at least $\frac{(a_{max})^{1/2}+2}{3}$, i.e., $\Omega(\sqrt{a_{max}})$.
\end{lemma}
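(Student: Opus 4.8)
The instance is the one already displayed just above the statement and in Figure~\ref{fig:comm-dependent-poa}: a single source $s$, a single sink $t$, and two parallel $s$--$t$ edges $e,h$ carrying the uniform affine delays $\Phi_e(f_1,f_2)=af_1+f_2+a$ and $\Phi_h(f_1,f_2)=af_1+f_2$, with atomic (weighted) demands $r_1=1$ and $r_2=\sqrt{a}$. The whole argument is verification: exhibit one Nash equilibrium, upper bound $C_{SO}$ by one concrete feasible flow, and take the ratio.

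First I would confirm that the flow $f$ that routes commodity $1$ entirely on $h$ and commodity $2$ entirely on $e$ is a pure Nash equilibrium. Under $f$, commodity $1$ experiences delay $\Phi_h(1,0)=a$, whereas switching to $e$ would give it $\Phi_e(1,\sqrt a)=2a+\sqrt a>a$; commodity $2$ experiences $\Phi_e(0,\sqrt a)=a+\sqrt a$, whereas switching to $h$ would give it $\Phi_h(1,\sqrt a)=a+\sqrt a$, i.e. no strict improvement. Hence no player has a strictly improving unilateral deviation, so $f$ is a Nash equilibrium, and $C_{NE}\ge C(f)=1\cdot a+\sqrt a\,(a+\sqrt a)=a(\sqrt a+2)$.

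Next I would exhibit the feasible flow $\hat f$ routing commodity $1$ on $e$ and commodity $2$ on $h$: its cost is $1\cdot\Phi_e(1,0)+\sqrt a\cdot\Phi_h(0,\sqrt a)=2a+a=3a$, so $C_{SO}\le 3a$ (enumerating the four path assignments shows $\hat f$ is in fact optimal once $a$ exceeds a small constant, but only the inequality is needed). Combining, $PoA\ge a(\sqrt a+2)/(3a)=(\sqrt a+2)/3=\Omega(\sqrt a)$, which is the claim. There is no genuine obstacle; the only point requiring a little care is that commodity $2$'s deviation yields an \emph{equal} delay, so the equilibrium is weak and one must invoke the standard convention that a Nash equilibrium forbids only strictly improving deviations. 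If a strict equilibrium is preferred, replacing the constant $a$ on edge $e$ by $a-\varepsilon$ for arbitrarily small $\varepsilon>0$ makes commodity $2$'s deviation strictly worse while keeping the ratio $\Omega(\sqrt a)$.
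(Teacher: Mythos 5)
Your proposal is correct and is essentially identical to the paper's own argument: the paper proves this lemma by exhibiting exactly the same two-edge instance with demands $r_1=1$, $r_2=\sqrt{a}$, the same equilibrium flow of cost $a(\sqrt{a}+2)$, and the same feasible flow of cost $3a$. You go slightly further than the paper by explicitly checking the deviation inequalities and flagging that commodity $2$'s deviation is only weakly non-improving (with the $a-\varepsilon$ perturbation as a fix), a detail the paper leaves implicit.
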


We also show an almost matching upper bound in the following.


\subsubsection{Almost tight upper bound for Uniform Functions}
\begin{theorem}\label{thm-2-atomic-uniform-affine}
Let $\mathcal{C}$ be a set of  affine delay functions that are uniform up to differences in the constant term.
If $(G,R,\Phi)$ is an atomic 2-commodity network flow routing game instance with delay functions in $\mathcal{C}$, then the price of anarchy of atomic flow routing in $(G,R,\Phi)$
is at most $\sqrt{a_{max}} + 2$ and this is  almost tight as shown in \Cref{lower-bound-affine-uniform-atomic}.
\end{theorem}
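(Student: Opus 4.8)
The plan is to assemble the machinery built up in this subsection rather than to argue from scratch. First I would recall the overlay construction: given the worst-case Nash equilibrium flow $F$ and a social optimum flow $\hat F$, cancel the common flow in $F\bigotimes\hat F$, reverse $\hat F$, and decompose the result into directed cycles, obtaining a collection $CC^1$ of commodity-$1$ cycles and $CC^2$ of commodity-$2$ cycles. Applying Lemma \ref{2cycle-enough} with $a_i$ the cost of the NE flow and $b_i$ the cost of the SO flow restricted to the $i$-th cycle (or cycle-pair), the global ratio $C_{NE}(f)/C_{SO}(\hat f)$ is bounded by the maximum, over the parts of an optimal partition, of the corresponding local ratio, where each part consists of at most one cycle from each commodity. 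Hence it suffices to bound the local ratio for a single commodity-$1$ cycle, possibly intersected by a single commodity-$2$ cycle (and, by symmetry, with the roles of the two commodities swapped).

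Next I would invoke the enumeration of the sixteen configurations describing how $f_1,\hat f_1,f_2,\hat f_2$ traverse the top and bottom paths of such a cycle (Figure \ref{fig:all-cases}, Table \ref{table:16cases}), together with the domination observations that reduce them to the three representative groups analyzed above: group (i), cases $1,5,7,13$; group (ii), cases $2,3,9,16$; group (iii), cases $4,6,8,10,11,12,14,15$. For each group the local ratio has already been estimated: group (i) yields $1$; group (ii) yields at most $\sqrt a+1$ when $r_2\leq a\sqrt{a}\,r_1$ and at most $2+1/\sqrt a$ otherwise (splitting further according to whether a commodity-$2$ cycle intersects); group (iii) yields at most $1/\sqrt a+2$, at most $1$, or at most $\sqrt a+2$ across its three sub-cases, the last (sub-case (c)) being the binding one. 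Taking the maximum of all these expressions gives $\sqrt a+2$, where $a=a_1/a_2$.

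Finally, because the uniform affine delay function has the form $a_1f_1+a_2f_2+c(e)$ with the same coefficients on every edge, the ratio $a=a_1/a_2$ (or $a_2/a_1$ when commodity $2$ is the primary flow) is at most $\max_{i,j}a_i/a_j=a_{max}$, so the price of anarchy is at most $\sqrt{a_{max}}+2$. Combining this with the lower bound $\Omega(\sqrt a)$ of Lemma \ref{lower-bound-affine-uniform-atomic}—the two-edge instance with $r_1=1$, $r_2=\sqrt a$ attaining $C_{NE}(f)/C_{SO}(\hat f)=(\sqrt a+2)/3$—shows the bound is asymptotically tight, which is exactly the claim.

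The conceptual work is therefore already discharged by the preceding lemmas; the remaining obstacle in writing the theorem's proof is purely one of bookkeeping: making explicit that Lemma \ref{2cycle-enough} legitimately reduces the global ratio to the per-cycle ratios (so the case analysis applies on a partition that the adversary cannot improve), that the variational-inequality constraints listed in Table \ref{table:16cases} are precisely those used in the group (i)–(iii) computations, and that among all sub-cases the worst bound is indeed $\sqrt a+2$ (realized in group (iii), sub-case (c)) rather than a larger quantity. I would state these three points and then conclude.
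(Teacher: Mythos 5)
Your proposal follows essentially the same route as the paper: the overlay/cycle decomposition of $F\bigotimes\hat F$, the reduction via Lemma \ref{2cycle-enough} to at most one cycle per commodity, the sixteen-case enumeration collapsed to the three dominating groups, and taking the maximum of the resulting local bounds (with the worst, $\sqrt{a}+2$, arising in group (iii), sub-case (c)). It is a faithful and correct assembly of the paper's own argument, so nothing further is needed.
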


The proof of \Cref{thm-2-atomic-uniform-affine} is shown below.  Recall that the theorem claims that if $(G,R,\Phi)$ is an atomic 2-commodity instance with uniform affine delay functions, then the price of anarchy of $(G,R,\Phi)$
is at most $\sqrt{a_{max}} + 2$.

\paragraph{Proof Outline:} The proof proceeds by comparing both the Nash equilibrium solution and the optimum solution. Removing the common flows between the two, a set of cycles are obtained such that by reversing flow on these cycles, one flow can be transformed to the other. The price of anarchy is obtained using the structure of these cycles. We show that we can obtain the PoA by considering one cycle only and analyzing the worst case behavior depending on how the two, the Nash equilibrium and the optimum flows, utilize the edges of the cycle. The analysis is by cases; we obtain  sixteen possibilities but three cases dominate all other cases and thus we provide an upper bound of the PoA for these three cases. The analysis uses variational inequalities to determine the relationship between the Nash equilibrium and optimum flow.

\paragraph{Reducing to Cycles:}
Consider $F_{\bigotimes} = (F - \hat{F}) \bigcup (\hat{F} - F)$ where $F$ and $\hat{F}$ are a Nash equilibrium flow and a social optimum flow, respectively. Further we let $F_1$ and $F_2$ denote the sub-flows of $F$ at a Nash equilibrium flow for commodity 1 and commodity 2, respectively. Similarly, we let $\hat{F}_1$ and $\hat{F}_2$ denote the social optimum flow for commodity 1 and commodity 2, respectively.
Note that $\bigotimes$ cancels common flows in $F$ and $\hat{F}$.
In the figures below that illustrate the various cases, we let edges in $F_1 \bigotimes \hat{F}_1$ be thicker and edges in $F_2 \bigotimes \hat{F}_2$ be thinner lines.

By reversing the flow in $\hat{F}$ and partitioning $F_1 \bigotimes \hat{F}_1$ and $F_2 \bigotimes \hat{F}_2$, respectively, we obtain a set of cycles for commodity 1 and commodity 2.
We let the cycles represented by thicker lines be denoted by $CC^1$ for commodity 1 and cycles represented by the
thinner lines by $CC^2$ for commodity 2.
Note that an edge could be utilized by commodity 1 as well as commodity 2, and it can occur in cycles of both commodities.

\paragraph{Analyzing PoA bounds for Cycles:}
We need a lemma which show that there exists a good partition of the cycles so that we can obtain an almost tight upper bound.
Let $(a_i,b_i)$ be a collection of pairs where $1 \leq i \leq d: a_i,b_i \in \mathbb{R}_+$.
Let $\Pi$ be a set of partitions of $I = \{1,\ldots,d\}$ such that
\begin{itemize}
\item If $\pi=(\pi^1,\ldots,\pi^{\ell}) \in \Pi$ then $|\pi^j| \leq 2$,

Note the following are true

\item If $\pi=(\pi^1,\ldots,\pi^{\ell}) \in \Pi$ then $\bigcup_j\pi^j = I$,
\item If $\pi^{j_1}, \pi^{j_2} \in \pi$ and $\pi^{j_1} \neq \pi^{j_2}$ then $\pi^{j_1} \bigcap \pi^{j_2} = \emptyset$.
\end{itemize}
\begin{lemma}\label{2cycle-enough}
Let $\hat{\pi} \in \Pi$ be an optimum partition of $I$ defined as
\[
\hat{\pi}= \arg \min_{\pi \in \Pi } \max_{\pi^j \in \pi}\frac{\sum_{\ell \in \pi}a_{\ell}}{\sum_{\ell \in \pi}b_{\ell}}
\]
where $\pi = (\pi^1,\ldots,\pi^q)$ and $\pi^j$ corresponds to the $j$-th part and is a subset (of at most 2 elements) of $I$ of size $t_j$, i.e., $\pi^j = \{x_1, \ldots, x_{t_j}\}$. 
Then
\[ \frac{a_1+a_2+\ldots+a_d}{b_1+b_2+\ldots+b_d} \leq \max_{\hat{\pi}^j \in \hat{\pi}}\frac{\sum_{\ell \in \hat{\pi}^j}a_{\ell}}{\sum_{\ell \in \hat{\pi}^j}b_{\ell}}. \]
\end{lemma}

We next show that PoA can be estimated by considering at most one  cycle per commodity instead of all cycles.
Throughout this subsection, let $a = a_1/a_2$ where $a_1$ and $a_2$ represent coefficients of commodity 1 and commodity 2 in the affine uniform delay function.

Let $\mathcal{CC}$ be the set of all cycles and $\Pi$ be a set of partitions of $\mathcal{CC}$.
Consider a partition $\bar{\pi}$ of $\mathcal{CC}$ where each part is of size either 1 or of size 2 (in which case it contains 1 cycle from each commodity, $C^1 \in CC^1$ and $C^2 \in CC^2$).
Let the pair $(f_1, f_2)$ and $(\hat{f}_1, \hat{f}_2)$ be the Nash equilibrium flow of commodity 1 and commodity 2 and the social optimum flow of commodity 1 and commodity 2, respectively.
Let $\hat{\pi} = \{\hat{\pi}^1, \ldots, \hat{\pi}^q\}$ be a partition which guarantees the optimum amongst partitions in $\Pi$. 
Then,
\begin{eqnarray*}
\phantom{a} \frac{C_{NE}(f)}{C_{SO}(\hat{f})} &=& \frac{\sum_{e \in E}(f_1(e)+f_2(e))\Phi(f_1(e),f_2(e))}{\sum_{e \in E}(\hat{f}_1(e)+\hat{f}_2(e))\Phi(\hat{f}_1(e),\hat{f}_2(e))} \label{one-cycle-consider-1} \\
\phantom{a}  &\leq& \max_{\hat{\pi}^j \in \hat{\pi}}\left( \frac{\textit{cost of NE flow in cycles or combinations of cycles in } \hat{\pi}^j}{\textit{cost of SO flow in cycles or combinations of cycles in } \hat{\pi}^j} \right)\\
\phantom{a}  &\leq& \max_{\bar{\pi}^j \in \bar{\pi}}\left( \frac{\textit{cost of NE flow in cycles or combinations of cycles in } \bar{\pi}^j}{\textit{cost of SO flow in cycles or combinations of cycles in } \bar{\pi}^j} \right)\\
\phantom{a}  &\leq& \max \left \{ \right. \\
& & \left. \max_{\bar{\pi}^j = (C^1,C^2)} \frac{f_1\Phi_{C^1}(f_1,f_2)+f_2\Phi_{C^2}(f_1,f_2)}{\hat{f}_1\Phi_{C^1}(\hat{f}_1,\hat{f}_2)+\hat{f}_2\Phi_{C^2}(\hat{f}_1,\hat{f}_2)}, \right. \\
& & \left. \max_{\bar{\pi}^j = (C^1)}\frac{f_1\Phi_{C^1}(f_1,f_2)}{\hat{f}_1\Phi_{C^1}(\hat{f}_1,\hat{f}_2)}, \right. \\
& & \left. \max_{\bar{\pi}^j = (C^2)}\frac{f_2\Phi_{C^2}(f_1,f_2)}{\hat{f}_2\Phi_{C^2}(\hat{f}_1,\hat{f}_2)}  \right. \\
& & \left. \right \}
\end{eqnarray*}
where $\Phi_{C^1}()$ and $\Phi_{C^2}()$ represent cost incurred in $C^1$ and $C^2$, respectively.
The first inequality holds due to \Cref{2cycle-enough}.
In the last inequality, we have $|\bar{\pi}^j| = 2$ for the first factor and $|\bar{\pi}^j| = 1$ for the second and third factors.
In \Cref{2cycle-enough}, we showed that the maximum cost element in $\hat{\pi}$ provides an upper bound for $C_{NE}(f)/C_{SO}(\hat{f})$.
So, the partition $\bar{\pi}$ considered here provides an upper bound for $C_{NE}(f)/C_{SO}(\hat{f})$.
Later we will show that this partition is good enough to obtain an almost tight bound for $C_{NE}(f)/C_{SO}(\hat{f})$.

To analyze the price of anarchy, we consider the structure of the cycles.

\paragraph{Cycle Structure:}
In this subsection we consider the structure of the cycles obtained in $F \bigotimes \hat{F}$.
One type of  cycle obtained is illustrated in \Cref{fig:intersection-graph}.
Here $u$ and $v$ are starting and end nodes such that reversing of $\hat{f}_1$ leads to a cycle.
Though there are cycles for commodity 1 and commodity 2, we consider cycles for commodity 1.
We consider flow from commodity 1 as a primary flow and flow from commodity 2 as a secondary flow, respectively.
The case when flow from commodity 2 is a primary flow is symmetric to this case.
Note that though we consider cycles from commodity 1, it is possible that those might be intersected with cycles from commodity 2.
We show an example of a cycle from commodity 1 intersecting with a flow of commodity 2 in \Cref{fig:intersection-graph}.
\begin{figure}[ht]
  \begin{center}
  \scalebox{0.7}
  {\includegraphics[width=35mm ,height=30mm]{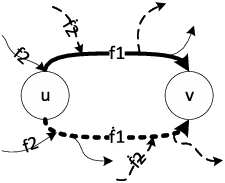}}
  \caption{An example of a cycle from commodity 1}
  \label{fig:intersection-graph}
  \end{center}
\end{figure}
In the figure, 
the thinner and solid line represent $f_2$; while the thinner and dashed lines represent $\hat{f}_2$.
As shown in \Cref{fig:intersection-graph}, $f_2$ and $\hat{f}_2$  intersect with a cycle of commodity 1 and also $f_2$ and $\hat{f}_2$ can  overlap each other.

Depending on how the NE flow and social optimum flow use the top and bottom path in the cycle, we have  sixteen possibilities.
We show that three cases dominate all other cases and thus we provide an upper bound of the PoA for these three cases.

{\em Notation:}
As shown in \Cref{fig:intersection-graph}, we have two paths, termed as top path and bottom path.
Throughout this section, let $\Phi_{P_i}(f_1,f_2)$ be cost of path $P_i$ when flow on path $P$ of commodity 1 is $f_1$ and that on path $P$ of commodity 2 is $f_2$.
We define  $\Phi'_{P_i}(f_1,f_2)$ as the cost when restricted to  a subset of edges $P' \subseteq P_i$, which is utilized by both $f_1$ and $f_2$.
Similarly for $P'' \subseteq P$ let $\Phi''_{P_i}(f_1,0)$ (or $\Phi''_{P_i}(0,f_2)$) be defined as the cost function over a set of edges $P''$ which is utilized by $f_1$ (or $f_2$), but not both.
Note that $P' \bigcap P'' = \emptyset$.
For simplicity, we define $\Phi_{P_i}(f_1,f_2) = \Phi_i(f_1,f_2)$, $\Phi'_{P_i}(f_1,f_2) = \Phi'_i(f_1,f_2)$, $\Phi''_{P_i}(f_1,0) = \Phi''_i(f_1,0)$ and $\Phi''_{P_i}(0,f_2) = \Phi''_i(0,f_2)$.

Also, throughout this subsection we let $P_1, P_2$ correspond to edges containing flows of both commodities in the top path and the bottom path from $u_1$ to $v_1$, respectively. And we let $ P_3$ and $P_4$ correspond to the edges common to both flows on the top path and the  bottom path from $u_2$ to $v_2$, respectively as shown in \Cref{fig:intersection-graph1}.
Lastly, we define $\ell_1 = |\{e \in P_1 \bigcap E\}|, \ell_2 = |\{e \in P_2 \bigcap E\}|, \ell_3 = |\{e \in P_3 \bigcap E\}|$ and $\ell_4 = |\{e \in P_4 \bigcap E\}|$.
\begin{figure}[ht]
  \begin{center}
  \scalebox{0.7}
  {\includegraphics[width=75mm ,height=32mm]{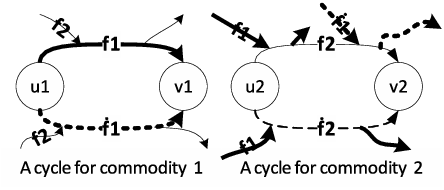}}
  \caption{Two cycles - one cycle per each commodity.}
  \label{fig:intersection-graph1}
  \end{center}
\end{figure}
Remember that $f_1 = \hat{f}_1 = r_1$ and $f_2 = \hat{f}_2 = r_2$ due to the definition of atomic network model in this section.

\paragraph{Three cases are enough to be considered:}
Note the $f$ and $\hat{f}$ correspond to a NE flow and a social optimum flow.
\begin{figure}[ht]
  \begin{center}
  \scalebox{0.8}
  {\includegraphics[width=190mm ,height=40mm]{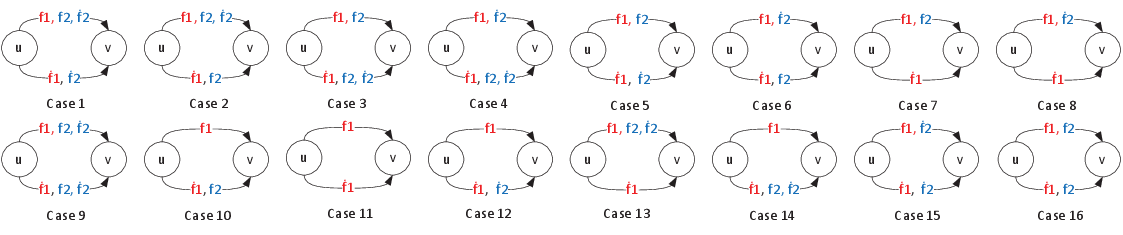}}
  \caption{All Cases}
  \label{fig:all-cases}
  \end{center}
\end{figure}
We consider all possible cases when commodity 1 flow is considered as a primary flow as shown in \Cref{fig:all-cases}. (The notation used is $S$ vs $S'$, where $S$ is the set of flows on the top path and $S'$ the set of flows on the bottom path): (1) $\{f_1,f_2,\hat{f}_2 \}$ vs.
 $\{ \hat{f}_1,\hat{f}_2 \}$, (2) $\{f_1,f_2,\hat{f}_2 \}$ vs. $\{ \hat{f}_1,f_2 \}$,
 (3) $\{f_1,f_2\}$ vs. $\{\hat{f}_1,f_2,\hat{f}_2\}$, (4) $\{f_1,\hat{f}_2\}$ vs. $\{ \hat{f}_1,f_2,\hat{f}_2\}$,
 (5) $\{f_1,f_2\}$ vs. $\{\hat{f}_1,\hat{f}_2\}$, (6) $\{f_1,\hat{f}_2\}$ vs. $\{\hat{f}_1,f_2\}$, (7) $\{f_1,f_2\}$ vs. $\{\hat{f}_1\}$,
 (8) \{$f_1,\hat{f}_2$\} vs. \{$\hat{f}_1$\}, (9) \{$f_1,f_2,\hat{f}_2$\} vs. \{$\hat{f}_1,f_2,\hat{f}_2$\}, (10) \{$f_1$\} vs. \{$\hat{f}_1,f_2$\}, (11) \{$f_1$\} vs. \{$\hat{f}_1\}$, (12) \{$f_1$\} vs. \{$\hat{f}_1,\hat{f}_2$\}, (13) \{$f_1,f_2,\hat{f}_2$\} vs. \{$\hat{f}_1$\}, (14) \{$f_1$\} vs. \{$\hat{f}_1,f_2,\hat{f}_2$\}, (15) \{$f_1,\hat{f}_2$\} vs. \{$\hat{f}_1,\hat{f}_2\}$ and (16) \{ $f_1,f_2$\} vs. \{$\hat{f}_1,f_2$\}.
By reversing $\hat{f}_1$, each of the structures considered becomes a directed cycle.
As mentioned before $f_2$ or $\hat{f}_2$ (or both) use edges in  these cycles.
\begin{table}[ht]
\begin{tabular}{|l|c|c|}
\hline
\mbox{} & ratio of $C_{NE}(f)$ to $C_{SO}(\hat{f})$ & variational inequality \\
\hline\hline
case 1,5 & $\frac{f_1\Phi_1(f_1,f_2)}{\hat{f}_1\Phi'_2(\hat{f}_1,\hat{f}_2)+\hat{f}_1\Phi''_2(\hat{f}_1,0)}$ & $\Phi_1(f_1,f_2) \leq \Phi_2(f_1,0),\Phi_1(f_1,f_2) \leq \Phi_2(0,f_2)$ \\ \hline
case 2,16 & $\frac{f_1\Phi_1(f_1,f_2)}{\hat{f}_1\Phi_2(\hat{f}_1,0)}$ & $\Phi_1(f_1,f_2) \leq \Phi'_2(f_1,f_2)+\Phi''_2(f_1,0)$ \\ \hline
case 3,9 & $\frac{f_1\Phi_1(f_1,f_2)}{\hat{f}_1\Phi'_2(\hat{f}_1,\hat{f}_2)+\hat{f}_1\Phi''_2(\hat{f}_1,0)}$ & $\Phi_1(f_1,f_2) \leq \Phi'_2(f_1,f_2)+\Phi''_2(f_1,0)$ \\ \hline
case 4,14 & $\frac{f_1\Phi_1(f_1,0)}{\hat{f}_1\Phi'_2(\hat{f}_1,\hat{f}_2)+\hat{f}_1\Phi''_2(\hat{f}_1,0)}$ & $\Phi_1(f_1,0) \leq \Phi'_2(f_1,f_2)+\Phi''_2(f_1,0)$ \\ \hline
case 6,10,11 & $\frac{f_1\Phi_1(f_1,0)}{\hat{f}_1\Phi_2(\hat{f}_1,0)}$ & $\Phi_1(f_1,0) \leq \Phi'_2(f_1,f_2)+\Phi''_2(f_1,0)$ \\ \hline
case 7,13 & $\frac{f_1\Phi_1(f_1,f_2)}{\hat{f}_1\Phi_2(\hat{f}_1,0)}$ & $\Phi_1(f_1,f_2) \leq \Phi_2(f_1,0), \Phi_1(f_1,f_2) \leq \Phi_2(0,f_2)$ \\ \hline
case 8 & $\frac{f_1\Phi_1(f_1,0)}{\hat{f}_1\Phi_2(\hat{f}_1,0)}$ & $\Phi_1(f_1,0) \leq \Phi_2(f_1,0)$ \\ \hline
case 12,15 & $\frac{f_1\Phi_1(f_1,0)}{\hat{f}_1\Phi'_2(\hat{f}_1,\hat{f}_2)+\hat{f}_1\Phi''_2(\hat{f}_1,0)}$ & $\Phi_1(f_1,0) \leq \Phi_2(f_1,0)$ \\ \hline
\end{tabular}
\caption{$C_{NE}(f)/C_{SO}(\hat{f})$ for all cases}
\label{table:16cases}
\end{table}
\Cref{table:16cases}  lists the cases and a corresponding (variational) inequality (we use the term, {\em variational inequality} here imprecisely simply to emphasize properties of the solution ) obtained by the flow being Nash equilibrium.
Let us consider case 16 to show an example of how to construct formulas for $C_{NE}(f)/C_{SO}(\hat{f})$ using the corresponding variational inequality.
There are two NE flows $f_1$ and $f_2$ over the top path, and the cost of NE for commodity 1 is $f_1\Phi_1(f_1,f_2)$; while, the cost of SO for commodity 1 is incurred only by $\hat{f}_1$ which results in $\hat{f}_1\Phi_2(\hat{f}_1,0)$.
Its corresponding variational inequality is $\Phi_1(f_1,f_2) \leq \Phi'_2(f_1,f_2)+\Phi''_2(f_1,0)$ since a shift of flow $f_1$ to the bottom path (note that $f_2$ already exists) does not decrease the cost incurred by $f_1$ and $f_2$ over the top path.

\paragraph{(i) cases 1, 5, 7 and 13:}\label{subsubsection-15713}
Note that both case 1 and case 5 have the same ratio  of $C_{NE}(f)$ to $C_{SO}(\hat{f})$ and variational inequality, and case 7 and case 13 use the same ratio of $C_{NE}(f)$ to $C_{SO}(\hat{f})$ and variational inequality.
Then price of anarchy as computed in  case 1 and case 5 is upper bounded by that computed in case 7(case 13) since the divisor in case 7(case 13) is smaller than divisors in other two cases since $f_1 = \hat{f}_1 = r_1$ and $f_2 = \hat{f}_2 = r_2$.

In case 7 (or 13), note that there is no social optimum flow from commodity 2 in the ratio of $C_{NE}(f)$ to $C_{SO}(\hat{f})$, and the NE cost incurred by commodity 2 can be considered from cycles with flow of commodity 2. 
However, $f_2$ has impact on latency of $f_1$, and we consider $f_1\Phi_1(f_1,f_2)$ as cost incurred by $f_1$ and $f_2$.
\[ \phantom{a} \frac{C_{NE}(f)}{C_{SO}(\hat{f})} \leq  \frac{f_1\Phi_1(f_1,f_2)}{\hat{f}_1\Phi_2(\hat{f}_1,0)} \leq \frac{r_1\Phi_1(r_1,r_2)}{r_1\Phi_2(r_1,0)} \leq \frac{r_1\Phi_1(r_1,r_2)}{r_1\Phi_1(r_1,r_2)} = 1.\]
The first inequality holds  since $f_1 = \hat{f}_1 = r_1$ and $f_2 = \hat{f}_2 = r_2$ due to the definition of atomic unsplittable flow.
The last inequality holds since $\Phi_1(f_1,f_2) \leq \Phi_2(f_1,0)$ due to its corresponding variational inequality.

\paragraph{(ii) cases 2, 3, 9 and 16:}\label{subsubsection-23916}
Since the divisor for case 2 and case 16 is smaller than divisors in case 3 and case 9, the price of anarchy in case 3 and case 9 are dominated by the other two cases.
For both case 2 or case 16, we have the following inequality:
\begin{eqnarray}
\phantom{a} \frac{C_{NE}(f)}{C_{SO}(\hat{f})} &\leq& \frac{f_1\Phi_1(f_1,f_2)}{\hat{f}_1\Phi_2(\hat{f}_1,0)} \label{case-239-1}\\
\phantom{a} &\leq& \frac{f_1\Phi'_2(f_1,f_2)+f_1\Phi''_2(f_1,0)}{\hat{f}_1\Phi_2(\hat{f}_1,0)} \label{case-239-2}\\
\phantom{a} &=& \frac{r_1\Phi'_2(r_1,r_2)+r_1\Phi''_2(r_1,0)}{r_1\Phi_2(r_1,0)} \label{case-239-3}\\
\phantom{a} &\leq& \frac{\Phi_2(r_1,r_2)}{\Phi_2(r_1,0)} \label{case-239-4}\\
\phantom{a} &=& \frac{a_2\ell_2(ar_1+r_2+c'_2)}{a_2\ell_2(ar_1+c'_2)} \leq 1 + \frac{r_2}{ar_1} \label{case-239-5}
\end{eqnarray}
In (\ref{case-239-2}),  $\Phi_1(f_1,f_2) \leq \Phi'_2(f_1,f_2)+\Phi''_2(f_1,0)$ due to the corresponding variational inequality.
We have $c'_2 = c_2/a_2$ in (\ref{case-239-5}).
It seems that the price of anarchy for these cases are not bounded since $f_2$ appears in the dividend while the divisor does not include $\hat{f}_2$.
However, if $r_2$ is big enough (larger than $a\sqrt{a}r_1$) then $r_2$ play an important role.
Thus, we consider splitting into two cases : $r_2 \leq a\sqrt{a}r_1$ and $r_2 > a\sqrt{a}r_1$.
In the case that $r_2 \leq a\sqrt{a}r_1$, the price of anarchy is bounded by $\sqrt{a}+1$.

We next discuss the price of anarchy when $r_2 > a\sqrt{a}r_1$.
Note that we assume that there is an intersection of a commodity 2 cycle with this cycle when $r_2 > a\sqrt{a}r_1$.
If there is no intersection with a  commodity 2 cycle, then $f_1$ can be shifted from the top path to the bottom path leading to less delay.
If the delay on the top path and the delay on the bottom path are the same, then there is no cycle.
Suppose that the shift from the top path to the bottom path of $f_1$ does not decrease delay.
It implies that the cost of the bottom path is greater than the cost of the top path and potentially violates utilizing the bottom path by $\hat{f}_1$ is less cost than using the top path.

Let us consider the cycle of commodity 2.
$f_2$ goes along top path overlapped with $f_1$ or $\hat{f}_1$ on the bottom path.
The flow $f_2$ and the other  flow $\hat{f}_2$ forms the cycle of commodity 2 which is intersecting with a cycle shown in case 2.
For this cycle, we derive the price of anarchy via the  corresponding variational inequality.
\begin{eqnarray}
\phantom{a} \frac{C_{NE}(f)}{C_{SO}(\hat{f})} &\leq& \frac{f_2\Phi'_1(f_1,f_2)+f_2\Phi_2(0,f_2)+f_2\Phi'_3(f_1,f_2)+f_2\Phi''_3(0,f_2)}{\hat{f}_1\Phi_2(\hat{f}_1,0)+\hat{f}_2\Phi_4(0,\hat{f}_2)} + \nonumber \\
\phantom{a} & & \frac{f_1\Phi'_1(f_1,f_2)+f_1\Phi''_1(f_1,0)}{\hat{f}_1\Phi_2(\hat{f}_1,0)+\hat{f}_2\Phi_4(0,\hat{f}_2)}\\
\phantom{a} &\leq& \frac{f_1\Phi'_2(f_1,f_2)+f_1\Phi''_2(f_1,0)+f_2\Phi_4(f_1,f_2)}{\hat{f}_1\Phi_2(\hat{f}_1,0)+\hat{f}_2\Phi_4(0,\hat{f}_2)} \label{case-239-7}\\
\phantom{a} &\leq& \frac{f_1\Phi_2(f_1,f_2)+f_2\Phi_4(f_1,f_2)}{\hat{f}_1\Phi_2(\hat{f}_1,0)+\hat{f}_2\Phi_4(0,\hat{f}_2)} \label{case-239-8}\\
\phantom{a} &=& \frac{r_1\Phi_2(r_1,r_2)+r_2\Phi_4(r_1,r_2)}{r_1\Phi_2(r_1,0)+r_2\Phi_4(0,r_2)} \label{case-239-9}\\
\phantom{a} &=& \frac{r_1\ell_2(ar_1+r_2+c'_2)+r_2\ell_4(ar_1+r_2+c'_4)}{r_1\ell_2(ar_1+c'_2)+r_2\ell_4(r_2+c'_4)} \label{case-239-10}\\
\phantom{a} &=& \frac{r_1\ell(ar_1+r_2+c'_2)+r_2(ar_1+r_2+c'_4)}{r_1\ell(ar_1+c'_2)+r_2(r_2+c'_4)} \label{case-239-11}\\
\phantom{a} &=& 1 + \frac{r_1r_2\ell+ar_1r_2}{r_1\ell(ar_1+c'_2)+r_2(r_2+c'_4)}. \label{case-239-12}
\end{eqnarray}
We have $\Phi'_1(f_1,f_2)+\Phi_2(0,f_2)+\Phi'_3(f_1,f_2)+\Phi''_3(0,f_2) \leq \Phi_4(f_1,f_2)$ due to the variational inequality for commodity 2 and $\Phi'_1(f_1,f_2)+\Phi''_1(f_1,0) \leq \Phi'_2(f_1,f_2)+\Phi''_2(f_1,0)$ due to  the variational inequality for commodity 1.
By these observations, we can obtain (\ref{case-239-7}) from the previous equation.
Inequality (\ref{case-239-8}) can be obtained by summing over $\Phi'_2(f_1,f_2)$ and $\Phi''_2(f_1,0)$ by replacing $\Phi''_2(f_1,0)$ as $\Phi''_2(f_1,f_2)$.
Let $\ell = \ell_2/\ell_4$ where $\ell_2$ and $\ell_4$ represent the number of edges on path 2 and path 4, respectively.
Also, let $c'_2 = c_2/a_2, c'_4 = c_4/a_2$ throughout this chapter.
From the above variational inequality $\Phi'_1(f_1,f_2)+\Phi_2(0,f_2)+\Phi'_3(f_1,f_2)+\Phi''_3(0,f_2) \leq \Phi_4(f_1,f_2)$ we have $\Phi_2(0,f_2) \leq \Phi_4(f_1,f_2)$ and further
\[ \ell = \ell_2/\ell_4 \leq \frac{ar_1+r_2+c'_4}{r_2+c'_2} \leq \frac{ar_1+r_2+c'_4}{r_2}.\]
When $\frac{ar_1+r_2+c'_4}{r_2+c'_2} \leq 1$, let $\ell$ be 0 in the divisor and be 1 for the dividend. Then, equation (\ref{case-239-12}) can be written as
\begin{eqnarray}
\phantom{a} \frac{C_{NE}(f)}{C_{SO}(\hat{f})} \leq 1 + \frac{r_1r_2+ar_1r_2}{r_2(r_2+c'_4)} \leq 1 + \frac{r_1r_2+ar_1r_2}{r^2_2} \leq 2 + \frac{1}{\sqrt{a}}.
\end{eqnarray}
The last inequality holds because $r_2 \geq a\sqrt{a}r_1$, and $a \geq 1$.
When $\frac{ar_1+r_2+c'_4}{r_2+c'_2} \geq 1$, note that $\frac{ar_1+r_2+c'_4}{r_2} \geq 1$ is true.
Equation (\ref{case-239-12}) can be written as
\begin{eqnarray}
\phantom{a} \frac{C_{NE}(f)}{C_{SO}(\hat{f})} &\leq& 1 + \frac{r_1r_2(ar_1+r_2+c'_4)/r_2+ar_1r_2}{r_1\ell(ar_1+c'_2)+r_2(r_2+c'_4)} \label{case-239-13} \\
\phantom{a} &\leq& 1 + \frac{r_1(r_2/\sqrt{a}+r_2+c'_4)+ar_1r_2}{r_1\ell(ar_1+c'_2)+a\sqrt{a}r_1(r_2+c'_4)} \label{case-239-14} \\
\phantom{a} &\leq& 1 + \frac{(a+1+1/\sqrt{a})r_2+c'_4}{\ell(ar_1+c'_2)+a\sqrt{a}(r_2+c'_4)} \label{case-239-15} \\
\phantom{a} &\leq& 1 + \frac{(a+1+1/\sqrt{a})r_2+c'_4}{a\sqrt{a}(r_2+c'_4)} \leq 1 + \frac{1}{\sqrt{a}} + \frac{2}{a\sqrt{a}}\label{case-239-16}
\end{eqnarray}
Observe that $\ell(ar_1+c'_2)$ can be ignored to obtain the upper bound in (\ref{case-239-16}).
In equations, to maximize the price of anarchy we substitute $r_1$ with $r_2/a\sqrt{a}$ in the dividend; while in the divisor we substitute $r_2$ with $a\sqrt{a}r_1$.

\paragraph{(iii) cases 4, 6, 8, 10, 11, 12, 14 and 15:}
Note that the divisors in case 6, 10 and 11 are smaller than the divisors in other cases, and the dividend in case 6, 10 and 11 are bigger than others due to the variational inequality.

We consider one of the cases 6, 10 and 11.
As shown in previous case, we need to consider commodity 1's cycle and commodity 2's cycle to estimate the price of anarchy.
We provide a proof of a case when $\hat{f}_2$ is utilizing the top path.
Let us consider case 6 which upper bounds other two cases.
we split into three sub-cases: (a) $r_1 \leq r_2 \leq (a)^{1/2}r_1$, (b) $r_1 > r_2$ and (c) $r_2 > (a)^{1/2}r_1$.
In case (a),
\begin{eqnarray}
\phantom{a} \frac{C_{NE}(f)}{C_{SO}(\hat{f})} &\leq& \frac{f_1\Phi_1(f_1,0)+f_2\Phi_2(0,f_2)}{\hat{f}_1\Phi_2(\hat{f}_1,0)+\hat{f}_2\Phi_1(0,\hat{f}_2)} \label{atomic-affine-overlap-2}\\
\phantom{a}  &\leq& \frac{f_1\Phi_2(f_1,f_2)+f_2\Phi_2(0,f_2)}{\hat{f}_1\Phi_2(\hat{f}_1,0)+\hat{f}_2\Phi_1(0,\hat{f}_2)}  \label{atomic-affine-overlap-3}\\
\phantom{a}  &\leq& \frac{r_1\Phi_2(r_1,r_2)+r_2\Phi_2(0,r_2)}{r_1\Phi_2(r_1,0)+r_2\Phi_1(0,r_2)} \label{atomic-affine-overlap-5} \\
\phantom{a}  &\leq& \frac{r_1\Phi_2(r_1,r_2)+r_2\Phi_2(0,r_2)}{r_1\Phi_2(r_1,0)} \label{atomic-affine-overlap-6}\\
\phantom{a}  &\leq& \frac{\Phi_2(r_1,r_2)}{\Phi_2(r_1,0)}+\frac{r_2}{r_1}\frac{\Phi_2(0,r_2)}{\Phi_2(r_1,0)} \label{atomic-affine-overlap-7}\\
\phantom{a}  &\leq& 1 + \frac{\ell_2\sqrt{a}r_1}{\ell_2(ar_1+c'_2)}+\frac{r_2}{r_1}\frac{\ell_2(\sqrt{a}r_1+c'_2)}{\ell_2(ar_1+c'_2)} \label{atomic-affine-overlap-8}\\
\phantom{a}  &\leq& 1/\sqrt{a} + 2
\end{eqnarray}
Due to the variational inequality, we have (\ref{atomic-affine-overlap-3}) by replacing $\Phi_1(f_1,0)$ with $\Phi_2(f_1,f_2)$.
In inequality (\ref{atomic-affine-overlap-8}), we substitute $r_2$ with $\sqrt{a}r_1$.
In case (b), we start at (\ref{atomic-affine-overlap-5}) to avoid duplicate formulas.
\begin{eqnarray}
\phantom{a} \frac{C_{NE}(f)}{C_{SO}(\hat{f})} &\leq& \frac{r_1\ell_1(ar_1+r_2+c'_2)+r_2\ell_2(r_2+c'_2)}{r_1\ell_1(ar_1+c'_2)+r_2\ell_2(r_2+c'_1)} \label{atomic-affine-overlap-2-1}\\
\phantom{a} &\leq& 1 + \frac{\ell_1r_1r_2+\ell_2r_2c'_2}{\ell_1(ar^2_1+r_1c'_2)+\ell_2(r^2_2+r_2c'_1)} \label{atomic-affine-overlap-2-2}\\
\phantom{a} &\leq& 1 + \max(\frac{r_1r_2}{ar^2_1+r^2_2},\frac{r_2c'_2}{r_1c'_2+r_2c'_1}) \label{atomic-affine-overlap-2-3} \\
\phantom{a} &\leq& 1 + \max(\frac{1}{ar_1/r_2+r_2/r_1},1) \label{atomic-affine-overlap-2-4} \leq 1
\end{eqnarray}
Due to \Cref{2cycle-enough}, (\ref{atomic-affine-overlap-2-3}) holds.
In equation (\ref{atomic-affine-overlap-2-4}), the divisor is minimized when $r_1/r_2$ is close to $1$.
Lastly, in case (c),
\begin{eqnarray}
\phantom{a} \frac{C_{NE}(f)}{C_{SO}(\hat{f})} &\leq& \frac{\ell_1f_1(af_1+c'_1)+\ell_2f_2(f_2+c'_2)}{\ell_1\hat{f}_2(\hat{f}_2+c'_1)+\ell_2\hat{f}_1(a\hat{f}_1+c'_2)} \label{atomic-affine-overlap-3-1}\\
\phantom{a} &\leq& \frac{\ell f_1(af_1+c'_1)+f_2(f_2+c'_2)}{\ell \hat{f}_2(\hat{f}_2+c'_1)+\hat{f}_1(a\hat{f}_1+c'_2)} \label{atomic-affine-overlap-3-2}\\
\phantom{a} &\leq& \frac{\ell r_1(ar_1+c'_1)+r_2(r_2+c'_2)}{\ell r_2(r_2+c'_1)+r_1(ar_1+c'_2)} \label{atomic-affine-overlap-3-3}\\
\phantom{a} &\leq& \frac{r_1(ar_1+c'_1)\frac{ar_1+r_2+c'_2}{ar_1+c'_1}+r_2(r_2+c'_2)}{r_2(r_2+c'_1)\frac{r_2+c'_2}{ar_1+r_2+c'_1}+r_1(ar_1+c'_2)} \label{atomic-affine-overlap-3-4}\\
\phantom{a} &\leq& \frac{r_1(ar_1+r_2+c'_2)+r_2(r_2+c'_2)}{r_2\frac{r_2+c'_2}{\sqrt{a}+1}+r_1(ar_1+c'_2)} \label{atomic-affine-overlap-3-5}\\
\phantom{a} &\leq& \frac{(r_1r_2}{r_2\frac{r_2+c'_2}{\sqrt{a}+1}+r_1(ar_1+c'_2)} + \frac{r_1(ar_1+c'_2)+r_2(r_2+c'_2)}{r_2\frac{r_2+c'_2}{\sqrt{a}+1}+r_1(ar_1+c'_2)} \label{atomic-affine-overlap-3-6}\\
\phantom{a} &\leq& \frac{r_1r_2}{\frac{r^2_2}{\sqrt{a}+1}+ar^2_1} + \sqrt{a}+1 \label{atomic-affine-overlap-3-7}\\
\phantom{a} &\leq& \frac{\sqrt{(\sqrt{a}+1)}}{2\sqrt{a}} + \sqrt{a} + 1 \leq \sqrt{a} + 2.
\end{eqnarray}
By dividing by $\ell_2$ we obtain (\ref{atomic-affine-overlap-3-2}), and we further have (\ref{atomic-affine-overlap-3-3}) due to the definition of atomic unsplittable. Due to the variational inequality, we have $\Phi_1(f_1,0) \leq \Phi_2(f_1,f_2)$ and $\Phi_2(0,f_2) \leq \ell\Phi_1(f_1,f_2)$ where $\ell = \ell_1/\ell_2$ and $\ell_1$ and $\ell_2$ represent the number of edges in path 1 and path 2 respectively. Inequality (\ref{atomic-affine-overlap-3-4}) is derived from the previous inequality by using the variational inequality.
Further, by dividing by $r_2 + c'_1$ and substituting $r_1$ with $r_2/\sqrt{a}$, we have (\ref{atomic-affine-overlap-3-5}).
In other words, $\frac{r_2+c'_1}{ar_1+r_2+c'_1} \geq \frac{r_2+c'_1}{\sqrt{a}r_2+r_2+c'_1} \geq \frac{1}{\sqrt{a} + 1}$.
Since $\frac{r_1(ar_1+c'_2)+r_2(r_2+c'_2)}{r_2\frac{r_2+c'_2}{\sqrt{a}+1}+r_1(ar_1+c'_2)} \leq (\sqrt{a} + 1)\frac{r_1(ar_1+c'_2)+r_2(r_2+c'_2)}{r_2(r_2+c'_2)+r_1(ar_1+c'_2)} = \sqrt{a} + 1$ due to $a \geq 1$, we have (\ref{atomic-affine-overlap-3-7}).
In (\ref{atomic-affine-overlap-3-7}), $r_1/r_2 = 1/\sqrt{a(\sqrt{a}+1)}$ and thus we have the second last inequality.
Since $a \geq 1$, $\frac{\sqrt{(\sqrt{a}+1)}}{2\sqrt{a}} \leq 1$ and our case analysis and proof is complete.

\section{Conclusion}\label{secCon}
In this paper we have studied the price of anarchy for $k$-commodity
non-atomic and atomic network flows with heterogeneous and decomposable delay functions.
We have also obtained improved  bounds on the price of anarchy for $2$-commodity
%
%
%
atomic flows with heterogeneous uniform delay functions.

Further studies could include convex functions that model the
behavior of network queue delays more accurately, in order to shed more light on
the impact of differentiated services.

\section*{Acknowledgement}

We thank the referees for their valuable comments. This research was supported in part via NSF grant CCF 1451574.
We thank the anonymous referees for insightful and detailed comments and Mohit Hota for help with figures. 
\medskip

\ignore{
\section{Appendix}
\subsection{Proof of Theorem 3.5}

Using the variational inequality applicable in the case of atomic flow routing, we get, for the $i$th commodity:
\begin{equation}
\label{eq:varforlemma-k-atomic}
\sum_{e \in P_i} \Phi_e(f_e) \leq \sum_{e \in P'_i} \Phi_e(\hat{f}_e)
\end{equation}
where $f$ is the Nash equilibrium flow and $\hat{f}$ is the flow function with the $i$th commodity using $P'_i$ instead of $P_i$ to route $r_i$ units of flow.
The introduction of the $i$th commodity to edges in the path $P'_i$ adds a number of terms to the delay. 
Consider the following delay terms on edge $e$ when $\hat{f}_i(e)$ is nonzero. 
$$\sum_{ 1 \leq \ell \leq L^i_e}g_{\ell}(e)\hat{f}^{\theta_1^\ell}_{1}\hat{f}^{\theta_2^\ell}_{2}\cdots \hat{f}^{\theta_k^\ell}_{k} $$
where $\hat{f}_j(e)=f_j(e), j \neq i$ is nonzero and $\hat{f}_i(e)=r_i$.
Consider one such term $T= g_{\ell}(e)f^{\theta_1^\ell}_{1}f^{\theta_2^\ell}_{2}\cdots f^{\theta_k^\ell}_{k}$.
If $i = \arg \max_{j: e \in P_j } r_j$ then
$T \leq g_{l}(e) r_i^{\theta}$ (note that $r_i \in \mathbb{N}$). 
Alternatively,
$T \leq g_{l}(e) f_m^{\theta}$ where $m = \arg \max_{j: e \in P_j } r_j$. In both cases,  $T \leq a_{max} \Phi_e(f) $. A similar analysis holds for the terms introduced by the $i$th commodity in the delay of other commodities. Since there are $N_{\Phi}$ terms in the delay function, we get:
\begin{equation}
\label{eq:lemma-k-com-PoA}
\Phi_e(\hat{f}) \leq a_{max}N_{\Phi} \Phi_e(f) +a_{max}N_{\Phi} r_i^{\theta}
\end{equation}
Multiplying both sides of \Cref{eq:varforlemma-k-atomic} by $r_i$, applying
\Cref{eq:varforlemma-k-atomic} and summing over all commodities and edges we get:
\begin{equation*}
\begin{aligned}
C_{NE}(f) \phantom{a}&\leq  a_{max}N_{\Phi} \left [
\sum_{e} ||\hat{f}(e)||_1 \Phi_e(f) +  \sum_e \sum_i \hat{f}_i(e)^{\theta +1} \right ] \\        
\phantom{a} &\leq a_{max}N_{\Phi}\left[
(\sum_{e} (||\hat{f}(e)||_1)^{\theta+1})^{\frac{1}{\theta+1}} ( \sum_e \Phi_e(f)^{\frac{\theta+1}{\theta}})^{\frac{\theta}{\theta+1}}  +  \sum_e \sum_i \hat{f}_i(e)^{\theta +1} \right]\\
\phantom{a} &\leq a_{max}N_{\Phi}\left[ 
C_{SO}(\hat{f})^{\frac{1}{\theta+1}} \cdot
( a_{max}^{\frac{1}{\theta}} \sum_e ||f(e)||_1 \Phi_e(f)  )^{\frac{\theta}{\theta+1}}
+ C_{SO}(\hat{f}) \right]
\end{aligned}
\end{equation*}
since
$$ \sum_e \Phi_e(f)^{\frac{\theta+1}{\theta}} \leq  \sum_e \left(  a_{max}(|| f(e) ||_1)^{\theta} \Phi_e(f)^{\theta} \right)^{1/\theta}  $$
Thus
\begin{equation*}
C_{NE}(f)\leq a_{max}N_{\Phi}\left[ 
C_{SO}(\hat{f})^{\frac{1}{\theta+1}} \cdot
( a_{max}^{\frac{1}{\theta}}C_{NE})^{\frac{\theta}{\theta+1}} \right] + C_{SO}(\hat{f})
\end{equation*}
or equivalently
\begin{equation*}
\begin{aligned}
 \frac{C_{NE}(f)}{C_{SO}(\hat{f})} \phantom{a} &\leq 
a_{max}^{\frac{\theta+2}{\theta+1}}N_{\Phi}
\left( \frac{C_{NE}(f)}{C_{SO}(\hat{f})}\right)^{\frac{\theta}{\theta+1}} +1\\
\phantom{a} &\leq (a_{max}^{\frac{\theta+2}{\theta+1}}N_{\Phi}+1)
\left( \frac{C_{NE}(f)}{C_{SO}(\hat{f})}\right)^{\frac{\theta}{\theta+1}} 
\end{aligned}
\end{equation*}
The bound on the price of anarchy is thus 
$O(a_{max}^{\theta+2}N_{\Phi}^{\theta+1})$.

\subsection{Improved PoA for Uniform Two-Commodity Networks, Affine Delays}

\subsubsection*{Almost tight upper bound for Uniform Functions}
The proof of \Cref{thm-2-atomic-uniform-affine} is shown below.  Recall that the theorem claims that if $(G,R,\Phi)$ is an atomic 2-commodity instance with uniform affine delay functions, then the price of anarchy of $(G,R,\Phi)$
is at most $\sqrt{a_{max}} + 2$.

\paragraph{Proof Outline:} The proof proceeds by comparing both the Nash equilibrium solution and the optimum solution. Removing the common flows between the two, a set of cycles are obtained such that by reversing flow on these cycles, one flow can be transformed to the other. The price of anarchy is obtained using the structure of these cycles. We show that we can obtain the PoA by considering one cycle only and analyzing the worst case behavior depending on how the two, the Nash equilibrium and the optimum flows, utilize the edges of the cycle. The analysis is by cases; we obtain  sixteen possibilities but three cases dominate all other cases and thus we provide an upper bound of the PoA for these three cases. The analysis uses variational inequalities to determine the relationship between the Nash equilibrium and optimum flow.

\paragraph{Reducing to Cycles:}
Consider $F_{\bigotimes} = (F - \hat{F}) \bigcup (\hat{F} - F)$ where $F$ and $\hat{F}$ are a Nash equilibrium flow and a social optimum flow, respectively. Further we let $F_1$ and $F_2$ denote the sub-flows of $F$ at a Nash equilibrium flow for commodity 1 and commodity 2, respectively. Similarly, we let $\hat{F}_1$ and $\hat{F}_2$ denote the social optimum flow for commodity 1 and commodity 2, respectively.
Note that $\bigotimes$ cancels common flows in $F$ and $\hat{F}$.
In the figures below that illustrate the various cases, we let edges in $F_1 \bigotimes \hat{F}_1$ be thicker and edges in $F_2 \bigotimes \hat{F}_2$ be thinner lines.

By reversing the flow in $\hat{F}$ and partitioning $F_1 \bigotimes \hat{F}_1$ and $F_2 \bigotimes \hat{F}_2$, respectively, we obtain a set of cycles for commodity 1 and commodity 2.
We let the cycles represented by thicker lines be denoted by $CC^1$ for commodity 1 and cycles represented by the
thinner lines by $CC^2$ for commodity 2.
Note that an edge could be utilized by commodity 1 as well as commodity 2, and it can occur in cycles of both commodities.

\paragraph{Analyzing PoA bounds for Cycles:}
We need a lemma which show that there exists a good partition of the cycles so that we can obtain an almost tight upper bound.
Let $(a_i,b_i)$ be a collection of pairs where $1 \leq i \leq d: a_i,b_i \in \mathbb{R}_+$.
Let $\Pi$ be a set of partitions of $I = \{1,\ldots,d\}$ such that
\begin{itemize}
\item If $\pi=(\pi^1,\ldots,\pi^{\ell}) \in \Pi$ then $|\pi^j| \leq 2$,

Note the following are true

\item If $\pi=(\pi^1,\ldots,\pi^{\ell}) \in \Pi$ then $\bigcup_j\pi^j = I$,
\item If $\pi^{j_1}, \pi^{j_2} \in \pi$ and $\pi^{j_1} \neq \pi^{j_2}$ then $\pi^{j_1} \bigcap \pi^{j_2} = \emptyset$.
\end{itemize}
\begin{lemma}\label{2cycle-enough}
Let $\hat{\pi} \in \Pi$ be an optimum partition of $I$ defined as
\[
\hat{\pi}= \arg \min_{\pi \in \Pi } \max_{\pi^j \in \pi}\frac{\sum_{\ell \in \pi}a_{\ell}}{\sum_{\ell \in \pi}b_{\ell}}
\]
where $\pi = (\pi^1,\ldots,\pi^q)$ and $\pi^j$ corresponds to the $j$-th part and is a subset (of at most 2 elements) of $I$ of size $t_j$, i.e., $\pi^j = \{x_1, \ldots, x_{t_j}\}$. 
Then
\[ \frac{a_1+a_2+\ldots+a_d}{b_1+b_2+\ldots+b_d} \leq \max_{\hat{\pi}^j \in \hat{\pi}}\frac{\sum_{\ell \in \hat{\pi}^j}a_{\ell}}{\sum_{\ell \in \hat{\pi}^j}b_{\ell}}. \]
\end{lemma}

We next show that PoA can be estimated by considering at most one  cycle per commodity instead of all cycles.
Throughout this subsection, let $a = a_1/a_2$ where $a_1$ and $a_2$ represent coefficients of commodity 1 and commodity 2 in the affine uniform delay function.

Let $\mathcal{CC}$ be the set of all cycles and $\Pi$ be a set of partitions of $\mathcal{CC}$.
Consider a partition $\bar{\pi}$ of $\mathcal{CC}$ where each part is of size either 1 or of size 2 (in which case it contains 1 cycle from each commodity, $C^1 \in CC^1$ and $C^2 \in CC^2$).
Let the pair $(f_1, f_2)$ and $(\hat{f}_1, \hat{f}_2)$ be the Nash equilibrium flow of commodity 1 and commodity 2 and the social optimum flow of commodity 1 and commodity 2, respectively.
Let $\hat{\pi} = \{\hat{\pi}^1, \ldots, \hat{\pi}^q\}$ be a partition which guarantees the optimum amongst partitions in $\Pi$. 
Then,
\begin{eqnarray*}
\phantom{a} \frac{C_{NE}(f)}{C_{SO}(\hat{f})} &=& \frac{\sum_{e \in E}(f_1(e)+f_2(e))\Phi(f_1(e),f_2(e))}{\sum_{e \in E}(\hat{f}_1(e)+\hat{f}_2(e))\Phi(\hat{f}_1(e),\hat{f}_2(e))} \label{one-cycle-consider-1} \\
\phantom{a}  &\leq& \max_{\hat{\pi}^j \in \hat{\pi}}\left( \frac{\textit{cost of NE flow in cycles or combinations of cycles in } \hat{\pi}^j}{\textit{cost of SO flow in cycles or combinations of cycles in } \hat{\pi}^j} \right)\\
\phantom{a}  &\leq& \max_{\bar{\pi}^j \in \bar{\pi}}\left( \frac{\textit{cost of NE flow in cycles or combinations of cycles in } \bar{\pi}^j}{\textit{cost of SO flow in cycles or combinations of cycles in } \bar{\pi}^j} \right)\\
\phantom{a}  &\leq& \max \left \{ \right. \\
& & \left. \max_{\bar{\pi}^j = (C^1,C^2)} \frac{f_1\Phi_{C^1}(f_1,f_2)+f_2\Phi_{C^2}(f_1,f_2)}{\hat{f}_1\Phi_{C^1}(\hat{f}_1,\hat{f}_2)+\hat{f}_2\Phi_{C^2}(\hat{f}_1,\hat{f}_2)}, \right. \\
& & \left. \max_{\bar{\pi}^j = (C^1)}\frac{f_1\Phi_{C^1}(f_1,f_2)}{\hat{f}_1\Phi_{C^1}(\hat{f}_1,\hat{f}_2)}, \right. \\
& & \left. \max_{\bar{\pi}^j = (C^2)}\frac{f_2\Phi_{C^2}(f_1,f_2)}{\hat{f}_2\Phi_{C^2}(\hat{f}_1,\hat{f}_2)}  \right. \\
& & \left. \right \}
\end{eqnarray*}
where $\Phi_{C^1}()$ and $\Phi_{C^2}()$ represent cost incurred in $C^1$ and $C^2$, respectively.
The first inequality holds due to \Cref{2cycle-enough}.
In the last inequality, we have $|\bar{\pi}^j| = 2$ for the first factor and $|\bar{\pi}^j| = 1$ for the second and third factors.
In \Cref{2cycle-enough}, we showed that the maximum cost element in $\hat{\pi}$ provides an upper bound for $C_{NE}(f)/C_{SO}(\hat{f})$.
So, the partition $\bar{\pi}$ considered here provides an upper bound for $C_{NE}(f)/C_{SO}(\hat{f})$.
Later we will show that this partition is good enough to obtain an almost tight bound for $C_{NE}(f)/C_{SO}(\hat{f})$.

To analyze the price of anarchy, we consider the structure of the cycles.

\paragraph{Cycle Structure:}
In this subsection we consider the structure of the cycles obtained in $F \bigotimes \hat{F}$.
One type of  cycle obtained is illustrated in \Cref{fig:intersection-graph}.
Here $u$ and $v$ are starting and end nodes such that reversing of $\hat{f}_1$ leads to a cycle.
Though there are cycles for commodity 1 and commodity 2, we consider cycles for commodity 1.
We consider flow from commodity 1 as a primary flow and flow from commodity 2 as a secondary flow, respectively.
The case when flow from commodity 2 is a primary flow is symmetric to this case.
Note that though we consider cycles from commodity 1, it is possible that those might be intersected with cycles from commodity 2.
We show an example of a cycle from commodity 1 intersecting with a flow of commodity 2 in \Cref{fig:intersection-graph}.
\begin{figure}[ht]
  \begin{center}
  \scalebox{0.7}
  {\includegraphics[width=35mm ,height=30mm]{two-cases.eps}}
  \caption{An example of a cycle from commodity 1}
  \label{fig:intersection-graph}
  \end{center}
\end{figure}
In the figure, 
the thinner and solid line represent $f_2$; while the thinner and dashed lines represent $\hat{f}_2$.
As shown in \Cref{fig:intersection-graph}, $f_2$ and $\hat{f}_2$  intersect with a cycle of commodity 1 and also $f_2$ and $\hat{f}_2$ can  overlap each other.

Depending on how the NE flow and social optimum flow use the top and bottom path in the cycle, we have  sixteen possibilities.
We show that three cases dominate all other cases and thus we provide an upper bound of the PoA for these three cases.

{\em Notation:}
As shown in \Cref{fig:intersection-graph}, we have two paths, termed as top path and bottom path.
Throughout this section, let $\Phi_{P_i}(f_1,f_2)$ be cost of path $P_i$ when flow on path $P$ of commodity 1 is $f_1$ and that on path $P$ of commodity 2 is $f_2$.
We define  $\Phi'_{P_i}(f_1,f_2)$ as the cost when restricted to  a subset of edges $P' \subseteq P_i$, which is utilized by both $f_1$ and $f_2$.
Similarly for $P'' \subseteq P$ let $\Phi''_{P_i}(f_1,0)$ (or $\Phi''_{P_i}(0,f_2)$) be defined as the cost function over a set of edges $P''$ which is utilized by $f_1$ (or $f_2$), but not both.
Note that $P' \bigcap P'' = \emptyset$.
For simplicity, we define $\Phi_{P_i}(f_1,f_2) = \Phi_i(f_1,f_2)$, $\Phi'_{P_i}(f_1,f_2) = \Phi'_i(f_1,f_2)$, $\Phi''_{P_i}(f_1,0) = \Phi''_i(f_1,0)$ and $\Phi''_{P_i}(0,f_2) = \Phi''_i(0,f_2)$.

Also, throughout this subsection we let $P_1, P_2$ correspond to edges containing flows of both commodities in the top path and the bottom path from $u_1$ to $v_1$, respectively. And we let $ P_3$ and $P_4$ correspond to the edges common to both flows on the top path and the  bottom path from $u_2$ to $v_2$, respectively as shown in \Cref{fig:intersection-graph1}.
Lastly, we define $\ell_1 = |\{e \in P_1 \bigcap E\}|, \ell_2 = |\{e \in P_2 \bigcap E\}|, \ell_3 = |\{e \in P_3 \bigcap E\}|$ and $\ell_4 = |\{e \in P_4 \bigcap E\}|$.
\begin{figure}[ht]
  \begin{center}
  \scalebox{0.7}
  {\includegraphics[width=75mm ,height=32mm]{intersection-graph.eps}}
  \caption{Two cycles - one cycle per each commodity.}
  \label{fig:intersection-graph1}
  \end{center}
\end{figure}
Remember that $f_1 = \hat{f}_1 = r_1$ and $f_2 = \hat{f}_2 = r_2$ due to the definition of atomic network model in this section.

\paragraph{Three cases are enough to be considered:}
Note the $f$ and $\hat{f}$ correspond to a NE flow and a social optimum flow.
\begin{figure}[ht]
  \begin{center}
  \scalebox{0.8}
  {\includegraphics[width=190mm ,height=40mm]{all-cases.eps}}
  \caption{All Cases}
  \label{fig:all-cases}
  \end{center}
\end{figure}
We consider all possible cases when commodity 1 flow is considered as a primary flow as shown in \Cref{fig:all-cases}. (The notation used is $S$ vs $S'$, where $S$ is the set of flows on the top path and $S'$ the set of flows on the bottom path): (1) $\{f_1,f_2,\hat{f}_2 \}$ vs.
 $\{ \hat{f}_1,\hat{f}_2 \}$, (2) $\{f_1,f_2,\hat{f}_2 \}$ vs. $\{ \hat{f}_1,f_2 \}$,
 (3) $\{f_1,f_2\}$ vs. $\{\hat{f}_1,f_2,\hat{f}_2\}$, (4) $\{f_1,\hat{f}_2\}$ vs. $\{ \hat{f}_1,f_2,\hat{f}_2\}$,
 (5) $\{f_1,f_2\}$ vs. $\{\hat{f}_1,\hat{f}_2\}$, (6) $\{f_1,\hat{f}_2\}$ vs. $\{\hat{f}_1,f_2\}$, (7) $\{f_1,f_2\}$ vs. $\{\hat{f}_1\}$,
 (8) \{$f_1,\hat{f}_2$\} vs. \{$\hat{f}_1$\}, (9) \{$f_1,f_2,\hat{f}_2$\} vs. \{$\hat{f}_1,f_2,\hat{f}_2$\}, (10) \{$f_1$\} vs. \{$\hat{f}_1,f_2$\}, (11) \{$f_1$\} vs. \{$\hat{f}_1\}$, (12) \{$f_1$\} vs. \{$\hat{f}_1,\hat{f}_2$\}, (13) \{$f_1,f_2,\hat{f}_2$\} vs. \{$\hat{f}_1$\}, (14) \{$f_1$\} vs. \{$\hat{f}_1,f_2,\hat{f}_2$\}, (15) \{$f_1,\hat{f}_2$\} vs. \{$\hat{f}_1,\hat{f}_2\}$ and (16) \{ $f_1,f_2$\} vs. \{$\hat{f}_1,f_2$\}.
By reversing $\hat{f}_1$, each of the structures considered becomes a directed cycle.
As mentioned before $f_2$ or $\hat{f}_2$ (or both) use edges in  these cycles.
\begin{table}[ht]
\begin{tabular}{|l|c|c|}
\hline
\mbox{} & ratio of $C_{NE}(f)$ to $C_{SO}(\hat{f})$ & variational inequality \\
\hline\hline
case 1,5 & $\frac{f_1\Phi_1(f_1,f_2)}{\hat{f}_1\Phi'_2(\hat{f}_1,\hat{f}_2)+\hat{f}_1\Phi''_2(\hat{f}_1,0)}$ & $\Phi_1(f_1,f_2) \leq \Phi_2(f_1,0),\Phi_1(f_1,f_2) \leq \Phi_2(0,f_2)$ \\ \hline
case 2,16 & $\frac{f_1\Phi_1(f_1,f_2)}{\hat{f}_1\Phi_2(\hat{f}_1,0)}$ & $\Phi_1(f_1,f_2) \leq \Phi'_2(f_1,f_2)+\Phi''_2(f_1,0)$ \\ \hline
case 3,9 & $\frac{f_1\Phi_1(f_1,f_2)}{\hat{f}_1\Phi'_2(\hat{f}_1,\hat{f}_2)+\hat{f}_1\Phi''_2(\hat{f}_1,0)}$ & $\Phi_1(f_1,f_2) \leq \Phi'_2(f_1,f_2)+\Phi''_2(f_1,0)$ \\ \hline
case 4,14 & $\frac{f_1\Phi_1(f_1,0)}{\hat{f}_1\Phi'_2(\hat{f}_1,\hat{f}_2)+\hat{f}_1\Phi''_2(\hat{f}_1,0)}$ & $\Phi_1(f_1,0) \leq \Phi'_2(f_1,f_2)+\Phi''_2(f_1,0)$ \\ \hline
case 6,10,11 & $\frac{f_1\Phi_1(f_1,0)}{\hat{f}_1\Phi_2(\hat{f}_1,0)}$ & $\Phi_1(f_1,0) \leq \Phi'_2(f_1,f_2)+\Phi''_2(f_1,0)$ \\ \hline
case 7,13 & $\frac{f_1\Phi_1(f_1,f_2)}{\hat{f}_1\Phi_2(\hat{f}_1,0)}$ & $\Phi_1(f_1,f_2) \leq \Phi_2(f_1,0), \Phi_1(f_1,f_2) \leq \Phi_2(0,f_2)$ \\ \hline
case 8 & $\frac{f_1\Phi_1(f_1,0)}{\hat{f}_1\Phi_2(\hat{f}_1,0)}$ & $\Phi_1(f_1,0) \leq \Phi_2(f_1,0)$ \\ \hline
case 12,15 & $\frac{f_1\Phi_1(f_1,0)}{\hat{f}_1\Phi'_2(\hat{f}_1,\hat{f}_2)+\hat{f}_1\Phi''_2(\hat{f}_1,0)}$ & $\Phi_1(f_1,0) \leq \Phi_2(f_1,0)$ \\ \hline
\end{tabular}
\caption{$C_{NE}(f)/C_{SO}(\hat{f})$ for all cases}
\label{table:16cases}
\end{table}
\Cref{table:16cases}  lists the cases and a corresponding (variational) inequality (we use the term, {\em variational inequality} here imprecisely simply to emphasize properties of the solution ) obtained by the flow being Nash equilibrium.
Let us consider case 16 to show an example of how to construct formulas for $C_{NE}(f)/C_{SO}(\hat{f})$ using the corresponding variational inequality.
There are two NE flows $f_1$ and $f_2$ over the top path, and the cost of NE for commodity 1 is $f_1\Phi_1(f_1,f_2)$; while, the cost of SO for commodity 1 is incurred only by $\hat{f}_1$ which results in $\hat{f}_1\Phi_2(\hat{f}_1,0)$.
Its corresponding variational inequality is $\Phi_1(f_1,f_2) \leq \Phi'_2(f_1,f_2)+\Phi''_2(f_1,0)$ since a shift of flow $f_1$ to the bottom path (note that $f_2$ already exists) does not decrease the cost incurred by $f_1$ and $f_2$ over the top path.

\paragraph{(i) cases 1, 5, 7 and 13:}\label{subsubsection-15713}
Note that both case 1 and case 5 have the same ratio  of $C_{NE}(f)$ to $C_{SO}(\hat{f})$ and variational inequality, and case 7 and case 13 use the same ratio of $C_{NE}(f)$ to $C_{SO}(\hat{f})$ and variational inequality.
Then price of anarchy as computed in  case 1 and case 5 is upper bounded by that computed in case 7(case 13) since the divisor in case 7(case 13) is smaller than divisors in other two cases since $f_1 = \hat{f}_1 = r_1$ and $f_2 = \hat{f}_2 = r_2$.

In case 7 (or 13), note that there is no social optimum flow from commodity 2 in the ratio of $C_{NE}(f)$ to $C_{SO}(\hat{f})$, and the NE cost incurred by commodity 2 can be considered from cycles with flow of commodity 2. 
However, $f_2$ has impact on latency of $f_1$, and we consider $f_1\Phi_1(f_1,f_2)$ as cost incurred by $f_1$ and $f_2$.
\[ \phantom{a} \frac{C_{NE}(f)}{C_{SO}(\hat{f})} \leq  \frac{f_1\Phi_1(f_1,f_2)}{\hat{f}_1\Phi_2(\hat{f}_1,0)} \leq \frac{r_1\Phi_1(r_1,r_2)}{r_1\Phi_2(r_1,0)} \leq \frac{r_1\Phi_1(r_1,r_2)}{r_1\Phi_1(r_1,r_2)} = 1.\]
The first inequality holds  since $f_1 = \hat{f}_1 = r_1$ and $f_2 = \hat{f}_2 = r_2$ due to the definition of atomic unsplittable flow.
The last inequality holds since $\Phi_1(f_1,f_2) \leq \Phi_2(f_1,0)$ due to its corresponding variational inequality.

\paragraph{(ii) cases 2, 3, 9 and 16:}\label{subsubsection-23916}
Since the divisor for case 2 and case 16 is smaller than divisors in case 3 and case 9, the price of anarchy in case 3 and case 9 are dominated by the other two cases.
For both case 2 or case 16, we have the following inequality:
\begin{eqnarray}
\phantom{a} \frac{C_{NE}(f)}{C_{SO}(\hat{f})} &\leq& \frac{f_1\Phi_1(f_1,f_2)}{\hat{f}_1\Phi_2(\hat{f}_1,0)} \label{case-239-1}\\
\phantom{a} &\leq& \frac{f_1\Phi'_2(f_1,f_2)+f_1\Phi''_2(f_1,0)}{\hat{f}_1\Phi_2(\hat{f}_1,0)} \label{case-239-2}\\
\phantom{a} &=& \frac{r_1\Phi'_2(r_1,r_2)+r_1\Phi''_2(r_1,0)}{r_1\Phi_2(r_1,0)} \label{case-239-3}\\
\phantom{a} &\leq& \frac{\Phi_2(r_1,r_2)}{\Phi_2(r_1,0)} \label{case-239-4}\\
\phantom{a} &=& \frac{a_2\ell_2(ar_1+r_2+c'_2)}{a_2\ell_2(ar_1+c'_2)} \leq 1 + \frac{r_2}{ar_1} \label{case-239-5}
\end{eqnarray}
In (\ref{case-239-2}),  $\Phi_1(f_1,f_2) \leq \Phi'_2(f_1,f_2)+\Phi''_2(f_1,0)$ due to the corresponding variational inequality.
We have $c'_2 = c_2/a_2$ in (\ref{case-239-5}).
It seems that the price of anarchy for these cases are not bounded since $f_2$ appears in the dividend while the divisor does not include $\hat{f}_2$.
However, if $r_2$ is big enough (larger than $a\sqrt{a}r_1$) then $r_2$ play an important role.
Thus, we consider splitting into two cases : $r_2 \leq a\sqrt{a}r_1$ and $r_2 > a\sqrt{a}r_1$.
In the case that $r_2 \leq a\sqrt{a}r_1$, the price of anarchy is bounded by $\sqrt{a}+1$.

We next discuss the price of anarchy when $r_2 > a\sqrt{a}r_1$.
Note that we assume that there is an intersection of a commodity 2 cycle with this cycle when $r_2 > a\sqrt{a}r_1$.
If there is no intersection with a  commodity 2 cycle, then $f_1$ can be shifted from the top path to the bottom path leading to less delay.
If the delay on the top path and the delay on the bottom path are the same, then there is no cycle.
Suppose that the shift from the top path to the bottom path of $f_1$ does not decrease delay.
It implies that the cost of the bottom path is greater than the cost of the top path and potentially violates utilizing the bottom path by $\hat{f}_1$ is less cost than using the top path.

Let us consider the cycle of commodity 2.
$f_2$ goes along top path overlapped with $f_1$ or $\hat{f}_1$ on the bottom path.
The flow $f_2$ and the other  flow $\hat{f}_2$ forms the cycle of commodity 2 which is intersecting with a cycle shown in case 2.
For this cycle, we derive the price of anarchy via the  corresponding variational inequality.
\begin{eqnarray}
\phantom{a} \frac{C_{NE}(f)}{C_{SO}(\hat{f})} &\leq& \frac{f_2\Phi'_1(f_1,f_2)+f_2\Phi_2(0,f_2)+f_2\Phi'_3(f_1,f_2)+f_2\Phi''_3(0,f_2)}{\hat{f}_1\Phi_2(\hat{f}_1,0)+\hat{f}_2\Phi_4(0,\hat{f}_2)} + \nonumber \\
\phantom{a} & & \frac{f_1\Phi'_1(f_1,f_2)+f_1\Phi''_1(f_1,0)}{\hat{f}_1\Phi_2(\hat{f}_1,0)+\hat{f}_2\Phi_4(0,\hat{f}_2)}\\
\phantom{a} &\leq& \frac{f_1\Phi'_2(f_1,f_2)+f_1\Phi''_2(f_1,0)+f_2\Phi_4(f_1,f_2)}{\hat{f}_1\Phi_2(\hat{f}_1,0)+\hat{f}_2\Phi_4(0,\hat{f}_2)} \label{case-239-7}\\
\phantom{a} &\leq& \frac{f_1\Phi_2(f_1,f_2)+f_2\Phi_4(f_1,f_2)}{\hat{f}_1\Phi_2(\hat{f}_1,0)+\hat{f}_2\Phi_4(0,\hat{f}_2)} \label{case-239-8}\\
\phantom{a} &=& \frac{r_1\Phi_2(r_1,r_2)+r_2\Phi_4(r_1,r_2)}{r_1\Phi_2(r_1,0)+r_2\Phi_4(0,r_2)} \label{case-239-9}\\
\phantom{a} &=& \frac{r_1\ell_2(ar_1+r_2+c'_2)+r_2\ell_4(ar_1+r_2+c'_4)}{r_1\ell_2(ar_1+c'_2)+r_2\ell_4(r_2+c'_4)} \label{case-239-10}\\
\phantom{a} &=& \frac{r_1\ell(ar_1+r_2+c'_2)+r_2(ar_1+r_2+c'_4)}{r_1\ell(ar_1+c'_2)+r_2(r_2+c'_4)} \label{case-239-11}\\
\phantom{a} &=& 1 + \frac{r_1r_2\ell+ar_1r_2}{r_1\ell(ar_1+c'_2)+r_2(r_2+c'_4)}. \label{case-239-12}
\end{eqnarray}
We have $\Phi'_1(f_1,f_2)+\Phi_2(0,f_2)+\Phi'_3(f_1,f_2)+\Phi''_3(0,f_2) \leq \Phi_4(f_1,f_2)$ due to the variational inequality for commodity 2 and $\Phi'_1(f_1,f_2)+\Phi''_1(f_1,0) \leq \Phi'_2(f_1,f_2)+\Phi''_2(f_1,0)$ due to  the variational inequality for commodity 1.
By these observations, we can obtain (\ref{case-239-7}) from the previous equation.
Inequality (\ref{case-239-8}) can be obtained by summing over $\Phi'_2(f_1,f_2)$ and $\Phi''_2(f_1,0)$ by replacing $\Phi''_2(f_1,0)$ as $\Phi''_2(f_1,f_2)$.
Let $\ell = \ell_2/\ell_4$ where $\ell_2$ and $\ell_4$ represent the number of edges on path 2 and path 4, respectively.
Also, let $c'_2 = c_2/a_2, c'_4 = c_4/a_2$ throughout this chapter.
From the above variational inequality $\Phi'_1(f_1,f_2)+\Phi_2(0,f_2)+\Phi'_3(f_1,f_2)+\Phi''_3(0,f_2) \leq \Phi_4(f_1,f_2)$ we have $\Phi_2(0,f_2) \leq \Phi_4(f_1,f_2)$ and further
\[ \ell = \ell_2/\ell_4 \leq \frac{ar_1+r_2+c'_4}{r_2+c'_2} \leq \frac{ar_1+r_2+c'_4}{r_2}.\]
When $\frac{ar_1+r_2+c'_4}{r_2+c'_2} \leq 1$, let $\ell$ be 0 in the divisor and be 1 for the dividend. Then, equation (\ref{case-239-12}) can be written as
\begin{eqnarray}
\phantom{a} \frac{C_{NE}(f)}{C_{SO}(\hat{f})} \leq 1 + \frac{r_1r_2+ar_1r_2}{r_2(r_2+c'_4)} \leq 1 + \frac{r_1r_2+ar_1r_2}{r^2_2} \leq 2 + \frac{1}{\sqrt{a}}.
\end{eqnarray}
The last inequality holds because $r_2 \geq a\sqrt{a}r_1$, and $a \geq 1$.
When $\frac{ar_1+r_2+c'_4}{r_2+c'_2} \geq 1$, note that $\frac{ar_1+r_2+c'_4}{r_2} \geq 1$ is true.
Equation (\ref{case-239-12}) can be written as
\begin{eqnarray}
\phantom{a} \frac{C_{NE}(f)}{C_{SO}(\hat{f})} &\leq& 1 + \frac{r_1r_2(ar_1+r_2+c'_4)/r_2+ar_1r_2}{r_1\ell(ar_1+c'_2)+r_2(r_2+c'_4)} \label{case-239-13} \\
\phantom{a} &\leq& 1 + \frac{r_1(r_2/\sqrt{a}+r_2+c'_4)+ar_1r_2}{r_1\ell(ar_1+c'_2)+a\sqrt{a}r_1(r_2+c'_4)} \label{case-239-14} \\
\phantom{a} &\leq& 1 + \frac{(a+1+1/\sqrt{a})r_2+c'_4}{\ell(ar_1+c'_2)+a\sqrt{a}(r_2+c'_4)} \label{case-239-15} \\
\phantom{a} &\leq& 1 + \frac{(a+1+1/\sqrt{a})r_2+c'_4}{a\sqrt{a}(r_2+c'_4)} \leq 1 + \frac{1}{\sqrt{a}} + \frac{2}{a\sqrt{a}}\label{case-239-16}
\end{eqnarray}
Observe that $\ell(ar_1+c'_2)$ can be ignored to obtain the upper bound in (\ref{case-239-16}).
In equations, to maximize the price of anarchy we substitute $r_1$ with $r_2/a\sqrt{a}$ in the dividend; while in the divisor we substitute $r_2$ with $a\sqrt{a}r_1$.

\paragraph{(iii) cases 4, 6, 8, 10, 11, 12, 14 and 15:}
Note that the divisors in case 6, 10 and 11 are smaller than the divisors in other cases, and the dividend in case 6, 10 and 11 are bigger than others due to the variational inequality.

We consider one of the cases 6, 10 and 11.
As shown in previous case, we need to consider commodity 1's cycle and commodity 2's cycle to estimate the price of anarchy.
We provide a proof of a case when $\hat{f}_2$ is utilizing the top path.
Let us consider case 6 which upper bounds other two cases.
we split into three sub-cases: (a) $r_1 \leq r_2 \leq (a)^{1/2}r_1$, (b) $r_1 > r_2$ and (c) $r_2 > (a)^{1/2}r_1$.
In case (a),
\begin{eqnarray}
\phantom{a} \frac{C_{NE}(f)}{C_{SO}(\hat{f})} &\leq& \frac{f_1\Phi_1(f_1,0)+f_2\Phi_2(0,f_2)}{\hat{f}_1\Phi_2(\hat{f}_1,0)+\hat{f}_2\Phi_1(0,\hat{f}_2)} \label{atomic-affine-overlap-2}\\
\phantom{a}  &\leq& \frac{f_1\Phi_2(f_1,f_2)+f_2\Phi_2(0,f_2)}{\hat{f}_1\Phi_2(\hat{f}_1,0)+\hat{f}_2\Phi_1(0,\hat{f}_2)}  \label{atomic-affine-overlap-3}\\
\phantom{a}  &\leq& \frac{r_1\Phi_2(r_1,r_2)+r_2\Phi_2(0,r_2)}{r_1\Phi_2(r_1,0)+r_2\Phi_1(0,r_2)} \label{atomic-affine-overlap-5} \\
\phantom{a}  &\leq& \frac{r_1\Phi_2(r_1,r_2)+r_2\Phi_2(0,r_2)}{r_1\Phi_2(r_1,0)} \label{atomic-affine-overlap-6}\\
\phantom{a}  &\leq& \frac{\Phi_2(r_1,r_2)}{\Phi_2(r_1,0)}+\frac{r_2}{r_1}\frac{\Phi_2(0,r_2)}{\Phi_2(r_1,0)} \label{atomic-affine-overlap-7}\\
\phantom{a}  &\leq& 1 + \frac{\ell_2\sqrt{a}r_1}{\ell_2(ar_1+c'_2)}+\frac{r_2}{r_1}\frac{\ell_2(\sqrt{a}r_1+c'_2)}{\ell_2(ar_1+c'_2)} \label{atomic-affine-overlap-8}\\
\phantom{a}  &\leq& 1/\sqrt{a} + 2
\end{eqnarray}
Due to the variational inequality, we have (\ref{atomic-affine-overlap-3}) by replacing $\Phi_1(f_1,0)$ with $\Phi_2(f_1,f_2)$.
In inequality (\ref{atomic-affine-overlap-8}), we substitute $r_2$ with $\sqrt{a}r_1$.
In case (b), we start at (\ref{atomic-affine-overlap-5}) to avoid duplicate formulas.
\begin{eqnarray}
\phantom{a} \frac{C_{NE}(f)}{C_{SO}(\hat{f})} &\leq& \frac{r_1\ell_1(ar_1+r_2+c'_2)+r_2\ell_2(r_2+c'_2)}{r_1\ell_1(ar_1+c'_2)+r_2\ell_2(r_2+c'_1)} \label{atomic-affine-overlap-2-1}\\
\phantom{a} &\leq& 1 + \frac{\ell_1r_1r_2+\ell_2r_2c'_2}{\ell_1(ar^2_1+r_1c'_2)+\ell_2(r^2_2+r_2c'_1)} \label{atomic-affine-overlap-2-2}\\
\phantom{a} &\leq& 1 + \max(\frac{r_1r_2}{ar^2_1+r^2_2},\frac{r_2c'_2}{r_1c'_2+r_2c'_1}) \label{atomic-affine-overlap-2-3} \\
\phantom{a} &\leq& 1 + \max(\frac{1}{ar_1/r_2+r_2/r_1},1) \label{atomic-affine-overlap-2-4} \leq 1
\end{eqnarray}
Due to \Cref{2cycle-enough}, (\ref{atomic-affine-overlap-2-3}) holds.
In equation (\ref{atomic-affine-overlap-2-4}), the divisor is minimized when $r_1/r_2$ is close to $1$.
Lastly, in case (c),
\begin{eqnarray}
\phantom{a} \frac{C_{NE}(f)}{C_{SO}(\hat{f})} &\leq& \frac{\ell_1f_1(af_1+c'_1)+\ell_2f_2(f_2+c'_2)}{\ell_1\hat{f}_2(\hat{f}_2+c'_1)+\ell_2\hat{f}_1(a\hat{f}_1+c'_2)} \label{atomic-affine-overlap-3-1}\\
\phantom{a} &\leq& \frac{\ell f_1(af_1+c'_1)+f_2(f_2+c'_2)}{\ell \hat{f}_2(\hat{f}_2+c'_1)+\hat{f}_1(a\hat{f}_1+c'_2)} \label{atomic-affine-overlap-3-2}\\
\phantom{a} &\leq& \frac{\ell r_1(ar_1+c'_1)+r_2(r_2+c'_2)}{\ell r_2(r_2+c'_1)+r_1(ar_1+c'_2)} \label{atomic-affine-overlap-3-3}\\
\phantom{a} &\leq& \frac{r_1(ar_1+c'_1)\frac{ar_1+r_2+c'_2}{ar_1+c'_1}+r_2(r_2+c'_2)}{r_2(r_2+c'_1)\frac{r_2+c'_2}{ar_1+r_2+c'_1}+r_1(ar_1+c'_2)} \label{atomic-affine-overlap-3-4}\\
\phantom{a} &\leq& \frac{r_1(ar_1+r_2+c'_2)+r_2(r_2+c'_2)}{r_2\frac{r_2+c'_2}{\sqrt{a}+1}+r_1(ar_1+c'_2)} \label{atomic-affine-overlap-3-5}\\
\phantom{a} &\leq& \frac{(r_1r_2}{r_2\frac{r_2+c'_2}{\sqrt{a}+1}+r_1(ar_1+c'_2)} + \frac{r_1(ar_1+c'_2)+r_2(r_2+c'_2)}{r_2\frac{r_2+c'_2}{\sqrt{a}+1}+r_1(ar_1+c'_2)} \label{atomic-affine-overlap-3-6}\\
\phantom{a} &\leq& \frac{r_1r_2}{\frac{r^2_2}{\sqrt{a}+1}+ar^2_1} + \sqrt{a}+1 \label{atomic-affine-overlap-3-7}\\
\phantom{a} &\leq& \frac{\sqrt{(\sqrt{a}+1)}}{2\sqrt{a}} + \sqrt{a} + 1 \leq \sqrt{a} + 2.
\end{eqnarray}
By dividing by $\ell_2$ we obtain (\ref{atomic-affine-overlap-3-2}), and we further have (\ref{atomic-affine-overlap-3-3}) due to the definition of atomic unsplittable. Due to the variational inequality, we have $\Phi_1(f_1,0) \leq \Phi_2(f_1,f_2)$ and $\Phi_2(0,f_2) \leq \ell\Phi_1(f_1,f_2)$ where $\ell = \ell_1/\ell_2$ and $\ell_1$ and $\ell_2$ represent the number of edges in path 1 and path 2 respectively. Inequality (\ref{atomic-affine-overlap-3-4}) is derived from the previous inequality by using the variational inequality.
Further, by dividing by $r_2 + c'_1$ and substituting $r_1$ with $r_2/\sqrt{a}$, we have (\ref{atomic-affine-overlap-3-5}).
In other words, $\frac{r_2+c'_1}{ar_1+r_2+c'_1} \geq \frac{r_2+c'_1}{\sqrt{a}r_2+r_2+c'_1} \geq \frac{1}{\sqrt{a} + 1}$.
Since $\frac{r_1(ar_1+c'_2)+r_2(r_2+c'_2)}{r_2\frac{r_2+c'_2}{\sqrt{a}+1}+r_1(ar_1+c'_2)} \leq (\sqrt{a} + 1)\frac{r_1(ar_1+c'_2)+r_2(r_2+c'_2)}{r_2(r_2+c'_2)+r_1(ar_1+c'_2)} = \sqrt{a} + 1$ due to $a \geq 1$, we have (\ref{atomic-affine-overlap-3-7}).
In (\ref{atomic-affine-overlap-3-7}), $r_1/r_2 = 1/\sqrt{a(\sqrt{a}+1)}$ and thus we have the second last inequality.
Since $a \geq 1$, $\frac{\sqrt{(\sqrt{a}+1)}}{2\sqrt{a}} \leq 1$ and our case analysis and proof is complete.
}
\end{document}